\tikzstyle{block} = [rectangle,draw,text width=10em,text centered,rounded corners,minimum height=4em]
\tikzstyle{line} = [draw, -latex']
\theoremstyle{plain}
\newtheorem{defn}{\bf{\textsc{Definition}}}[section]
\newtheorem{thm}[defn]{\bf{\textsc{Theorem}}}
\newtheorem{conj}[defn]{\bf{\textsc{Conjecture}}}
\newtheorem{cor}[defn]{\bf{\textsc{Corollary}}}
\newtheorem{remark}[defn]{\bf{\textsc{Remark}}}
\newtheorem{prop}[defn]{\bf{\textsc{Proposition}}}
\def\leq{\leqslant}
\def\geq{\geqslant}
\def\i{{\sf i}}
\def\S{\mathcal{S}}
\def\phid{\phi^\dagger}
\def\ie{{\it i.e.}\/}
\def\eg{{\it e.g.}\/}
\def\be{\begin{equation}}
\def\ee{\end{equation}}
\newcommand{\ket}[1]{\left|#1\right\rangle}
\begin{document}

\title[Stochastic duality and deformed KZ equations]{Integrable stochastic dualities and the deformed Knizhnik--Zamolodchikov equation}

\author{Zeying Chen, Jan~de~Gier and Michael~Wheeler}
\address{Australian Research Council Centre of Excellence for Mathematical and Statistical Frontiers (ACEMS), School of Mathematics and Statistics, The University of Melbourne, VIC 3010, Australia}
\email{zeyingc@student.unimelb.edu.au, jdgier@unimelb.edu.au, wheelerm@unimelb.edu.au}

\maketitle

\begin{abstract}
We present a new method for obtaining duality functions in multi-species asymmetric exclusion processes (mASEP), from solutions of the deformed Knizhnik--Zamolodchikov equations. Our method reproduces, as a special case, duality functions for the self-dual single species ASEP on the integer lattice.
\end{abstract}

\section{Introduction}
\label{se:intro}

\subsection{Background}

Duality plays an important role in stochastic Markov processes where the time evolution is described by a linear generator. Early applications appear in \cite{spitzer} for the self-dual symmetric exclusion process, and in \cite{harris} for the contact process. Apart from these classical applications, duality is also a valuable tool for proving the limits of particle systems to stochastic partial differential equations; see \cite{CorwinT,CorwinST}. 

A duality functional of two processes is an observable that co-varies in time with respect to the evolution of the processes; see for example \cite{JansenK,Liggett}. Duality functionals are most powerful when expectation values and correlation functions of many-particle processes are related to those containing few particles. Models with few particles can be analysed in great detail and therefore expectation values can often be calculated analytically via such dualities. A well-known recent example is that of the duality between the stochastic Kardar--Parisi--Zhang (KPZ) equation for interface growth \cite{kpz} and the integrable one-dimensional quantum Bose gas \cite{brunetDerrida,CalabreseLD,kardar}. Indeed, much progress has been made in recent years using duality in the setting of integrable stochastic processes such as \cite{BorodinC,bcs,CorwinP,GiardinaKR,ImamuraS11}, where several powerful tools are available.

In many cases treated in the literature, duality functionals have been constructed in a more or less {\it ad hoc} fashion and only a few attempts have been made to systematically derive dualities in integrable stochastic models using quantum group symmetries \cite{belschuetz,carinciGRSa,carinciGRSb,kuan,kuan2,schuetzSandow}. In this paper we propose a new approach for methodically constructing integrable dualities by exploiting the algebraic structure provided by the $t$-deformed Knizhnik--Zamolodchikov (KZ) equations \cite{frenkel-resh,KZ}, which are consistency equations expressed in terms of the R-matrix of a quantum group, or alternatively, in terms of the Hecke algebra.

We will work in the context of the integrable (multi-species) asymmetric exclusion simple process (mASEP) with hopping rate $t$. The mASEP can be realized in two ways via representations of the Hecke algebra. The first is a standard description in which each particle configuration $\mu$ is identified with a basis element of a vector space, and where the local Markov generator is a matrix acting on this space. The second realization is on a basis $f_\mu$ of the ring of $n$-variable polynomials, in which the local Markov generator becomes a divided-difference operator (a polynomial representation of a Hecke generator). The $t$-deformed KZ equations connect these two realizations, and can in turn be interpreted as the duality relations of a diagonal observable intertwining the vector space and polynomial representations of the mASEP. 

In order to go beyond this tautological diagonal observable, and obtain non-trivial observables on the two processes, our main technical tool will be a family of $n$-variable polynomials $f_\mu$ studied in \cite{CantiniGW}. These polynomials are a standard basis for the polynomial realization of the mASEP, and are closely related to the theory of symmetric Macdonald polynomials \cite{Macda,Macdb} and their non-symmetric versions \cite{chereda,cheredb,opdam}. The $f_{\mu}$ polynomials depend on two parameters: the mASEP hopping parameter $t$, and another parameter $q$ which appears when imposing a certain cyclic boundary condition; collectively, these parameters are the $(q,t)$ of Macdonald polynomial theory. The presence of the second parameter $q$ is crucial to our approach, for while it has no direct physical meaning in the mASEP, its value can be tuned. In particular, when the $(q,t)$ parameters satisfy a resonance condition of the form,
\begin{align}
\label{resonance}
q^{k} t^{l}=1,\qquad  k,l \in\mathbb{N},
\end{align}
the $f_{\mu}$ polynomials may become singular and (after appropriately normalizing, to remove poles) degenerate into a sum of the form $\sum_{\nu} \psi(\nu,\mu;t) f_{\nu}$, for certain coefficients $\psi(\nu,\mu;t)$. In other words, the condition \eqref{resonance} creates linear dependences between the $f_{\mu}$ polynomials and thus gives rise to non-trivial intertwining solutions of the $t$-deformed KZ equations. It is these solutions that produce duality relations in the mASEP; the duality functionals end up being nothing but (rescaled versions of) the expansion coefficients $\psi(\nu,\mu;t)$.

In the rest of the introduction, we describe our methodology in greater detail.

\subsection{Functional definition of duality}
\label{sec:functional}

The standard definition of a stochastic duality is in terms of a function $\psi$ which takes values on the configuration spaces of two (possibly different) Markov processes. Let us begin by restating this definition in some generality.

Let $\mathbb{A}$ and $\mathbb{B}$ be two (possibly infinite) sets, whose elements we denote by $a$ and $b$, respectively. Let $\mathbb{F}$ be the space of all functions $\psi$ of the form
\begin{align*}
&
\psi: \mathbb{A} \times \mathbb{B} \rightarrow \mathbb{C}.
\end{align*}
Consider two linear functionals $L$ and $M$ which act on functions in $\mathbb{F}$ as follows:
\begin{align}
\label{functionals}
L\left[\psi(\cdot,b)\right] (a) := \sum_{a' \in \mathbb{A}} \ell(a,a') \psi(a',b),
\quad\quad
M \left[\psi(a,\cdot)\right] (b) := \sum_{b' \in \mathbb{B}} m(b,b') \psi(a,b'),
\end{align}
where $\ell:\mathbb{A} \times \mathbb{A} \rightarrow \mathbb{C}$ and $m:\mathbb{B} \times \mathbb{B} \rightarrow \mathbb{C}$ are some pre-specified functions (in the language of stochastic processes, these will be the matrix entries of the Markov generators $L$ and $M$ of two different processes). Then $L$ and $M$ are {\it dual} with respect to a function $\psi$ if
\begin{align}
\label{func-dual}
L \left[\psi(\cdot,b)\right] (a)
=
M \left[\psi(a,\cdot)\right] (b),
\quad\quad
\forall\ a \in \mathbb{A},\ b \in \mathbb{B}.
\end{align}

\subsection{Matrix definition of duality}
\label{sec:matrix-dual}

It is useful for our purposes to recast the statement of duality in terms of matrices, rather than functionals. We upgrade the previous sets $\mathbb{A}$ and $\mathbb{B}$ to vector spaces, with basis vectors $\ket{a}$ and $\ket{b}$. Let $\psi$ be a certain function in $\mathbb{F}$ and consider the following vector, $\ket{\Psi} \in\mathbb{A} \otimes \mathbb{B}$:
\begin{align}
\label{Psidef}
\ket{\Psi}
:=
\sum_{\substack{a \in \mathbb{A} \\ b \in \mathbb{B}}}
\psi(a,b)\ket{a} \otimes \ket{b}.
\end{align}
Let $\mathbb{L} \in {\rm End} (\mathbb{A})$ and $\mathbb{M} \in {\rm End}(\mathbb{B})$ be linear operators given explicitly by
\begin{align}
\label{lin-ops}
\mathbb{L} \ket{a} = \sum_{a' \in \mathbb{A}} \ell(a',a) \ket{a'},
\quad\quad
\mathbb{M} \ket{b} = \sum_{b' \in \mathbb{B}} m(b',b) \ket{b'},
\end{align}
for certain matrix entries $\ell$ and $m$.

\begin{prop}
\label{lem1}
The duality relation \eqref{func-dual} is equivalent to the equation
\begin{align}
\label{mat-dual}
\mathbb{L} \ket{\Psi}
=
\mathbb{M}\ket{\Psi}.
\end{align}
\end{prop}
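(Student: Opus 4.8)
The plan is to unwind both sides of \eqref{mat-dual} in the given bases and match coefficients against the functional identity \eqref{func-dual}. First I would apply $\mathbb{L}$ to $\ket{\Psi}$ using the definition \eqref{Psidef} and linearity: since $\mathbb{L}$ acts only on the first tensor factor, $\mathbb{L}\ket{\Psi} = \sum_{a,b} \psi(a,b)\, (\mathbb{L}\ket{a}) \otimes \ket{b}$, and then substitute \eqref{lin-ops} to get $\sum_{a,b} \psi(a,b) \sum_{a'} \ell(a',a) \ket{a'}\otimes\ket{b}$. Relabelling the summation indices $a \leftrightarrow a'$ in the double sum, this becomes $\sum_{a,b} \big(\sum_{a'} \ell(a,a')\psi(a',b)\big) \ket{a}\otimes\ket{b}$, and the bracketed quantity is exactly $L[\psi(\cdot,b)](a)$ from \eqref{functionals}. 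An entirely symmetric computation, now with $\mathbb{M}$ acting on the second factor and the relabelling $b\leftrightarrow b'$, gives $\mathbb{M}\ket{\Psi} = \sum_{a,b} M[\psi(a,\cdot)](b)\, \ket{a}\otimes\ket{b}$.

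Having written both sides as expansions in the basis $\{\ket{a}\otimes\ket{b}\}_{a\in\mathbb{A},\, b\in\mathbb{B}}$ of $\mathbb{A}\otimes\mathbb{B}$, the final step is to invoke linear independence of these product basis vectors: the equation $\mathbb{L}\ket{\Psi} = \mathbb{M}\ket{\Psi}$ holds if and only if the coefficients agree term by term, i.e. $L[\psi(\cdot,b)](a) = M[\psi(a,\cdot)](b)$ for every $a\in\mathbb{A}$ and $b\in\mathbb{B}$, which is precisely \eqref{func-dual}. Both implications are covered at once, so the equivalence is established.

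I do not expect any genuine obstacle here; the statement is essentially a bookkeeping identity, and the only point requiring a little care is the index relabelling — note the transpose-like swap between $\ell(a,a')$ in \eqref{functionals} and $\ell(a',a)$ in \eqref{lin-ops} (and similarly for $m$), which is exactly what makes the two pictures match and should be flagged explicitly rather than glossed over. A secondary subtlety, worth a remark if $\mathbb{A}$ or $\mathbb{B}$ is infinite, is that one should assume the relevant sums converge (or work formally), so that the rearrangements of the double sums are legitimate; in the applications to the mASEP the sums are effectively finite on each fixed sector, so this is not a real concern.
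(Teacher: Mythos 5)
Your proof is correct and follows essentially the same route as the paper: expand $\mathbb{L}\ket{\Psi}$ and $\mathbb{M}\ket{\Psi}$ in the product basis, relabel the summation indices, and compare coefficients to recover \eqref{func-dual}. Your explicit appeal to linear independence of the basis vectors (giving both directions of the equivalence) and your remark on the index transposition are slightly more careful than the paper's version, but the argument is identical in substance.
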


\begin{proof}
Explicit calculation of the left and right hand sides gives
\begin{align*}
\mathbb{L} \ket{\Psi}
&=
\sum_{a,b,a'}
\psi(a,b) \ell(a',a)
\ket{a'} \otimes \ket{b}
=
\sum_{a,b} \left( \sum_{a'} \ell(a,a')  \psi(a',b) \right) \ket{a} \otimes \ket{b},
\\
\mathbb{M} \ket{\Psi}
&=
\sum_{a,b,b'}
\psi(a,b) m(b',b)
\ket{a} \otimes \ket{b'}
=
\sum_{a,b} \left( \sum_{b'} m(b,b')  \psi(a,b') \right) \ket{a} \otimes \ket{b}.
\end{align*}
Requiring that these be equal implies \eqref{func-dual} for the function $\psi$.
\end{proof}

\subsection{$t$KZ equations as a source of dualities}
\label{se:tKZ-source}

The local $t$-deformed Knizhnik--Zamolodchikov equations,\footnote{We use the term {\it local} to distinguish these equations from the original quantum deformation of the Knizhnik--Zamolodchikov equation introduced by Frenkel and Reshetikhin \cite{frenkel-resh}, which involves global scattering matrices. Our use of $t$ rather than $q$ as the deformation parameter stems from the fact that both parameters play a role in this work, as the $(q,t)$ in Macdonald polynomials.} or $t$KZ equations for short, as introduced by Smirnov in the study of form factors \cite{smirnov}, are a system of equations for a polynomial-valued\footnote{In many contexts solutions to the $t$KZ equations are in fact in terms of series and elliptic functions.} vector $\ket{\Psi} \in \mathbb{C}[z_1,\dots,z_n] \otimes \mathbb{V}$. Here $\mathbb{C}[z_1,\dots,z_n]$ denotes the ring of polynomials in $n$ variables $(z_1,\dots,z_n)$, over the field of complex numbers. The vector space $\mathbb{V}$ is obtained by taking an $n$-fold tensor product of local spaces, \ie\ $\mathbb{V} := \mathbb{V}_1 \otimes \cdots \otimes \mathbb{V}_n$, where $\mathbb{V}_i \equiv \mathbb{C}^{r+1}$ for all $1 \leq i \leq n$, and $r \geq 1$ is some fixed positive integer. The local $t$KZ equations read
\begin{align}
\label{general-qkz}
s_i \ket{\Psi} = \check{R}(z_i / z_{i+1}) \ket{\Psi},
\quad
i \in \{1,\dots,n-1\},
\end{align}
where $s_i$ is a simple transposition acting on $\mathbb{C}[z_1,\dots,z_n]$, with action
\begin{align*}
s_i g(z_1,\dots,z_i,z_{i+1},\dots,z_n) = g(z_1,\dots,z_{i+1},z_i,\dots,z_n),
\quad
\forall\ g \in \mathbb{C}[z_1,\dots,z_n],
\end{align*}
and $\check{R}(z_i / z_{i+1})$ denotes the R-matrix associated to quantized affine $\mathfrak{sl}(r+1)$ acting in $\mathbb{V}_i \otimes \mathbb{V}_{i+1}$. To fix a particular solution of \eqref{general-qkz} these equations are supplemented by a cyclic boundary condition on $\ket{\Psi}$, which we do not write down at this stage.

It is known (see for example \cite{Pasquier,Zinn-Justin}) that the equations \eqref{general-qkz} can be cast in the form
\begin{align}
\label{local-qkz}
\mathbb{L}_i \ket{\Psi}
=
\mathbb{M}_i \ket{\Psi},
\quad
i \in \{1,\dots,n-1\},
\end{align}
for certain $\mathbb{L}_i \in {\rm End}(\mathbb{C}[z_1,\dots,z_n]) \otimes 1$ and $\mathbb{M}_i \in 1 \otimes {\rm End}(\mathbb{V})$. This form differs slightly from \eqref{general-qkz}, since it separates completely the action on the $\mathbb{C}[z_1,\dots,z_n]$ part of $\ket{\Psi}$ from that on its $\mathbb{V}$ part. The equation \eqref{local-qkz} is our key to establishing the link between $t$KZ equations and dualities. The connection can be made precise under the following steps:
\begin{itemize}
\item We identify the two generic vector spaces appearing in Section \ref{sec:matrix-dual} with the vector spaces appearing in \eqref{local-qkz}, \ie\ $\mathbb{A} \equiv \mathbb{C}[z_1,\dots,z_n]$ and $\mathbb{B} \equiv \mathbb{V}$.

\item We choose suitable bases $\{\ket{a}\}$ and $\{\ket{b}\}$ for $\mathbb{A}$ and $\mathbb{B}$, and expand both $\ket{\Psi}$ and the linear operators $\mathbb{L}_i$ and $\mathbb{M}_i$ with respect to these bases, as in \eqref{Psidef} and \eqref{lin-ops}. This yields
\begin{align*}
\sum_{a,a' \in \mathbb{A}}
\sum_{b \in \mathbb{B}}
\ell_i(a,a')
\psi(a',b)
\ket{a} \otimes \ket{b}
=
\sum_{a \in \mathbb{A}}
\sum_{b,b' \in \mathbb{B}}
m_i(b,b')
\psi(a,b')
\ket{a} \otimes \ket{b},
\end{align*}
in the very same way as in the proof of Proposition \ref{lem1}.

\item The coefficients $\psi(a,b)$ are then duality functions\footnote{In the rest of the paper we will refer to such coefficients as duality {\it functions} rather than {\it functionals}. The reason for this is that we only focus on $\psi$ as a function on the underlying configuration spaces, and suppress the fact that configurations $a$ and $b$ are themselves functions of time.} with respect to $n-1$ pairs of linear functionals $L_i$ and $M_i$, in the same sense as \eqref{func-dual}:
\begin{align}
\label{sum_i}
\sum_{a' \in \mathbb{A}}
\ell_i(a,a')
\psi(a',b)
=
\sum_{b' \in \mathbb{B}}
m_i(b,b')
\psi(a,b'),
\end{align}
where $\ell_i(a,a')$ and $m_i(b,b')$ are the matrix entries of the operators $\mathbb{L}_i$ and $\mathbb{M}_i$. The $\psi(a,b)$ can also be thought of as duality functions with respect to the generators $L := \sum_{i=1}^{n-1} L_i$ and $M := \sum_{i=1}^{n-1} M_i$, simply by summing \eqref{sum_i} over $1 \leq i \leq n-1$.
\end{itemize}
This procedure allows one, in principle, to start from any polynomial solution of the local relations \eqref{general-qkz} and to extract from it duality functions. However, it cannot be applied without due heed to the particulars of the solution that one chooses. For example, finding bases $\{\ket{a}\}$ and $\{\ket{b}\}$ such that the operators $\mathbb{L}_i$ and $\mathbb{M}_i$ are meaningful as Markov matrices may be quite difficult in practice or not even possible. It is also not guaranteed that the functions $\psi(a,b)$ define an interesting statistic on the two configuration spaces $\mathbb{A}$ and $\mathbb{B}$. In this paper, we will recover a known interesting statistic from a specific solution of \eqref{general-qkz} which was previously considered in \cite{CantiniGW,GierW}.

\subsection{Notation and conventions}

Let us outline some of the notation to be used in the paper. A composition $\mu$ is an $n$-tuple of non-negative integers, $(\mu_1,\dots,\mu_n)$. The elements of $\mu$, $\mu_i \geq 0$, are referred to as parts. We define the part-multiplicity function $m_i(\mu)$ as the number of parts in $\mu$ equal to $i$: $m_i(\mu) = \{k: \mu_k=i\}$. A partition $\lambda$ is a composition with weakly decreasing parts, $(\lambda_1 \geq \cdots \geq \lambda_n \geq 0)$. We also define anti-partitions $\delta$, which are compositions with weakly increasing parts, $(0 \leq \delta_1 \leq \cdots \leq \delta_n)$. Where possible we reserve the letters $\mu,\nu$ for generic compositions, $\lambda$ for partitions, and $\delta$ for anti-partitions. Given a composition $\mu$, its (anti-)dominant ordering ($\mu^{-}$) $\mu^{+}$ is the unique (anti-)partition obtainable by permuting the parts of $\mu$.

At times we will consider compositions of infinite length. By this, we shall always mean finitely-supported infinite strings $(\dots,\mu_{-1},\mu_0,\mu_{1},\dots)$, where $\mu_i \geq 0$ for all $i \in \mathbb{Z}$ and where there exists $N$ such that $\mu_i = 0$ if $|i| > N$.

Following \cite{KasataniT}, we define two orders on compositions. The first is the dominance order, denoted by $\geq$. Given two compositions $\mu = (\mu_1,\dots,\mu_n)$ and $\nu = (\nu_1,\dots,\nu_n)$, we define
\begin{align*}
\mu \geq \nu \iff \sum_{i=1}^{j} \mu_i \geq \sum_{i=1}^{j} \nu_i, \qquad \forall\ 1 \leq j \leq n.
\end{align*}
The second order is denoted by $\succ$. Given two compositions $\mu$ and $\nu$, we define
\begin{align*}
\mu \succ \nu \iff
\Big(
\mu^{+} > \nu^{+}
\quad
\text{or}
\quad
\mu^{+} = \nu^{+},
\
\mu > \nu
\Big).
\end{align*}
This order should not be confused with the interlacing of partitions, which is another standard use of the symbol $\succ$ in the literature.

We let $\mathbb{C}_{q,t}[z_1,\dots,z_n]$ denote the ring of polynomials in $(z_1,\dots,z_n)$ with coefficients in $\mathbb{Q}(q,t)$. We use the shorthand $z^{\mu} := z_1^{\mu_1} \dots z_n^{\mu_n}$ to denote the elements of the monomial basis. Given a polynomial $g(z_1,\dots,z_n) \in \mathbb{C}_{q,t}[z_1,\dots,z_n]$, $p \in \mathbb{N}$ and $m \in \mathbb{Q}_{>0}$, we define
\begin{align*}
{\rm Coeff}_p[g,m] := \lim_{q \rightarrow t^{-m}} (1-qt^{m})^p g(z_1,\dots,z_n),
\end{align*}
where the limit exists. In this work we are mainly interested in simple poles in $q$, when it is convenient to write ${\rm Coeff}_1[g,m] \equiv {\rm Coeff}[g,m]$. For two polynomials $g_1, g_2 \in \mathbb{C}_{q,t}[z_1,\dots,z_n]$, we write
\begin{align*}
g_1(z_1,\dots,z_n) \propto g_2(z_1,\dots,z_n) \iff \exists\ \alpha \in \mathbb{Q}(q,t)\ \text{such that}\
g_1(z_1,\dots,z_n) = \alpha g_2(z_1,\dots,z_n).
\end{align*}

\subsection{Acknowledgments}

We gratefully acknowledge support from the Australian Research Council Centre of Excellence for Mathematical and Statistical Frontiers (ACEMS). MW is supported by an Australian Research Council DECRA. It is a pleasure to thank Alexei Borodin, Ivan Corwin, Alexandr Garbali, Jeffrey Kuan, Tomohiro Sasamoto and Ole Warnaar for their interest in this work and for related discussions.

\section{Asymmetric simple exclusion process}
\label{se:asep}

The functional definition of duality \eqref{func-dual}, and its matrix version \eqref{mat-dual}, are both generic statements that apply for any indexing sets $\mathbb{A}$ and $\mathbb{B}$. In this section we will show how self-duality in the ASEP can be cast within this general framework, forming the foundations of the rest of the paper.

In the examples of duality in ASEP in \cite{bcs}, duality is exhibited between two different ASEP systems (which contain different numbers of particles, and different hopping rates) on the infinite line. This means that we should expect both $\mathbb{A}$ and $\mathbb{B}$ to be identified with the set of infinite binary strings. More concretely, we shall define $\mathbb{A}$ to be the space of all multilinear polynomials in an infinite set of variables $\{z\} = \{\dots,z_{-1},z_0,z_1,\dots\}$. The basis vectors of this space are $\prod_{i \in \mathbb{Z}} z_i^{\nu_i}$, where $\nu$ is an infinite composition with $\nu_i \in \{0,1\}$ for all $i \in \mathbb{Z}$. The binary string corresponding with a given basis vector is read off simply as the exponents of the variables $\{z\}$. On the other hand, we define $\mathbb{B}$ to be the infinite tensor product $\bigotimes_{i \in \mathbb{Z}} \mathbb{C}_i^2$ whose basis vectors are $\bigotimes_{i \in \mathbb{Z}} \ket{\mu_i}_i$, where $\mu$ is an infinite composition with $\mu_i \in \{0,1\}$ for all $i \in \mathbb{Z}$, and where $\ket{0}$ and $\ket{1}$ denote the canonical basis of $\mathbb{C}^2$.

\subsection{The ASEP generators $L_i$ and $M_i$}
\label{sec:generator}

Here we recall the definition of the ASEP generator, denoting it $L$, to match the notation of Section \ref{sec:functional}. It is constructed as a sum of local generators, $L = \sum_{i \in \mathbb{Z}} L_i$. Each local generator $L_i$ acts on functions $\psi$ of binary strings $\nu$. Particles (the ones of the binary string) hop to the left with rate $1$ and to the right with rate $t$:
\begin{align}
\label{Mi}
L_i [\psi](\nu) = \sum_{\nu' \in \mathbb{A}} \ell_i(\nu,\nu') \psi(\nu'),
\end{align}
where the coefficients $\ell_i(\nu,\nu')$, which specify the transition rate from $\nu$ to $\nu'$, are given by
\begin{align}
\label{pos}
\ell_i(\nu,\nu')
=
\left\{
\begin{array}{rl}
t,
&
\nu_i > \nu_{i+1},
\quad
(\nu_i,\nu_{i+1}) = (\nu'_{i+1},\nu'_i),
\quad
\nu_k = \nu'_k\ \forall\ k \not= i,i+1,
\\ \\
1,
&
\nu_i < \nu_{i+1},
\quad
(\nu_i,\nu_{i+1}) = (\nu'_{i+1},\nu'_i),
\quad
\nu_k = \nu'_k\ \forall\ k \not= i,i+1,
\\ \\
0,
&
\text{otherwise},
\end{array}
\right.
\end{align}
when $\nu \not= \nu'$, and where the diagonal elements are chosen such that the matrix {\it rows} sum to zero:
\begin{align}
\label{diag}
\ell_i(\nu,\nu)
=
\left\{
\begin{array}{rl}
-t,
&
\nu_i > \nu_{i+1},
\\ \\
-1,
&
\nu_i < \nu_{i+1},
\\ \\
0,
&
\text{otherwise}.
\end{array}
\right.
\end{align}
Similarly, one can define a reverse ASEP generator whose hopping rates have been switched, \ie\ particles now hop to the left with rate $t$ and to the right with rate $1$. We shall denote this generator by $M = \sum_{i \in \mathbb{Z}} M_i$, again in reference to our notation in Section \ref{sec:functional}. It acts on functions $\psi$ of binary strings $\mu$:
\begin{align}
\label{Li}
M_i [\psi](\mu) = \sum_{\mu' \in \mathbb{B}} m_i(\mu,\mu') \psi(\mu'),
\end{align}
where the hopping rates are given by
\begin{align}
\label{pos-back}
m_i(\mu,\mu')
=
\left\{
\begin{array}{rl}
1,
&
\mu_i > \mu_{i+1},
\quad
(\mu_i,\mu_{i+1}) = (\mu'_{i+1},\mu'_i),
\quad
\mu_k = \mu'_k\ \forall\ k \not= i,i+1,
\\ \\
t,
&
\mu_i < \mu_{i+1},
\quad
(\mu_i,\mu_{i+1}) = (\mu'_{i+1},\mu'_i),
\quad
\mu_k = \mu'_k\ \forall\ k \not= i,i+1,
\\ \\
0,
&
\text{otherwise},
\end{array}
\right.
\end{align}
when $\mu \not= \mu'$, and where the diagonal elements are chosen such that the matrix {\it columns} sum to zero:
\begin{align}
\label{diag-back}
m_i(\mu,\mu)
=
\left\{
\begin{array}{rl}
-t,
&
\mu_i > \mu_{i+1},
\\ \\
-1,
&
\mu_i < \mu_{i+1},
\\ \\
0,
&
\text{otherwise}.
\end{array}
\right.
\end{align}
The linear operators $\mathbb{L}_i$ and $\mathbb{M}_i$ with matrix entries $\ell_i(\nu,\nu')$ and $m_i(\mu,\mu')$ can be turned into Markov matrices by addition of the identity matrix. Following the standard conventions of the probability literature, $\mathbb{L}_i$ acts to the left, while $\mathbb{M}_i$ acts to the right. However, since we intend to cast $\mathbb{L}_i$ as an operator on the space of polynomials (as explained in the next section), we find that left-action is notationally cumbersome, and instead arrange so that both $\mathbb{L}_i$ and $\mathbb{M}_i$ act to the right.

\subsection{Divided-difference realization of $\mathbb{L}_i$}
\label{sec:asep-l_i}

Let $\mathbb{A}$ denote the space of multilinear polynomials in $\{\dots,z_{-1},z_0,z_1,\dots\}$, and let us seek an operator
$\mathbb{L}_i$ whose action on $\mathbb{A}$ faithfully reproduces \eqref{lin-ops} with coefficients given by \eqref{pos}--\eqref{diag}. We define a linear operator $\mathbb{L}_i$ on $\mathbb{A}$ by
\begin{align}
\label{asepL}
\mathbb{L}_i
=
\left(
\frac{t z_i - z_{i+1}}{z_i - z_{i+1}}
\right)
(s_i -1),
\end{align}
where we recall that $s_i$ acts on polynomials by the simple transposition $z_i \leftrightarrow z_{i+1}$.

\begin{prop}
Let $\nu$ be a binary string and associate to it the monomial $\ket{\nu} = \prod_{i \in \mathbb{Z}} z_i^{\nu_i}$. Then $\mathbb{L}_i \ket{\nu} = \sum_{\nu' \in \mathbb{A}} \ell_i(\nu',\nu) \ket{\nu'}$, where the expansion coefficients are given by \eqref{pos}--\eqref{diag}.
\end{prop}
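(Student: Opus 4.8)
The plan is to verify the identity by a direct case analysis on the pair $(\nu_i,\nu_{i+1})$, exploiting the fact that $\ket{\nu}$ is a monomial and that the operator in \eqref{asepL} only touches the variables $z_i,z_{i+1}$. Write $\ket{\nu} = z_i^{\nu_i} z_{i+1}^{\nu_{i+1}} \cdot w$, where $w = \prod_{k \neq i,i+1} z_k^{\nu_k}$ is inert under both $s_i$ and the rational prefactor. Since $\nu$ is a binary string, $(\nu_i,\nu_{i+1})$ takes only the four values $(0,0),(1,1),(1,0),(0,1)$, and in each case one computes $(s_i - 1)\ket{\nu}$ and then multiplies by $\frac{t z_i - z_{i+1}}{z_i - z_{i+1}}$.

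First I would dispose of the two "diagonal" cases. If $(\nu_i,\nu_{i+1}) = (0,0)$ or $(1,1)$ then $z_i^{\nu_i} z_{i+1}^{\nu_{i+1}}$ is symmetric in $z_i \leftrightarrow z_{i+1}$, so $(s_i-1)\ket{\nu} = 0$ and hence $\mathbb{L}_i \ket{\nu} = 0$; this matches \eqref{pos}--\eqref{diag}, which assign transition rate $0$ out of any configuration with $\nu_i = \nu_{i+1}$ and also diagonal entry $0$. Next, the case $(\nu_i,\nu_{i+1}) = (1,0)$: here $(s_i - 1)(z_i) = z_{i+1} - z_i$, so $\mathbb{L}_i \ket{\nu} = \frac{(t z_i - z_{i+1})(z_{i+1} - z_i)}{z_i - z_{i+1}} w = -(t z_i - z_{i+1}) w = z_{i+1} w - t z_i w$. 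Reading off exponents, $z_{i+1} w$ is the monomial for $\nu'$ with $(\nu'_i,\nu'_{i+1}) = (0,1)$, i.e. the hop that swaps the $(1,0)$ to $(0,1)$, with coefficient $+1$; and $-t z_i w = -t \ket{\nu}$ is the diagonal term. Comparing with \eqref{pos}--\eqref{diag}: since we need $\ell_i(\nu',\nu)$, the relevant entry is the one with $\nu_i > \nu_{i+1}$, i.e. $\ell_i$ on the off-diagonal equals $t$ in the direction $\nu \to \nu'$, but the \emph{transpose} entry $\ell_i(\nu',\nu)$ with $\nu'_i < \nu'_{i+1}$ equals $1$ — consistent with the coefficient $+1$ above — and the diagonal $\ell_i(\nu,\nu) = -t$ since $\nu_i > \nu_{i+1}$, matching $-t\ket{\nu}$. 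Finally the case $(\nu_i,\nu_{i+1}) = (0,1)$ is handled identically: $(s_i-1)(z_{i+1}) = z_i - z_{i+1}$, giving $\mathbb{L}_i\ket{\nu} = (t z_i - z_{i+1}) w = t z_i w - z_{i+1} w$, so the off-diagonal term (monomial for $(1,0)$) has coefficient $+t$ and the diagonal term is $-z_{i+1} w = -\ket{\nu}$; again this agrees with $\ell_i(\nu',\nu) = t$ for the transpose entry and $\ell_i(\nu,\nu) = -1$ since now $\nu_i < \nu_{i+1}$.

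The only subtlety — and the one point that needs care rather than routine algebra — is the bookkeeping of which index of $\ell_i$ is summed: the proposition claims $\mathbb{L}_i \ket{\nu} = \sum_{\nu'} \ell_i(\nu',\nu)\ket{\nu'}$, matching \eqref{lin-ops}, whereas the functional definition \eqref{Mi}--\eqref{pos} uses $\ell_i(\nu,\nu')$ in the other slot. One must check that in each off-diagonal case the coefficient produced by $\mathbb{L}_i$ is the transition rate \emph{into} $\nu$, i.e. $\ell_i(\nu',\nu)$ with $\nu'$ obtained from $\nu$ by the swap, not the rate out of $\nu$; as the computation above shows, this is exactly why a $(1,0)$-configuration picks up coefficient $1$ (not $t$) and a $(0,1)$-configuration picks up $t$ (not $1$). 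For the diagonal entries this distinction is irrelevant since $\ell_i(\nu,\nu)$ is symmetric. Collecting the four cases and noting that $w$ is untouched throughout completes the proof.
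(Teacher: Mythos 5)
Your proof is correct and follows essentially the same route as the paper's: a direct case analysis on $(\nu_i,\nu_{i+1})$, observing that $\mathbb{L}_i$ annihilates anything symmetric in $(z_i,z_{i+1})$ and computing the two off-diagonal cases explicitly to get $z_{i+1}-tz_i$ and $tz_i-z_{i+1}$ respectively. Your extra care over which slot of $\ell_i(\cdot,\cdot)$ is being summed (the rate \emph{into} $\nu$, i.e.\ the transposed entry $\ell_i(\nu',\nu)$) is a point the paper leaves implicit, and your bookkeeping there is right.
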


\begin{proof}
It is easy to check that $\mathbb{L}_i$ has a stable action on the space of multilinear polynomials in $\{z\}$, meaning that we can indeed expand $\mathbb{L}_i \ket{\nu}$ on this space. Furthermore it is clear from its definition that $\mathbb{L}_i$ only acts non-trivially on the variables $(z_i,z_{i+1})$, meaning that there are only three cases to check:
\begin{align}
\label{action-L_i}
\mathbb{L}_i \left( \prod_{k \in \mathbb{Z}} z_k^{\nu_k} \right)
=
\prod_{\substack{k \in \mathbb{Z} \\ k \not=i,i+1}} z_k^{\nu_k}
\times
\left\{
\begin{array}{ll}
0, & \nu_i = \nu_{i+1},
\\
(z_{i+1} - tz_i), & \nu_i > \nu_{i+1},
\\
(tz_i - z_{i+1}), & \nu_i < \nu_{i+1},
\end{array}
\right.
\end{align}
where the vanishing of the first case is due to the fact that $\mathbb{L}_i$ annihilates any polynomial which is symmetric in $(z_i,z_{i+1})$. The coefficients obtained from \eqref{action-L_i} directly match those in \eqref{pos}--\eqref{diag}.

\end{proof}

\subsection{Matrix realization of $\mathbb{M}_i$}
\label{sec:asep-m_i}

Let $\mathbb{B} = \bigotimes_{i \in \mathbb{Z}} \mathbb{C}_i^2$ and construct basis vectors $\ket{\mu} = \bigotimes_{i \in \mathbb{Z}} \ket{\mu_i}_i$, where each $\mu_i$ takes values in $\{0,1\}$ and
\begin{align*}
\ket{0} = \binom{1}{0},
\quad\quad
\ket{1} = \binom{0}{1}.
\end{align*}
Let $\mathbb{M}_i$ be the linear operator on $\mathbb{B}$ which acts according to \eqref{lin-ops}, with matrix elements given by \eqref{pos-back}--\eqref{diag-back}. We see that
\begin{align}
\label{asepM}
\mathbb{M}_i
=
\left(
\begin{array}{cccc}
0 & 0 & 0 & 0
\\
0 & -1 & +t & 0
\\
0 & +1 & -t & 0
\\
0 & 0 & 0 & 0
\end{array}
\right)_{i,i+1}
\end{align}
where the subscript indicates that the matrix acts non-trivially only on the spaces $\mathbb{C}^2_i$ and $\mathbb{C}^2_{i+1}$ of the tensor product, acting as the identity on all other spaces.

\subsection{Local duality relation}

Now we come to the formulation of duality in the ASEP. We say that $\psi$ is a {\it local ASEP duality function} provided that, for all $i \in \mathbb{Z}$,
\begin{align}
\label{local-mat-dual}
\mathbb{L}_i \ket{\Psi} = \mathbb{M}_i \ket{\Psi},
\quad\quad
\text{where}\ \
\ket{\Psi}
=
\sum_{\nu \in \mathbb{A}}
\sum_{\mu \in \mathbb{B}}
\psi(\nu,\mu) \prod_{k \in \mathbb{Z}} z_k^{\nu_k} \ket{\mu}.
\end{align}
As we already showed in Section \ref{sec:matrix-dual}, this then implies that $\psi$ satisfies the functional version of duality
\begin{align*}
L_i[\psi(\cdot,\mu)](\nu) = M_i[\psi(\nu,\cdot)](\mu),
\quad
\forall \ i \in \mathbb{Z},
\end{align*}
with respect to the local ASEP generators \eqref{Mi} and \eqref{Li}. It is clear that any local duality function $\psi$ will also be a duality function with respect to the global generators $L = \sum_{i \in \mathbb{Z}} L_i$ and $M = \sum_{i \in \mathbb{Z}} M_i$, however the converse is not necessarily true. In the rest of the paper we will focus on obtaining non-trivial solutions of \eqref{local-mat-dual} and its higher-rank analogue \eqref{local-mat-dual2}, even though we cannot {\it a priori} expect to obtain all possible global duality functions in this way.

\subsection{Generalization to multi-species ASEP}

All of the notions considered so far admit an extension to the multi-species ASEP. The mASEP is a continuous-time Markov chain of hopping coloured particles, \ie\ it is defined on general strings of non-negative integers, or compositions. In order to study it in our framework, we now identify $\mathbb{A}$ and $\mathbb{B}$ with the set of infinite compositions. We will assume that the parts of these compositions are bounded by some $r \in \mathbb{N}$, where $r$ denotes the number of particle species present in the mASEP under consideration. The ordinary ASEP is recovered by choosing $r = 1$.

The local mASEP generators $L_i$ and $M_i$ are given by the very same formulae as in Section \ref{sec:generator}, \ie\ by the equations \eqref{Mi}--\eqref{diag} and \eqref{Li}--\eqref{diag-back}. The only difference, compared with the case of ASEP, is that the compositions $\nu$ and $\mu$ are no longer to be understood as binary strings, but rather as strings of non-negative integers taking values in $\{0,1,\dots,r\}$.

One might then wonder how to generalize \eqref{local-mat-dual} to a multi-species setting. To address this question, we begin by elevating $\mathbb{A}$ and $\mathbb{B}$ to vector spaces, just as we did in the case of the ordinary ASEP. We define $\mathbb{A}$ to be the space of all polynomials in an infinite set of variables $\{z\}$, whose degree in the individual variable $z_i$ is bounded by $r$, for all $i \in \mathbb{Z}$. $\mathbb{B}$ is identified with the vector space $\bigotimes_{i \in \mathbb{Z}} \mathbb{C}_i^{r+1}$ with basis vectors $\bigotimes_{i \in \mathbb{Z}} \ket{\mu_i}_i$, where $\mu_i \in \{0,1,\dots,r\}$ for all $i \in \mathbb{Z}$ and where $\ket{0},\ket{1},\dots,\ket{r}$ denote the canonical basis vectors of $\mathbb{C}^{r+1}$. The operators which act on these vector spaces, $\mathbb{L}_i$ and $\mathbb{M}_i$, are essentially those of Sections \ref{sec:asep-l_i} and \ref{sec:asep-m_i}. $\mathbb{L}_i$ is defined as in \eqref{asepL}, without any modification. $\mathbb{M}_i$ is now an $(r+1)^2 \times (r+1)^2$ matrix acting in $\mathbb{C}_i^{r+1} \otimes \mathbb{C}_{i+1}^{r+1}$, with matrix entries given by \eqref{pos-back}--\eqref{diag-back}.

There is however one point of subtlety compared with the single-species ASEP: how does one choose a basis for $\mathbb{A}$, such that $\mathbb{L}_i$ acts with matrix entries that match \eqref{pos}--\eqref{diag}? This motivates the following definition:

\begin{defn}
\label{def:admiss}
Let $\nu$ denote a composition and fix a basis $\{\ket{\nu}\} = \{f_\nu(z)\}$ of $\mathbb{A}$. We say that this basis is admissible if
$\mathbb{L}_i \ket{\nu} = \sum_{\nu' \in \mathbb{A}} \ell_i(\nu',\nu) \ket{\nu'}$ for all $\nu$, where the expansion coefficients are given by \eqref{pos}--\eqref{diag}.
\end{defn}

\begin{remark}{\rm
We will say more about one possible construction of an admissible basis in the next section. It is worthwhile pointing out that the simplest basis of $\mathbb{A}$, namely $\{\ket{\nu}\} = \{\prod_{i \in \mathbb{Z}} z_i^{\nu_i}\}$, is {\it not} admissible for $r \geq 2$.
}
\end{remark}

Given an admissible basis $\{f_\nu(z)\}$ of $\mathbb{A}$, we will say that $\psi$ is a {\it local mASEP duality function} provided that, for all $i \in \mathbb{Z}$,
\begin{align}
\label{local-mat-dual2}
\mathbb{L}_i \ket{\Psi} = \mathbb{M}_i \ket{\Psi},
\quad\quad
\text{where}\ \
\ket{\Psi}
=
\sum_{\mu \in \mathbb{A}}
\sum_{\nu \in \mathbb{B}}
\psi(\nu,\mu) f_\nu(z) \ket{\mu}.
\end{align}

\section{Connection with the $t$-deformed Knizhnik--Zamolodchikov equations}
\label{se:tKZ}

This section has several aims. First, we establish a connection between the equations \eqref{local-mat-dual2} and the $t$-deformed Knizhnik--Zamolodchikov ($t$KZ) equations. More precisely, we will show that for $\psi(\nu,\mu) = \delta_{\nu,\mu}$ (trivial duality function), the equations \eqref{local-mat-dual2} are equivalent to the system of $t$KZ equations on the polynomials $\{f_{\nu}(z)\}$.

Second, we discuss how to obtain solutions of the $t$KZ equations. For this purpose, it turns out to be convenient to restrict to the space of polynomials in $n$ variables, when the number of $t$KZ equations becomes finite. In particular, we are able to make contact with a family of polynomials $\{f_{\nu}(z_1,\dots,z_n)\}$ that were considered in \cite{CantiniGW,Kasatani,KasataniT}, which have a close connection with the theory of non-symmetric Macdonald polynomials.

Third, we will outline a scheme to obtain non-trivial duality functions $\psi$ obeying \eqref{local-mat-dual2}, given a solution of the $t$KZ equations. It is based on the assumption that the polynomials $\{f_{\nu}(z)\}$ depend on an extra parameter $q$, and satisfy appropriately nice recursion relations when $q$ is specialized to certain values. In the case of the polynomials $\{f_{\nu}(z_1,\dots,z_n)\}$ studied in \cite{CantiniGW}, such recursive properties do exist, and are the subject of Sections \ref{se:rank1reduc} and \ref{se:rank2}.

\subsection{Hecke algebra, ASEP exchange relations and tKZ equations}
\label{ssec:hecke}

Consider a type $A_{n-1}$ Hecke algebra with generators $\{T_i\}_{1 \leq i \leq n-1}$, satisfying the relations
\begin{equation}
\begin{aligned}
\label{eq:Hecke}
(T_i-t)(T_i+1)=0,& \qquad T_iT_{i+1}T_i=T_{i+1}T_iT_{i+1},
\\
T_iT_j = T_j T_i,& \qquad \forall\ i,j \ \text{such that}\ \ |i-j| > 1.
\end{aligned}
\end{equation}
Both the generator $T_i$ and its inverse $T_i^{-1}$ can be realized as operators on the space of polynomials in $(z_1,\dots,z_n)$. One can easily show that
\begin{align*}
T_i
=
t
-
\left(
\frac{t z_i - z_{i+1}}{z_i - z_{i+1}}
\right)
(1-s_i),
\quad\quad
T_i^{-1}
=
t^{-1}
-
t^{-1}
\left(
\frac{t z_i - z_{i+1}}{z_i - z_{i+1}}
\right)
(1-s_i),
\end{align*}
compose as the identity, and faithfully represent the relations \eqref{eq:Hecke}.

Let $\{f_{\nu}(z)\}$ be a set of polynomials in the variables $(z_1,\dots,z_n)$, indexed by finite compositions $\nu = (\nu_1,\dots,\nu_n)$. We say that the family $\{f_{\nu}(z)\}$ is a solution of the {\it ASEP exchange relations} provided that, for all $\nu$ and
$1 \leq i \leq n-1$, the following equations hold:
\begin{equation}
\label{localtKZ}
T_i f_{(\nu_1,\dots,\nu_i,\nu_{i+1},\dots,\nu_n)}
=
\left\{
\begin{array}{rl}
f_{(\nu_1,\dots,\nu_{i+1},\nu_{i},\dots,\nu_n)},
& \quad
\nu_i >\nu_{i+1},
\\ \\
t f_{(\nu_1,\dots,\nu_{i+1},\nu_{i},\dots,\nu_n)},
& \quad
\nu_i =\nu_{i+1}.
\end{array}
\right.
\end{equation}
Note that these relations also determine $T_i f_{(\nu_1,\dots,\nu_i,\nu_{i+1},\dots,\nu_n)}$ when $\nu_i < \nu_{i+1}$. Indeed, by acting on the top equation in \eqref{localtKZ} with $T_i$ and using the quadratic relation $(T_i-t)(T_i+1)=0$, after simplification we obtain
\begin{align}
\label{other_case}
T_i f_{(\nu_1,\dots,\nu_i,\nu_{i+1},\dots,\nu_n)}
=
(t-1) f_{(\nu_1,\dots,\nu_i,\nu_{i+1},\dots,\nu_n)}
+
t f_{(\nu_1,\dots,\nu_{i+1},\nu_i,\dots,\nu_n)},
\quad
\nu_i < \nu_{i+1}.
\end{align}
Returning to the local ASEP generator \eqref{asepL}, we see that $\mathbb{L}_i=T_i-t$. Defining
\begin{align*}
\theta_i(\nu) = \left\{
\begin{array}{ll}
1, \quad & \nu_i > \nu_{i+1}, \\
0, \quad & \nu_i < \nu_{i+1}, \\
\tfrac12, & \nu_i=\nu_{i+1},
\end{array}
\right.
\qquad
\theta_i(s_i\nu) = 1-\theta_i(\nu),
\end{align*}
the relations \eqref{localtKZ} and \eqref{other_case} can collectively be written as
\begin{align}
\label{admissible_act}
\mathbb{L}_i f_{\nu}(z)  &=   t^{\theta_i(s_i\nu)}  f_{s_i \nu}(z)-  t^{\theta_i(\nu)} f_{\nu}(z) = \sum_{\nu'} \ell(\nu',\nu) f_{\nu'}(z),
\end{align}
where the coefficients in the sum are given by \eqref{pos}, \eqref{diag}. Therefore, any set of polynomials $\{f_{\nu}(z)\}$ which satisfy the exchange relations \eqref{localtKZ}, \eqref{other_case} form an admissible polynomial realization of mASEP, in the sense of Definition \ref{def:admiss}.

\begin{remark}{\rm
Restricting to compositions $\nu$ such that $\nu_i \in \{0,1\}$, one can easily show that $\{f_{\nu}(z)\} = \{\prod_{i=1}^{n} z_i^{\nu_i}\}$ is a solution of the ASEP exchange relations (indeed, this is just a rewriting of equation \eqref{action-L_i}, when it is restricted to finitely many variables).
}
\end{remark}

\begin{prop}
\label{asepdual}
The function $\psi(\nu,\mu) = \delta_{\nu,\mu}$ is a local mASEP duality function, or in other words,
\begin{align}
\label{delta}
\ket{\mathcal{I}} := \sum_{\mu} \sum_{\nu} \delta_{\nu,\mu} f_{\nu}(z) \ket{\mu} =
\sum_{\mu} f_{\mu}(z) \ket{\mu}
\quad
\text{satisfies}\ \
\mathbb{L}_i \ket{\mathcal{I}} = \mathbb{M}_i \ket{\mathcal{I}},
\quad\forall\
1 \leq i \leq n-1,
\end{align}
where $\{f_\nu(z)\}$ is a family of polynomials which satisfy the exchange relations \eqref{localtKZ} and \eqref{other_case}, $\mathbb{L}_i$ acts via \eqref{admissible_act} and $\mathbb{M}_i$ is the matrix with entries \eqref{pos-back} and \eqref{diag-back}.
\end{prop}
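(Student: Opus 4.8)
The plan is to reduce the identity $\mathbb{L}_i\ket{\mathcal{I}} = \mathbb{M}_i\ket{\mathcal{I}}$ to the statement that $\mathbb{L}_i$ and $\mathbb{M}_i$ are mutually transpose in the relevant bases, and then to verify that relation by inspection of the explicit formulas.

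First I would note that in $\ket{\mathcal{I}} = \sum_{\mu} f_\mu(z)\ket{\mu}$ the operator $\mathbb{L}_i$ acts only on the polynomial factor while $\mathbb{M}_i$ acts only on the $\mathbb{V}$-factor. Since $\{f_\nu(z)\}$ satisfies the exchange relations \eqref{localtKZ} and \eqref{other_case}, equation \eqref{admissible_act} shows that this basis is admissible in the sense of Definition \ref{def:admiss}, that is $\mathbb{L}_i f_\mu(z) = \sum_{\nu} \ell_i(\nu,\mu) f_\nu(z)$ with $\ell_i$ given by \eqref{pos}--\eqref{diag}, while by construction $\mathbb{M}_i \ket{\mu} = \sum_{\nu} m_i(\nu,\mu)\ket{\nu}$ with $m_i$ given by \eqref{pos-back}--\eqref{diag-back}. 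Thus $\mathbb{L}_i$ and $\mathbb{M}_i$ are precisely operators of the form treated in Proposition \ref{lem1}, with $\mathbb{A}$ the span of $\{f_\nu(z)\}$, $\mathbb{B} = \mathbb{V}$, and $\psi(\nu,\mu) = \delta_{\nu,\mu}$. Proposition \ref{lem1} then says that the desired identity is equivalent to the functional duality \eqref{func-dual} for this $\psi$, which after collapsing the Kronecker deltas becomes the pointwise identity $\ell_i(\nu,\mu) = m_i(\mu,\nu)$ for all compositions $\nu,\mu$. (Equivalently, one may just expand both $\mathbb{L}_i\ket{\mathcal{I}}$ and $\mathbb{M}_i\ket{\mathcal{I}}$ on the basis $\{f_\mu(z)\ket{\mu}\}$ and compare the coefficient of each $\ket{\mu}$, exactly as in the proof of Proposition \ref{lem1}; it comes to the same thing.)

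It then remains to check $\ell_i(\nu,\mu) = m_i(\mu,\nu)$. Both operators act non-trivially only on sites $i,i+1$, so the only nonzero situations are $\mu = \nu$ and $\mu = s_i\nu$ with $\nu_i \neq \nu_{i+1}$; when $\nu_i = \nu_{i+1}$ both sides vanish, consistent with $\mathbb{L}_i$ annihilating any polynomial symmetric in $(z_i,z_{i+1})$. For $\mu = \nu$ the equality $\ell_i(\nu,\nu) = m_i(\nu,\nu)$ is immediate from comparing \eqref{diag} with \eqref{diag-back}. For $\mu = s_i\nu$ with $\nu_i > \nu_{i+1}$ one has $\ell_i(\nu, s_i\nu) = t$ by \eqref{pos}, whereas $s_i\nu$ satisfies $(s_i\nu)_i < (s_i\nu)_{i+1}$, so $m_i(s_i\nu,\nu) = t$ by \eqref{pos-back}; the case $\nu_i < \nu_{i+1}$ is the mirror image and gives $1 = 1$. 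This exhausts all cases.

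I do not expect a genuine obstacle here; the computation is short. The one point to keep straight is the transpose bookkeeping — that $\mathbb{L}_i$ is built to have vanishing row sums while $\mathbb{M}_i$ has vanishing column sums, and that the hopping rates $t$ and $1$ are interchanged between the forward and reverse generators precisely so that the resulting matrices become mutual transposes, which is exactly what the identity $\ell_i(\nu,\mu) = m_i(\mu,\nu)$ records.
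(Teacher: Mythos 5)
Your proof is correct and follows essentially the same route as the paper: both expand $\mathbb{L}_i\ket{\mathcal{I}}$ and $\mathbb{M}_i\ket{\mathcal{I}}$ on the basis $\{f_\nu(z)\ket{\mu}\}$ via Proposition \ref{lem1} and reduce to a pointwise identity on the rates, which for $\psi=\delta_{\nu,\mu}$ is exactly your transpose relation $\ell_i(\nu,\mu)=m_i(\mu,\nu)$. The paper records this same check in the $\theta_i$ notation of \eqref{eq:Lpsi}--\eqref{eq:Mpsi}; your case analysis from \eqref{pos}--\eqref{diag-back} is an equivalent and equally complete verification.
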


\begin{proof}
Writing $ \ket{\Psi} = \sum_{\mu} \sum_{\nu} \psi(\nu,\mu) f_{\nu} \ket{\mu}$, the polynomial part of the action is calculated using \eqref{admissible_act}. For any $1 \leq i \leq n-1$, we obtain
\begin{align}
\label{eq:a}
\mathbb{L}_i \ket{\Psi}
=
\sum_{\mu}
\sum_{\nu}
\psi(\nu,\mu)
\Big( t^{\theta_i(s_i\nu)}f_{s_i\nu}  - t^{\theta_i(\nu)}f_{\nu}  \Big)
\ket{\mu}
=
\sum_{\mu} \sum_{\nu} L_i\left[ \psi(\cdot,\mu)\right](\nu) f_{\nu}
\ket{\mu},
\end{align}
where in the final summation
\begin{align}
L_i\left[ \psi(\cdot,\mu)\right](\nu)=  t^{\theta_i(\nu)} \Big(\psi(s_i\nu,\mu)  - \psi(\nu,\mu)  \Big).
\label{eq:Lpsi}
\end{align}
In a similar way, the action of $\mathbb{M}_i$ gives
\begin{align}
\label{eq:b}
\mathbb{M}_i \ket{\Psi}
=
\sum_{\mu}
\sum_{\nu}
\psi(\nu,\mu)
t^{\theta_i(\mu)}
\Big( \ket{s_i \mu} - \ket{\mu} \Big)
f_{\nu}
=
\sum_{\mu}
\sum_{\nu}
M_i[\psi(\nu,\cdot)](\mu)
f_{\nu}
\ket{\mu},
\end{align}
where
\begin{align}
M_i \left[ \psi(\nu,\cdot)\right](\mu) = \Big( t^{\theta_i(s_i \mu)} \psi(\nu, s_i\mu)  -  t^{\theta_i(\mu)} \psi(\nu,\mu)  \Big).
\label{eq:Mpsi}
\end{align}
The equality of $\eqref{eq:Lpsi}$ and $\eqref{eq:Mpsi}$ is manifest when $\psi(\nu,\mu) = \delta_{\nu,\mu}$. We conclude that \eqref{eq:a} and \eqref{eq:b} are equal when $\ket{\Psi} = \ket{\mathcal{I}}$.
\end{proof}

\begin{remark}{\rm
The exchange relations \eqref{localtKZ} are also known as the {\it $t$KZ exchange equations.} They more commonly appear in the literature in terms of a stochastic higher-rank R-matrix, see \eg\ \cite{CantiniGW}.  For example, in the case $r=1$ the exchange relations \eqref{localtKZ}, and hence the duality described in Proposition~\ref{asepdual}, are recovered as the components of the equation
\begin{align}
\label{tKZ-traditional}
s_i \ket{\mathcal{I}} = \check{R}_i(z_i/z_{i+1}) \ket{\mathcal{I}},
\quad
\text{for all}\ \ i \in \mathbb{Z},
\end{align}
where $\check{R}_i(z_i/z_{i+1})$ is the R-matrix of the stochastic six-vertex model:
\be
\check{R}_i(z)=
\begin{pmatrix}
1 & 0 & 0 & 0
\\
0 & c_-(z) &  b_+(z)  & 0
\\
0 & b_-(z)  & c_+(z) & 0
\\
0 & 0 & 0 & 1
\end{pmatrix}_{i,i+1}
\ee
with
\begin{align}
b^+(z) = \displaystyle t \left( \frac{1-z}{1-tz} \right), \quad
b^-(z)  = \displaystyle \frac{1-z}{1-tz}, \quad
c^+(z) & =1-b^+(z), \quad
c^-(z) =1-b^-(z).
\end{align}
It is a simple exercise to show that \eqref{tKZ-traditional} can be cast in the form $\mathbb{L}_i \ket{\mathcal{I}} = \mathbb{M}_i \ket{\mathcal{I}}$, with $\mathbb{L}_i$ given by \eqref{asepL} and $\mathbb{M}_i$ by \eqref{asepM}. This constitutes the two equivalent forms of the $t$KZ equations, as advertised in Section \ref{se:tKZ-source}.
}
\end{remark}

In the rest of the paper we seek to go beyond the diagonal observable in Proposition \ref{asepdual}, with the aim of finding non-trivial mASEP duality functions. In order to do that, we will make contact with a particular family of polynomials $f_{\nu}$ obeying the relations \eqref{localtKZ}. This takes us on a brief detour through non-symmetric Macdonald theory.

\subsection{Non-symmetric Macdonald polynomials}

Consider polynomials in $\mathbb{C}_{q,t}[z_1,\dots,z_n]$ which are indexed by finite compositions $(\mu_1,\dots,\mu_n)$, where $t$ is (as before) related to the hopping rate in ASEP and $q$ is a new parameter. A well studied basis for $\mathbb{C}_{q,t}[z_1,\dots,z_n]$ is the basis of {\it non-symmetric Macdonald polynomials} \cite{chereda,cheredb,opdam}. Let us recall some facts about them.

Extend the Hecke algebra generated by $\{T_1,\dots,T_{n-1}\}$ and their inverses by a generator $\omega$ which acts cyclically on polynomials in $\mathbb{C}_{q,t}[z_1,\dots,z_n]$:
\begin{align}
\label{eq:omega}
(\omega g)(z_1,\ldots,z_n) &:= g(qz_n,z_1,\ldots,z_{n-1}).
\end{align}
The resulting algebraic structure is the affine Hecke algebra of type $A_{n-1}$. It has an Abelian subalgebra generated by the Cherednik--Dunkl operators $Y_i$ \cite{chered}, where
\be
Y_i := T_i\cdots T_{n-1} \omega T_{1}^{-1} \cdots T_{i-1}^{-1}.
\label{eq:Yi}
\ee
These operators mutually commute and can be jointly diagonalized. The non-symmetric Macdonald polynomials $E_{\mu} \equiv E_{\mu}(z_1,\dots,z_n;q,t)$ are the unique family of polynomials which satisfy
\begin{align}
\label{eq:monic}
E_{\mu} &= z^{\mu} + \sum_{\nu \prec \mu} c_{\mu,\nu}(q,t) z^{\nu},
\quad
c_{\mu,\nu}(q,t)
\in
\mathbb{Q}(q,t),
\\
Y_iE_{\mu} &= y_i(\mu;q,t)E_{\mu},
\quad
\forall\ 1 \leq i \leq n, \quad \mu \in \mathbb{Z}_{\geq 0}^n,
\label{eq:eigYi}
\end{align}
with eigenvalues given by
\be
\label{rhomu}
y_i(\mu;q,t)= q^{\mu_i} t^{\rho(\mu)_i+n-i+1},
\quad
\rho(\mu)= - w_{\mu}\cdot(1,2,\dots,n),
\ee
and $w_{\mu} \in S_n$ the minimal length permutation such that $\mu=w_{\mu}\cdot\mu^+$.

\begin{prop}
Let $\mu$ be any composition such that $\mu_i < \mu_{i+1}$. The non-symmetric Macdonald polynomials have the following recursive property:
\begin{align}
\label{E_exch}
E_{s_i \mu}
=
t^{-1}
\left( T_i + \frac{1-t}{1- y_{i+1}(\mu)/y_i(\mu)} \right)
E_{\mu},
\end{align}
where we abbreviate the eigenvalues \eqref{rhomu} by $y_i(\mu;q,t) \equiv y_i(\mu)$, and where we use $s_i \mu$ to denote the exchange of the parts $\mu_i$ and $\mu_{i+1}$, \ie\ $s_i \mu = (\mu_1,\dots,\mu_{i+1},\mu_i,\dots,\mu_n)$.
\end{prop}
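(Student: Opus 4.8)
The plan is to recognise the operator on the right-hand side of \eqref{E_exch} as an intertwiner for the Cherednik--Dunkl operators $Y_1,\dots,Y_n$, and then to pin down the overall scalar by a leading-monomial computation. First I would record the commutation relations between the Hecke generator $T_i$ and the $Y_j$: namely that $T_i$ commutes with $Y_j$ for $j\notin\{i,i+1\}$, that the symmetric combination $Y_iY_{i+1}$ commutes with $T_i$, and a Bernstein--Lusztig relation expressing $T_iY_i$ (equivalently $T_iY_{i+1}$) in terms of $Y_{i+1}T_i$ (resp.\ $Y_iT_i$) together with a correction proportional to $Y_i$. These follow from the definition \eqref{eq:Yi} of $Y_i$, the Hecke relations \eqref{eq:Hecke}, the identity $T_j^{-1}=t^{-1}T_j+t^{-1}-1$, and the crossing relation $\omega T_j=T_{j+1}\omega$ valid for $1\leq j\leq n-2$.

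Next, let $\tau_i$ denote the operator $T_i+\dfrac{1-t}{1-y_{i+1}(\mu)/y_i(\mu)}$ appearing in \eqref{E_exch}, with the scalar evaluated on the $Y$-eigenvalues of $E_\mu$ (this is well defined since $\mu_i\neq\mu_{i+1}$ and $q$ is generic, so $1-y_{i+1}(\mu)/y_i(\mu)\neq0$). Using the relations above one checks directly that $Y_j(\tau_iE_\mu)=y_j(s_i\mu)\,\tau_iE_\mu$ for every $j$: for $j\notin\{i,i+1\}$ this is immediate, and for $j\in\{i,i+1\}$ it is the content of the Bernstein--Lusztig relation, the precise choice of scalar in $\tau_i$ being exactly what cancels the inhomogeneous $Y_i$-term it produces. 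This step also needs the eigenvalue lemma that $w_{s_i\mu}=s_iw_\mu$ when $\mu_i<\mu_{i+1}$, whence $\rho(s_i\mu)=s_i\cdot\rho(\mu)$ and so the tuple $(y_1(s_i\mu),\dots,y_n(s_i\mu))$ produced by \eqref{rhomu} is exactly the one appearing above. Since for generic $(q,t)$ distinct compositions have distinct eigenvalue tuples, the joint $Y$-spectrum on $\mathbb{C}_{q,t}[z_1,\dots,z_n]$ is simple, and therefore $\tau_iE_\mu=c\,E_{s_i\mu}$ for some $c\in\mathbb{Q}(q,t)$.

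It remains to identify $c$, which I would do by comparing the coefficient of the monomial $z^{s_i\mu}$ on the two sides. Because $\mu_i<\mu_{i+1}$ we have $s_i\mu\succ\mu$, so $z^{s_i\mu}$ does not occur in $E_\mu=z^\mu+\sum_{\nu\prec\mu}c_{\mu\nu}z^\nu$, and hence the scalar summand of $\tau_i$ contributes nothing to that coefficient. A short computation using $T_i=t-\big(\tfrac{tz_i-z_{i+1}}{z_i-z_{i+1}}\big)(1-s_i)$ gives $T_iz^\mu=(t-1)z^\mu+t\,z^{s_i\mu}+(\text{terms}\prec\mu)$ --- this is just the leading-order part of \eqref{other_case} --- and the same computation shows that no $z^\nu$ with $\nu\prec\mu$ can contribute $z^{s_i\mu}$ under $T_i$ (a short majorisation argument on the two parts in positions $i$ and $i+1$). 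Hence the coefficient of $z^{s_i\mu}$ in $\tau_iE_\mu$ equals $t$, while in $c\,E_{s_i\mu}$ it equals $c$ since $E_{s_i\mu}$ is monic; therefore $c=t$ and $E_{s_i\mu}=t^{-1}\tau_iE_\mu$, which is \eqref{E_exch}.

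The main obstacle is the first step: establishing the Bernstein--Lusztig relations, and with them the exact scalar in $\tau_i$, correctly in this normalisation. All the $t$-powers in \eqref{E_exch} ultimately come from there, and the bookkeeping through the $\omega$-conjugations --- in particular keeping careful track of where $T_j^{-1}$ rather than $T_j$ enters in \eqref{eq:Yi} --- is where the care is required; once those relations are in hand, the intertwining check and the leading-coefficient match are routine given the spectral simplicity of the $Y_i$.
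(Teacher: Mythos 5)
Your strategy --- intertwiner plus spectral simplicity plus a leading-coefficient match --- is the standard proof of this recursion, and it is a genuine argument where the paper offers none: the paper's ``proof'' is a one-line citation to \cite{knop,lascouxa,lascouxb,sahi}. So in principle your proposal adds real content. However, the one step you yourself flag as delicate (getting the scalar in $\tau_i$ exactly right in this normalisation) is precisely where your write-up, taken literally, fails. From $Y_i=T_iY_{i+1}T_i$ (immediate from \eqref{eq:Yi}) and the quadratic relation one gets $Y_iT_i=tT_iY_{i+1}+(t-1)Y_i$, and the cancellation of the inhomogeneous term forces
\begin{align*}
c=\frac{(1-t)\,y_i(\mu)}{y_i(\mu)-t\,y_{i+1}(\mu)}=\frac{1-t}{1-t\,y_{i+1}(\mu)/y_i(\mu)},
\end{align*}
with resulting eigenvalues $Y_i\mapsto t\,y_{i+1}(\mu)$ and $Y_{i+1}\mapsto t^{-1}y_i(\mu)$ --- \emph{not} $\frac{1-t}{1-y_{i+1}(\mu)/y_i(\mu)}$ and not a literal swap of the tuple. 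The reason is that the normalisation \eqref{rhomu} carries the position-dependent factor $t^{n-i+1}$, so $w_{s_i\mu}=s_iw_\mu$ gives $y_i(s_i\mu)=t\,y_{i+1}(\mu)$ and $y_{i+1}(s_i\mu)=t^{-1}y_i(\mu)$; your claim that ``the tuple $(y_1(s_i\mu),\dots,y_n(s_i\mu))$ is exactly the one appearing above'' is off by these $t$-factors, and the same $t$ propagates into the scalar. A two-variable check makes this concrete: for $\mu=(0,1)$ one has $E_{(0,1)}=z_2$, $y_1=1$, $y_2=q$, and $E_{(1,0)}=z_1+\frac{q(t-1)}{qt-1}z_2$, whereas $t^{-1}\bigl(T_1+\frac{1-t}{1-q}\bigr)z_2=z_1+\frac{q(t-1)}{t(q-1)}z_2$, which is wrong; the corrected scalar $\frac{1-t}{1-qt}$ gives the right answer. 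In other words \eqref{E_exch} as printed is consistent only with the \emph{unshifted} eigenvalues $q^{\mu_i}t^{\rho(\mu)_i}$ (the version actually used in \eqref{eq:munu}); this is best read as a normalisation slip in the paper, but a proof that asserts the printed scalar ``is exactly what cancels the inhomogeneous term'' without performing the computation is incorrect as written. The remaining steps (spectral simplicity for generic $(q,t)$, $T_iz^\mu=t\,z^{s_i\mu}+(t-1)z^\mu+(\text{lower})$, triangularity on lower monomials, hence $c=t$ and the overall prefactor $t^{-1}$) are fine.
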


\begin{proof}
This is a standard fact in the theory, see \cite{knop,lascouxa,lascouxb,sahi}.
\end{proof}

The non-symmetric Macdonald polynomials are meromorphic functions of the parameter $q$. Their singularities occur at points of the form $q=t^{-m}$, where $m \in \mathbb{Q}_{>0}$.\footnote{More precisely, $E_{\mu}$ may possess poles at $q=\exp(2\pi\i k/\ell) t^{-m/\ell}$ for $\ell,m \in \mathbb{N}$ and $0 \leq k \leq \ell-1$. We always focus on singular values of $q$ for which $k=0$.} These singularities play a key role in this work, so we give some results which elucidate their structure. The starting point is the following observation from \cite{Kasatani}:
\begin{prop}
\label{prop:poles_Emu}
Define a generating series $Y(w) := \sum_{i=1}^{n} Y_i w^i$ of the Cherednik--Dunkl operators, and a further generating series $y_{\mu}(w) := \sum_{i=1}^{n} y_i(\mu;q,t) w^i$ of their eigenvalues. For any composition $\mu$, we have
\begin{align}
\label{eq:poles_Emu}
E_{\mu}(z;q,t)
=
\prod_{\nu \prec \mu}
\frac{Y(w)-y_{\nu}(w)}{y_{\mu}(w)-y_{\nu}(w)}
\cdot
z^{\mu},
\end{align}
where the product is taken over all compositions $\nu$ which are smaller than $\mu$ with respect to the ordering $\prec$.
\end{prop}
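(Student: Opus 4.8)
The plan is to prove the identity \eqref{eq:poles_Emu} by induction on the composition $\mu$ with respect to the order $\prec$. The base case is the $\prec$-minimal composition (an anti-partition of the appropriate content, or more simply $\mu=0$ within each sector of fixed $\mu^+$), where the product on the right-hand side is empty and the statement reduces to $E_0 = z^0 = 1$, which is immediate. For the inductive step, I would argue that it suffices to treat two kinds of elementary moves that generate all compositions from the minimal one: the raising operation induced by $\omega$ (or by a cyclic shift composed with a shift of one part), and the transposition $\mu \mapsto s_i\mu$ when $\mu_i < \mu_{i+1}$, which strictly increases with respect to $\prec$ while keeping $\mu^+$ fixed. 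The key algebraic input for the latter is the recursion \eqref{E_exch}, and for the former the analogous $\omega$-recursion in the affine Hecke algebra.

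The core of the argument is the following observation. The operator $\prod_{\nu\prec\mu}\frac{Y(w)-y_\nu(w)}{y_\mu(w)-y_\nu(w)}$ should be understood as a projector built out of the commuting family $\{Y_i\}$: acting on $z^\mu$, the numerator $\prod_{\nu\prec\mu}(Y(w)-y_\nu(w))$ annihilates every joint $Y$-eigenvector $E_\nu$ with $\nu\prec\mu$ (for generic $w$, since $y_\nu(w)\ne y_\mu(w)$ as formal series in $w$ exactly when the eigenvalue tuples differ), while it acts on $E_\mu$ itself — the leading term of $z^\mu$ in the triangular expansion \eqref{eq:monic} — by the nonzero scalar $\prod_{\nu\prec\mu}(y_\mu(w)-y_\nu(w))$. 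Since $z^\mu = E_\mu + \sum_{\nu\prec\mu} c_{\mu,\nu} E_\nu$ by \eqref{eq:monic}, the numerator sends $z^\mu$ to $\big(\prod_{\nu\prec\mu}(y_\mu(w)-y_\nu(w))\big)E_\mu$, and dividing by that same scalar yields $E_\mu$. I would make this rigorous by carefully checking that $y_\mu(w) = y_\nu(w)$ as rational functions of $w$ forces $y_i(\mu;q,t)=y_i(\nu;q,t)$ for all $i$, hence $\mu=\nu$ by the fact that the eigenvalue tuple \eqref{rhomu} determines the composition (this is where one uses that $E_\mu$ are characterized as joint eigenvectors with distinct spectra).

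Alternatively, and perhaps more in the spirit of the surrounding text, one can derive \eqref{eq:poles_Emu} directly from \eqref{E_exch} by iterating: each application of the intertwiner $t^{-1}\big(T_i+\tfrac{1-t}{1-y_{i+1}(\mu)/y_i(\mu)}\big)$ can be re-expressed, using the fact that $T_i$ together with the $Y_j$ generates the affine Hecke algebra and that the intertwiners satisfy the braid relations, as multiplication by a ratio of the form $\frac{Y(w)-y_\bullet(w)}{y_\bullet(w)-y_\bullet(w)}$ acting on the relevant lower polynomial; telescoping these over a reduced path from the minimal composition to $\mu$ collapses the repeated factors and produces exactly the product over $\{\nu\prec\mu\}$. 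Either route reduces to the same combinatorial bookkeeping.

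The main obstacle I anticipate is not the algebraic identity itself but the bookkeeping of the index set: one must verify that the multiset of eigenvalue-tuples $\{y_\nu(w): \nu\prec\mu\}$ appearing in the product is precisely right — no repetitions, no omissions — so that the numerator's action on the triangular expansion \eqref{eq:monic} of $z^\mu$ kills all lower terms and only the $E_\mu$ component survives. This requires knowing that the set of compositions $\nu$ with $\nu\prec\mu$ is exactly the set of $\nu$ for which $c_{\mu,\nu}(q,t)$ can be nonzero, which is \eqref{eq:monic}, and that the eigenvalue map $\nu\mapsto (y_i(\nu;q,t))_{i=1}^n$ is injective on this set for generic $(q,t)$. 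Granting those two facts — both standard in Cherednik theory and consistent with the conventions already fixed in the excerpt — the proof is a clean application of the spectral projector idea, and I would present it in that form, relegating the injectivity of the eigenvalue map to a short lemma or a citation to \cite{chered,Macda}.
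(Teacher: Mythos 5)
Your second paragraph is precisely the paper's proof: expand $z^{\mu}$ via the monic triangularity \eqref{eq:monic} and apply the commuting product $\prod_{\nu\prec\mu}(Y(w)-y_{\nu}(w))(y_{\mu}(w)-y_{\nu}(w))^{-1}$, which annihilates each lower $E_{\nu}$ and fixes $E_{\mu}$ — the paper explicitly calls this Lagrange interpolation, and it addresses your injectivity worry later (generic $(q,t)$ forces $y_{\nu}(w)\neq y_{\mu}(w)$ for $\nu\neq\mu$). The inductive scaffolding and the intertwiner route in your first and third paragraphs are unnecessary; the projector argument is already complete on its own.
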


\begin{proof}
By the monicity \eqref{eq:monic} of the non-symmetric Macdonald polynomials, we are able to write
\begin{align}
\label{eq:zmu}
z^{\mu} = E_{\mu} + \sum_{\nu \prec \mu} d_{\mu,\nu}(q,t) E_{\nu},
\end{align}
for some coefficients $d_{\mu,\nu}(q,t) \in \mathbb{Q}(q,t)$. We then act on this equation with the product of operators $\prod_{\nu \prec \mu} (Y(w) - y_{\nu}(w))(y_{\mu}(w)-y_{\nu}(w))^{-1}$. In view of the eigenvalue relations \eqref{eq:eigYi}, all polynomials $E_{\nu}$ with $\nu \prec \mu$ vanish under this operation, while $E_{\mu}$ is mapped to itself. Equation \eqref{eq:poles_Emu} follows immediately.
\end{proof}

\subsection{Reduction}
Although Proposition \ref{prop:poles_Emu} is easy to prove (it can be viewed as Lagrange interpolation), a slight variation of it yields an interesting statement about the structure of the singularities in $E_{\mu}$:
\begin{prop}
\label{prop:partial_E}
Fix a positive rational number $m$, a natural number $p$ and a composition $\mu$ such that
\begin{align*}
{\rm Coeff}_p[E_{\mu},m]
:=
\lim_{q \rightarrow t^{-m}} (1-qt^{m})^p E_{\mu}(z;q,t)
\end{align*}
is well defined and is non-zero. Then one has the expansion
\begin{align}
\label{gen_exp}
{\rm Coeff}_p[E_{\mu},m]
=
\lim_{q \rightarrow t^{-m}}
(1-qt^{m})^p
\left(
\sum_{\nu \in \mathcal{E}_{\mu}}
c_{\nu}(q,t) E_{\nu}(z;q,t)
\right)
\end{align}
for some family of coefficients $c_{\nu}(q,t)$, and where the sum is over the set of compositions
\begin{align}
\label{eq:specialE}
\mathcal{E}_{\mu}
=
\left\{\nu : \nu \prec \mu,\ y_{\nu}(w) = y_{\mu}(w)\ \text{at}\ q=t^{-m} \right\}.
\end{align}
\end{prop}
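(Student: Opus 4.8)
The plan is to isolate, inside the Lagrange-interpolation formula \eqref{eq:poles_Emu}, exactly those factors that produce poles in the limit $q \to t^{-m}$. Start from the expansion \eqref{eq:zmu}, $z^{\mu} = E_{\mu} + \sum_{\nu \prec \mu} d_{\mu,\nu}(q,t) E_{\nu}$, and observe that the factor
\begin{align*}
\prod_{\nu \prec \mu} \frac{Y(w) - y_{\nu}(w)}{y_{\mu}(w) - y_{\nu}(w)}
\end{align*}
used in the proof of Proposition \ref{prop:poles_Emu} is a product of operators each of which annihilates $E_{\nu}$ for $\nu \prec \mu$ and fixes $E_{\mu}$. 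The key point is to split the index set $\{\nu \prec \mu\}$ into the subset $\mathcal{E}_{\mu}$ of \eqref{eq:specialE} — those $\nu$ for which $y_{\nu}(w) = y_{\mu}(w)$ holds identically in $w$ at $q = t^{-m}$ — and its complement $\mathcal{E}_{\mu}^{c}$. For $\nu \in \mathcal{E}_{\mu}^{c}$ the denominator $y_{\mu}(w) - y_{\nu}(w)$ does not vanish identically at $q = t^{-m}$, so the corresponding operator factor $(Y(w) - y_{\nu}(w))(y_{\mu}(w) - y_{\nu}(w))^{-1}$ is regular at $q = t^{-m}$; only the factors indexed by $\nu \in \mathcal{E}_{\mu}$ can contribute poles.

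First I would act on \eqref{eq:zmu} with only the regular part $P^{c}(w) := \prod_{\nu \in \mathcal{E}_{\mu}^{c}} (Y(w) - y_{\nu}(w))(y_{\mu}(w) - y_{\nu}(w))^{-1}$. Since each factor kills the corresponding $E_{\nu}$ and acts diagonally on every other $E_{\rho}$, the result is
\begin{align*}
P^{c}(w) z^{\mu}
=
E_{\mu}
+
\sum_{\nu \in \mathcal{E}_{\mu}}
d_{\mu,\nu}(q,t)
\prod_{\rho \in \mathcal{E}_{\mu}^{c}}
\frac{y_{\nu}(w) - y_{\rho}(w)}{y_{\mu}(w) - y_{\rho}(w)}
E_{\nu},
\end{align*}
and the left-hand side is manifestly regular at $q = t^{-m}$ because $z^{\mu}$ is $q$-independent and $P^{c}(w)$ is regular there. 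Rearranging gives $E_{\mu} = P^{c}(w) z^{\mu} - \sum_{\nu \in \mathcal{E}_{\mu}} c_{\nu}(q,t) E_{\nu}$ with $c_{\nu}(q,t)$ the products displayed above, which is an identity of the form \eqref{gen_exp}: multiplying through by $(1 - q t^{m})^{p}$ and taking $q \to t^{-m}$, the contribution of $P^{c}(w) z^{\mu}$ vanishes (it has no pole at $q = t^{-m}$, so $(1-qt^{m})^{p}$ times it tends to $0$), leaving precisely ${\rm Coeff}_p[E_{\mu},m] = \lim_{q \to t^{-m}} (1 - q t^{m})^{p} \sum_{\nu \in \mathcal{E}_{\mu}} c_{\nu}(q,t) E_{\nu}(z;q,t)$.

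The main obstacle is the bookkeeping around what "$y_{\nu}(w) = y_{\mu}(w)$ at $q = t^{-m}$" should mean precisely and why it is the right dichotomy: I need the denominators $y_{\mu}(w) - y_{\rho}(w)$ for $\rho \in \mathcal{E}_{\mu}^{c}$ to be genuinely nonvanishing (as elements of $\mathbb{Q}(t)[w]$) after specialization, so that $P^{c}(w)$ is a well-defined operator with no spurious zeros or poles — this uses that the eigenvalue generating series are equal as polynomials in $w$ iff all components $y_i$ agree, together with the explicit form \eqref{rhomu}. A secondary point to handle carefully is that $\mathcal{E}_{\mu}$ is still contained in $\{\nu \prec \mu\}$, so the argument is genuinely recursive in the $\prec$-order and one should note that the $E_{\nu}$ appearing on the right are themselves the Macdonald polynomials at generic $(q,t)$, whose own poles at $q = t^{-m}$ are exactly what the surviving limit records; there is no circularity because \eqref{eq:zmu} is an identity over $\mathbb{Q}(q,t)$ before any specialization. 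Everything else is the routine verification that diagonal operators in the $Y_i$ commute past each other and act as claimed on the eigenbasis, which is immediate from \eqref{eq:eigYi}.
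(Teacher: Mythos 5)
Your proposal is correct and follows essentially the same route as the paper's own proof: both act on the generic expansion \eqref{eq:zmu} with the partial Lagrange product over $\nu \prec \mu$, $\nu \notin \mathcal{E}_{\mu}$, observe that the resulting left-hand side is regular at $q = t^{-m}$ (so that multiplying by $(1-qt^{m})^{p}$ kills it in the limit), and rearrange to obtain \eqref{gen_exp}. Your version is in fact slightly more careful than the paper's in recording the diagonal eigenvalue ratios that the excluded factors contribute to the coefficients $c_{\nu}(q,t)$, but this does not change the argument since the proposition leaves those coefficients unspecified.
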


\begin{proof}
Start from the generic expansion \eqref{eq:zmu} and act on it with the product of operators $\prod_{\nu \prec \mu, \nu \not\in \mathcal{E}_{\mu}} (Y(w) - y_{\nu}(w))(y_{\mu}(w)-y_{\nu}(w))^{-1}$, \ie\ the same product as in the proof of Proposition \ref{prop:poles_Emu}, excluding compositions in the set $\mathcal{E}_{\mu}$. The result is the equation
\begin{align}
\label{partial_lagrange}
\prod_{\substack{\nu \prec \mu \\ \nu \not\in \mathcal{E}_{\mu}}}
\frac{Y(w)-y_{\nu}(w)}{y_{\mu}(w)-y_{\nu}(w)}
\cdot
z^{\mu}
=
E_{\mu}
+
\sum_{\nu \in \mathcal{E}_{\mu}}
d_{\mu,\nu}(q,t)
E_{\nu}.
\end{align}
Studying the left hand side of the expression \eqref{partial_lagrange}, we see that its singularities occur for compositions $\nu$ such that $y_{\mu}(w) = y_{\nu}(w)$, or more explicitly, compositions such that
\begin{align}
\label{eq:munu}
q^{\mu_i} t^{\rho(\mu)_i} = q^{\nu_i} t^{\rho(\nu)_i},
\quad
\forall\ 1\leq i \leq n.
\end{align}
For generic $q$ and $t$, it is obvious that \eqref{eq:munu} has no solution other than the tautological one, $\nu = \mu$. On the other hand, for $q=t^{-m}$ with $m \in \mathbb{Q}_{>0}$, non-trivial solutions of \eqref{eq:munu} become possible. Since we have demanded that all such compositions $\nu$ are excluded from the product, the left hand side of \eqref{partial_lagrange} has a well-defined limit when $q \rightarrow t^{-m}$. Multiplying both sides of \eqref{partial_lagrange} by $(1-qt^m)^p$ and sending $q \rightarrow t^{-m}$, the left hand side vanishes. After rearrangement, we recover \eqref{gen_exp}.
\end{proof}

The following theorem (for a special value of $p$) is a stronger version of Proposition \ref{prop:partial_E}, in which only a single composition in the sum \eqref{gen_exp} is retained. We were unable to locate this result anywhere in the literature.

\begin{thm}
\label{thm_egr_minor}
Fix $m,p,\mu$ as in the statement of Proposition \ref{prop:partial_E}, and assume in addition that $p=|\mathcal{E}_\mu|$, where $\mathcal{E}_\mu$ is defined in \eqref{eq:specialE}. Then there exists a unique composition $\nu$ for which
\begin{align*}
E_{\nu}(z;t^{-m},t)
:=
\lim_{q \rightarrow t^{-m}} E_{\nu}(z;q,t)
\end{align*}
is well defined and such that
\begin{align}
\label{res_thm}
{\rm Coeff}_p [E_\mu,m] \propto E_{\nu}(z;t^{-m},t).
\end{align}
\end{thm}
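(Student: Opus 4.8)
The plan is to build on Proposition~\ref{prop:partial_E} and its proof, refining the partial-Lagrange expansion \eqref{partial_lagrange} so that, under the extra hypothesis $p=|\mathcal{E}_\mu|$, the limit isolates exactly one term. The starting point is the observation that the left hand side of \eqref{partial_lagrange} is regular at $q=t^{-m}$, so the full pole structure of the right hand side is governed by the $|\mathcal{E}_\mu|$ Macdonald polynomials $E_\nu$ with $\nu\in\mathcal{E}_\mu$ (together with $E_\mu$ itself). First I would analyze the order of the pole in each coefficient $d_{\mu,\nu}(q,t)$ for $\nu\in\mathcal{E}_\mu$, and in each $E_\nu$ itself: since $\mathrm{Coeff}_p[E_\mu,m]$ is assumed to exist and be nonzero with $p=|\mathcal{E}_\mu|$, the combination $E_\mu+\sum_{\nu\in\mathcal{E}_\mu}d_{\mu,\nu}E_\nu$ has a pole of order exactly $p$ at $q=t^{-m}$. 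The idea is that the $p$ "offending" compositions in $\mathcal{E}_\mu$ can be linearly ordered with respect to $\prec$, say $\nu^{(1)}\prec\nu^{(2)}\prec\cdots\prec\nu^{(p)}\prec\mu$, and each step down the chain contributes one additional factor of $(1-qt^m)^{-1}$; the unique $\nu$ appearing in \eqref{res_thm} should be the $\prec$-minimal element $\nu^{(1)}$ of $\mathcal{E}_\mu$.

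The key steps, in order, would be: (i) use the eigenvalue description \eqref{eq:eigYi}--\eqref{rhomu} to characterize precisely which pairs $(\mu,\nu)$ have $y_\nu(w)=y_\mu(w)$ at $q=t^{-m}$ and to show that $\mathcal{E}_\mu\cup\{\mu\}$, ordered by $\prec$, is totally ordered (this is where the resonance condition $q^kt^\ell=1$ enters — collisions of eigenvalues force the parts of $\nu$ and $\mu$ to be related in a rigid way); (ii) apply the recursive relation \eqref{E_exch}, or rather the version of Proposition~\ref{prop:poles_Emu} restricted to the subproduct over $\mathcal{E}_\mu$, to track how poles propagate: acting on $z^{\nu^{(j)}}$ with the operator $\bigl(Y(w)-y_{\nu^{(j-1)}}(w)\bigr)\bigl(y_{\nu^{(j)}}(w)-y_{\nu^{(j-1)}}(w)\bigr)^{-1}$ produces a simple pole because the denominator vanishes linearly at $q=t^{-m}$, and one shows inductively that $E_{\nu^{(j)}}$ carries a pole of order exactly $j-1$; (iii) multiply \eqref{partial_lagrange} by $(1-qt^m)^p$, send $q\to t^{-m}$, and observe that only the term built from the longest chain — equivalently, the term proportional to $E_{\nu^{(1)}}(z;t^{-m},t)$ — survives, all others being killed because they are multiplied by a positive power of $(1-qt^m)$; (iv) check that $E_{\nu^{(1)}}(z;t^{-m},t)$ is itself well-defined (has no pole at $q=t^{-m}$), which follows because $\nu^{(1)}$ is $\prec$-minimal in $\mathcal{E}_\mu$, so no $\nu'\prec\nu^{(1)}$ lies in $\mathcal{E}_{\nu^{(1)}}$; (v) argue uniqueness: any two compositions satisfying \eqref{res_thm} would have proportional (hence equal up to scalar) Macdonald polynomials at $q=t^{-m}$, which by the leading-monomial normalization \eqref{eq:monic} forces them to coincide.

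The main obstacle I expect is step (ii) — making rigorous the claim that poles accumulate by exactly one order at each step along the $\prec$-chain in $\mathcal{E}_\mu$, and in particular ruling out accidental cancellations or higher-order vanishing of the eigenvalue differences $y_{\nu^{(j)}}(w)-y_{\nu^{(j-1)}}(w)$ at $q=t^{-m}$. Controlling this requires a careful combinatorial analysis of the eigenvalue tuples $\bigl(q^{\mu_i}t^{\rho(\mu)_i}\bigr)_{i=1}^n$ at the resonance point, showing that the generating-series equality $y_\nu(w)=y_\mu(w)$ forces the vanishing of each denominator to be of order exactly one (not higher), and that the total pole order of the subproduct over $\mathcal{E}_\mu$ acting on $z^\mu$ is therefore exactly $p=|\mathcal{E}_\mu|$, matching the hypothesis. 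A secondary subtlety is confirming that the structure constants $d_{\mu,\nu}$ do not conspire to lower the pole order — but this is essentially handled by the hypothesis that $\mathrm{Coeff}_p[E_\mu,m]$ is nonzero, which pins down the pole order of the whole right-hand side of \eqref{partial_lagrange} and lets one read off that the surviving limit is a nonzero multiple of $E_{\nu^{(1)}}(z;t^{-m},t)$.
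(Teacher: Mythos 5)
You have identified the correct answer ($\nu$ is the $\prec$-minimal element of $\mathcal{E}_\mu$) and a reasonable starting point (the partial Lagrange expansion \eqref{partial_lagrange}), but your decisive step (iii) has a genuine gap. Your mechanism for isolating a single term is to assign pole orders to the individual summands $d_{\mu,\nu^{(j)}}E_{\nu^{(j)}}$: you posit that $E_{\nu^{(j)}}$ has a pole of order exactly $j-1$ and that only the $j=1$ term survives multiplication by $(1-qt^m)^p$. For the $j\geq 2$ terms to die you would also need $\operatorname{ord}\bigl(d_{\mu,\nu^{(j)}}\bigr)<p-j+1$, and nothing in the setup controls the pole orders of the transition coefficients $d_{\mu,\nu}$ between the monomial and Macdonald bases. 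The hypothesis that ${\rm Coeff}_p[E_\mu,m]$ is nonzero pins down the pole order of the \emph{sum}, not of the individual summands, several of which could reach total order $p$ and combine; you would then still have to show the surviving combination is proportional to a single well-defined $E_\nu(z;t^{-m},t)$. You flag this yourself as the ``main obstacle,'' but the proposal offers no way to close it, and proving the required estimates on the $d_{\mu,\nu}$ directly looks harder than the theorem.

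The paper's proof avoids this bookkeeping entirely by working with the interpolation operator formula \eqref{eq:poles_Emu} applied to $z^\mu$, rather than with the basis expansion. The only analytic input is \eqref{eq:proj}: $(Y(w)-y_\kappa(w))z^\kappa$ is a combination of strictly $\prec$-smaller monomials with coefficients \emph{polynomial} in $q$, hence regular at $q=t^{-m}$. Multiplying \eqref{eq:poles_Emu} by $(1-qt^m)^p$ cancels exactly the $p$ singular denominators $(y_\mu(w)-y_{\nu[i]}(w))^{-1}$; at $q=t^{-m}$ the retained numerators $(Y(w)-y_{\nu[i]}(w))$ coincide along the chain with the lowering operators of \eqref{eq:proj}, so iterating drives $z^\mu$ down to $z^{\nu[1]}$ plus $\prec$-lower terms, and the remaining (regular) interpolation factors then project this onto $E_{\nu[1]}(z;t^{-m},t)$ by Proposition \ref{prop:poles_Emu}. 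This realizes your intuition that ``each step down the chain contributes one factor,'' but at the level of a product of operators acting on a single monomial, where no cancellation between separate summands can occur; I recommend restructuring your argument in that form.
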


\begin{proof}
We start from the expression \eqref{eq:poles_Emu} for $E_{\mu}$ and assume there are exactly $p$ solutions of \eqref{eq:munu}, meaning that the cardinality of $\mathcal{E}_{\mu}$ is equal to $p$. Call these solutions $\nu[1],\dots,\nu[p]$ and assume that they have the ordering $\nu[1] \prec \cdots \prec \nu[p]$. Then by direct calculation on \eqref{eq:poles_Emu}, we have
\begin{align}
\label{eq:chain}
{\rm Coeff}_p[E_{\mu},m]
\propto
\left[
\prod_{\substack{\kappa \prec \mu \\ \kappa \not\in \mathcal{E}_{\mu}}}
\frac{Y(w)-y_{\kappa}(w)}{y_{\mu}(w)-y_{\kappa}(w)}
\cdot
\prod_{i=1}^{p}
(Y(w) - y_{\nu[i]}(w))
\cdot
z^{\mu}
\right]_{q=t^{-m}}
\end{align}
where we suppress the proportionality factors which arise in taking this limit. There cannot be any singularities on the right hand side of \eqref{eq:chain}, since $\nu[1],\dots,\nu[p]$ are the only compositions for which \eqref{eq:munu} holds, so the specialization $q=t^{-m}$ can be freely taken.

For generic $q$, it is an easy consequence of \eqref{eq:monic}, \eqref{eq:eigYi} and \eqref{eq:zmu} in combination that
\begin{align}
\label{eq:proj}
(Y(w)-y_{\mu}(w)) z^{\mu}
=
\sum_{\nu \prec \mu}
e_{\mu,\nu}(q,t;w) z^{\nu},
\end{align}
where the sum on the right hand side is over compositions $\nu$ which are strictly less than $\mu$ with respect to the ordering $\prec$, for some coefficients $e_{\mu,\nu}(q,t;w)$ which are polynomial in $q$. The polynomiality of the coefficients is ensured by \eqref{eq:omega} and \eqref{eq:Yi}. This equation therefore extends to specializations $q=t^{-m}$. Equation \eqref{eq:chain} can now be further simplified, by the following iterative procedure. Since $ y_{\nu[p]}(w) = y_\mu(w)$ at $q=t^{-m}$, by repeated use of \eqref{eq:proj} we see that
\begin{align*}
\left[
\prod_{\substack{\nu[p] \prec \kappa \prec \mu }}
\frac{Y(w)-y_{\kappa}(w)}{y_{\mu}(w)-y_{\kappa}(w)}
\cdot
(Y(w) - y_{\nu[p]}(w))
\cdot
z^{\mu}
\right]_{q=t^{-m}}
\propto
\left(
z^{\nu[p]}
+
\sum_{\nu \prec \nu[p]}
g_{\nu}(t;w)
z^{\nu}\right),
\end{align*}
for appropriate coefficients $g_{\nu}(t;w)$; \ie\ starting from the monomial $z^{\mu}$, it is successively lowered to monomials $z^{\kappa}$ which are smaller in the $\prec$ ordering, until we arrive at $z^{\nu[p]}$. We can then repeat this process, using the fact that $y_{\nu[i-1]}(w) = y_{\nu[i]}(w)$ at $q=t^{-m}$, for all $1 < i \leq p$. We arrive ultimately at the expression
\begin{align*}
{\rm Coeff}_p[E_{\mu},m]
\propto
\left[
\prod_{\substack{\kappa \prec \nu[1]}}
\frac{Y(w)-y_{\kappa}(w)}{y_{\mu}(w)-y_{\kappa}(w)}
\cdot
\left( z^{\nu[1]} + \sum_{\nu \prec \nu[1]} h_{\nu}(t;w) z^{\nu} \right)
\right]_{q=t^{-m}}
\end{align*}
for some coefficients $h_{\nu}(t;w)$, and note that all sub-leading terms in the sum vanish under the product of operators, by exactly the same filtering argument used above. We have thus shown that
\begin{align*}
{\rm Coeff}_p[E_{\mu},m]
\propto
\left[
\prod_{\substack{\kappa \prec \nu[1]}}
\frac{Y(w)-y_{\kappa}(w)}{y_{\nu[1]}(w)-y_{\kappa}(w)}
\cdot
z^{\nu[1]}
\right]_{q=t^{-m}}
=
E_{\nu[1]}(z;t^{-m},t),
\end{align*}
establishing both the existence and uniqueness claim.
\end{proof}
Notice that this procedure specifies the $\nu$ appearing in \eqref{res_thm} as the minimal composition (with respect to $\prec$) which satisfies \eqref{eq:munu} at $q=t^{-m}$. It does not, however, give $\nu$ constructively: one still needs to do the work of finding solutions of \eqref{eq:munu}.

Based on experimentation with the non-symmetric Macdonald polynomials we are led to make the following conjecture, generalizing Theorem~\ref{thm_egr_minor} to arbitrary values of $p$, which we were unable to prove in full generality. All of our subsequent results on duality functions can be (and are) proved independently of this conjecture, but it remains an important conceptual (if not technical) cornerstone of this work:

\begin{conj}
\label{theorema_egregium}
Fix a positive rational number $m$, a natural number $p$ and a composition $\mu$ such that ${\rm Coeff}_p[E_{\mu},m]$ is well defined and non-zero. Then there exists a unique composition $\nu$ for which
\begin{align*}
E_{\nu}(z;t^{-m},t)
:=
\lim_{q \rightarrow t^{-m}} E_{\nu}(z;q,t)
\end{align*}
is well defined and such that
\begin{align}
{\rm Coeff}_p [E_\mu,m] \propto E_{\nu}(z;t^{-m},t).
\end{align}
\end{conj}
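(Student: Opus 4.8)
\textbf{Proof proposal for Conjecture \ref{theorema_egregium}.}

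The plan is to push through the same Lagrange-interpolation argument that establishes Theorem \ref{thm_egr_minor}, but to keep track of the multiplicities of the poles as one filters away the unwanted eigenvalue factors. Recall the starting point \eqref{eq:poles_Emu}, which writes $E_\mu$ as $\prod_{\nu \prec \mu} (Y(w)-y_\nu(w))(y_\mu(w)-y_\nu(w))^{-1} \cdot z^\mu$. At $q=t^{-m}$, the denominators $y_\mu(w)-y_\nu(w)$ vanish precisely for $\nu \in \mathcal{E}_\mu$; the crucial point that the proof of Theorem \ref{thm_egr_minor} only handled implicitly is that each such denominator may vanish to some order $e_\nu \geq 1$ in $(1-qt^m)$, and $p = \sum_{\nu \in \mathcal{E}_\mu} e_\nu$ rather than $p = |\mathcal{E}_\mu|$. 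So the first step is to analyze, for each $\nu \in \mathcal{E}_\mu$, the order of vanishing of $y_\mu(w)-y_\nu(w)$ as a polynomial identity in $w$: since \eqref{eq:munu} must hold componentwise, the relevant quantity is $\min_i \mathrm{ord}_{q=t^{-m}}(q^{\mu_i} t^{\rho(\mu)_i} - q^{\nu_i} t^{\rho(\nu)_i})$, and each such factor is a monomial difference with at most a simple zero, so in fact $e_\nu = 1$ always and $p = |\mathcal{E}_\mu|$ after all. If that is correct the conjecture for $p = |\mathcal{E}_\mu|$ is exactly Theorem \ref{thm_egr_minor}, so the real content of the conjecture is the case $p < |\mathcal{E}_\mu|$, i.e. $\mathrm{Coeff}_p$ picking out an \emph{intermediate} order of pole.

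For $p < |\mathcal{E}_\mu|$, the second step is to combine Theorem \ref{thm_egr_minor}'s filtering mechanism with Proposition \ref{prop:partial_E}. From \eqref{gen_exp} we already know $\mathrm{Coeff}_p[E_\mu,m]$ is a limit of $\sum_{\nu \in \mathcal{E}_\mu} c_\nu(q,t) E_\nu$. The idea is then to iterate: among the $\nu \in \mathcal{E}_\mu$, order them $\nu[1] \prec \cdots \prec \nu[k]$ with $k = |\mathcal{E}_\mu|$, and observe that the $c_\nu(q,t)$ themselves carry poles at $q=t^{-m}$ whose orders are controlled by the sub-chain structure of $\mathcal{E}_\mu$. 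Concretely, one would show by descending induction that the top-order part of $\sum c_\nu E_\nu$ is proportional to $E_{\nu[j]}(z;t^{-m},t)$ where $j$ is determined by $p$: the highest pole $p = k$ gives $\nu[1]$ (Theorem \ref{thm_egr_minor}), and each unit decrease in $p$ moves the surviving index up the chain. The filtering argument of Theorem \ref{thm_egr_minor} — that $(Y(w)-y_{\nu[i]}(w))$ lowers $z^{\nu[i]}$ to strictly $\prec$-smaller monomials at $q=t^{-m}$, via \eqref{eq:proj} — is exactly the tool needed to kill all contributions except the one landing on $E_{\nu[j]}$.

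The hard part, and the reason this remains conjectural, is that the combinatorics of $\mathcal{E}_\mu$ under the order $\prec$ need not form a single chain, and the coefficients $c_\nu(q,t)$ in \eqref{gen_exp} are not given to us explicitly — Proposition \ref{prop:partial_E} only asserts their existence. So the induction sketched above requires genuine control of which $E_\nu$ survive with a pole of exactly order $p$, and this is where a clean general argument is elusive: one would need either an explicit formula for the $c_\nu(q,t)$ (perhaps via the $Y$-operator action on \eqref{eq:zmu} tracked with multiplicities), or a structural result saying that $\mathcal{E}_\mu$ is always totally ordered by $\prec$, which experimentation suggests but which I do not see how to derive from \eqref{eq:munu} alone. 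I would therefore first attempt to prove the total-ordering statement for $\mathcal{E}_\mu$ — if it holds, the chain argument of Theorem \ref{thm_egr_minor} extends essentially verbatim and the conjecture follows; if it fails, one needs a more delicate poset-theoretic bookkeeping of pole orders, which is likely the true obstacle.
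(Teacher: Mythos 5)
First, be aware that the statement you are addressing is labelled a \emph{conjecture} in the paper: the authors state explicitly that they were unable to prove it in full generality, establish only the special case $p=|\mathcal{E}_\mu|$ (as Theorem \ref{thm_egr_minor}), and arrange all downstream duality results so as not to depend on the general statement. There is therefore no proof in the paper to compare yours against, and your proposal --- which you candidly present as a strategy with open steps rather than a complete argument --- is consistent with the actual status of the problem. Your identification of the genuine content (the case $p<|\mathcal{E}_\mu|$), of the relevant tools (the filtering mechanism of Theorem \ref{thm_egr_minor} combined with Proposition \ref{prop:partial_E}), and of the obstacles (no explicit control of the coefficients $c_\nu(q,t)$ in \eqref{gen_exp}, and no structural result on the poset $(\mathcal{E}_\mu,\prec)$) all line up with where the real difficulty sits.

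There is, however, one concrete error in the portion of your first paragraph that you do assert. From the observation that each denominator $y_\mu(w)-y_\nu(w)$ in \eqref{eq:poles_Emu} vanishes only to first order in $(1-qt^{m})$ you conclude that ``$e_\nu=1$ always and $p=|\mathcal{E}_\mu|$ after all''. This conflates the order of vanishing of the denominator with the order of the pole of $E_\mu$ itself. The $p$ in the conjecture is by definition the exact pole order of $E_\mu$ at $q=t^{-m}$, and this can be strictly smaller than $|\mathcal{E}_\mu|$ because the numerator degenerates too: by \eqref{eq:proj}, at $q=t^{-m}$ the operators $Y(w)-y_\nu(w)$ with $\nu\in\mathcal{E}_\mu$ annihilate part of what they act on, so cancellations against the simple zeros of the denominators are to be expected. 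If your conclusion were literally correct, the hypothesis that ${\rm Coeff}_p[E_\mu,m]$ is well defined and non-zero would force $p=|\mathcal{E}_\mu|$, and the conjecture would collapse to Theorem \ref{thm_egr_minor} --- which it does not. You effectively retract the claim one sentence later by turning to $p<|\mathcal{E}_\mu|$, but as written the paragraph is internally inconsistent and the claim should simply be removed. The remainder of your plan --- a descending induction on $p$ along a linear ordering of $\mathcal{E}_\mu$, with each unit drop in $p$ moving the surviving composition up the chain --- is a plausible heuristic, but, as you acknowledge, neither the total ordering of $\mathcal{E}_\mu$ under $\prec$ nor the required control of the pole orders of the $c_\nu(q,t)$ is established; these are precisely the points at which the authors themselves were stuck, so your proposal does not close the gap.
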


\subsection{Another non-symmetric basis}

In this work we make use of a further set of non-symmetric polynomials, which also comprise a basis of $\mathbb{C}_{q,t}[z_1,\dots,z_n]$. We refer to them as {\it ASEP polynomials,} and denote them by $f_{\mu} = f_{\mu}(z_1,\dots,z_n;q,t)$. They are defined as the unique family of polynomials which satisfy
\begin{align}
\label{f_init}
f_{\delta}(z;q,t) = E_{\delta}(z;q,t),
\quad \forall\ \delta = (\delta_1 \leq \cdots \leq \delta_n),
\\
f_{s_i \mu}(z;q,t) = T^{-1}_i f_{\mu}(z;q,t),
\quad \text{when}\ \ \mu_i < \mu_{i+1},
\label{f_exch}
\end{align}
where, as before, $s_i \mu = (\mu_1,\dots,\mu_{i+1},\mu_i,\dots,\mu_n)$. Clearly by repeated use of \eqref{f_exch}, one is able to construct $f_{\mu}$ for any composition, starting from $f_{\mu^{-}} = E_{\mu^{-}}$. Furthermore, because of the Hecke algebra relations \eqref{eq:Hecke}, $f_{\mu}$ is independent of the order in which one performs the operations \eqref{f_exch}, making the definition unambiguous.

It can be shown \cite{CantiniGW,KasataniT} that the ASEP polynomials are equivalently defined as the unique monic polynomials $f_{\mu} = z^{\mu} + \sum_{\nu \prec \mu} c_{\mu,\nu}(q,t) z^{\nu}$, for some family of coefficients $c_{\mu,\nu}(q,t)$, satisfying the $t$KZ relations \eqref{localtKZ} for $1 \leq i \leq n-1$, and the cyclic boundary condition
\be
\label{cyclic}
f_{\mu_n,\mu_1,\dots,\mu_{n-1}}(qz_n,z_1,\ldots,z_{n-1};q,t) = q^{\mu_n}f_{\mu_1,\dots,\mu_n}(z_1,\dots,z_n;q,t).
\ee
In view of the discussion in Section \ref{ssec:hecke}, they are therefore fundamental in the study of duality functions for the mASEP. This is not the first time that the family $\{f_{\mu}\}$ has appeared in the context of stochastic processes: in \cite{CantiniGW} these polynomials also played the role of (inhomogeneous generalizations of) stationary state probabilities in the mASEP on a ring.

We stress that, in general, $f_{\mu} \not= E_{\mu}$; the non-symmetric Macdonald and ASEP polynomials coincide when their indexing composition is an anti-partition, but are otherwise different, which is readily apparent from their different recursive properties \eqref{E_exch} and \eqref{f_exch}. One basis can be expanded triangularly in terms of the other, however, as we now show:

\begin{defn}
A {\it composition sector} is the set of all compositions with a common anti-dominant (or dominant) ordering. If $\mu$ is a composition, the composition sector $\sigma(\mu)$ is the following set:
\begin{align*}
\sigma(\mu) := \{\nu | \nu^{-} = \mu^{-}\}.
\end{align*}
\end{defn}

\begin{prop}
For any composition $\mu$, there are unique triangular expansions
\begin{align}
\label{E_expand}
E_{\mu}(z;q,t)
&=
f_{\mu}(z;q,t)
+
\sum_{\substack{\nu \in \sigma(\mu) \\ \nu \prec \mu}}
c_{\mu,\nu}(q,t)
f_{\nu}(z;q,t),
\\
\label{f_expand}
f_{\mu}(z;q,t)
&=
E_{\mu}(z;q,t)
+
\sum_{\substack{\nu \in \sigma(\mu) \\ \nu \prec \mu}}
d_{\mu,\nu}(q,t)
E_{\nu}(z;q,t),
\end{align}
for some coefficients $c_{\mu,\nu}(q,t)$ and $d_{\mu,\nu}(q,t)$, relating the non-symmetric Macdonald and ASEP bases.
\end{prop}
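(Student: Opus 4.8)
The plan is to fix the anti-partition $\delta := \mu^{-}$, so that $\sigma(\mu) = \sigma(\delta)$, and to argue in two stages: first that the two families $\{E_{\nu} : \nu\in\sigma(\delta)\}$ and $\{f_{\nu} : \nu\in\sigma(\delta)\}$ span one and the same subspace $V_{\delta}\subseteq\mathbb{C}_{q,t}[z_1,\dots,z_n]$, and then that the change of basis between them is triangular with respect to $\prec$. For the first stage, note that by \eqref{localtKZ} and \eqref{other_case} the vector $T_i f_{\nu}$ always lies in $\mathrm{Span}\{f_{\nu},f_{s_i\nu}\}$, and $s_i\nu\in\sigma(\delta)$ whenever $\nu\in\sigma(\delta)$; hence $V_{\delta} := \mathrm{Span}\{f_{\nu} : \nu\in\sigma(\delta)\}$ is stable under the algebra $\mathcal{H}$ generated by $T_1,\dots,T_{n-1}$ as in \eqref{eq:Hecke}, and it contains $f_{\delta}=E_{\delta}$ by \eqref{f_init}.

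On the Macdonald side, iterating the intertwining relation \eqref{E_exch} upward from the anti-dominant representative shows --- this is classical, see the references cited for \eqref{E_exch} --- that every $E_{\nu}$ with $\nu\in\sigma(\delta)$ is of the form $P_{\nu}\,E_{\delta}$ for some $P_{\nu}\in\mathcal{H}$. Therefore $E_{\nu}\in\mathcal{H}\cdot E_{\delta}=\mathcal{H}\cdot f_{\delta}\subseteq V_{\delta}$ for all $\nu\in\sigma(\delta)$, so $\mathrm{Span}\{E_{\nu} : \nu\in\sigma(\delta)\}\subseteq V_{\delta}$. Since $\{E_{\nu}\}$ and $\{f_{\nu}\}$ are each a basis of $\mathbb{C}_{q,t}[z_1,\dots,z_n]$, both restricted families are linearly independent of cardinality $|\sigma(\delta)|$; as $\dim V_{\delta}=|\sigma(\delta)|$ as well, comparing dimensions forces $\mathrm{Span}\{E_{\nu} : \nu\in\sigma(\delta)\}=V_{\delta}=\mathrm{Span}\{f_{\nu} : \nu\in\sigma(\delta)\}$. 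In particular $E_{\mu}$ expands on $\{f_{\nu} : \nu\in\sigma(\mu)\}$ and $f_{\mu}$ on $\{E_{\nu} : \nu\in\sigma(\mu)\}$, with coefficients uniquely determined because each family is a basis.

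For the triangularity, I would write $E_{\mu}=\sum_{\nu\in\sigma(\mu)}a_{\nu}f_{\nu}$ and let $S=\{\nu : a_{\nu}\neq 0\}$. If $\nu_0$ is a $\prec$-maximal element of $S$, then, because $f_{\nu}=z^{\nu}+\sum_{\kappa\prec\nu}c_{\nu,\kappa}z^{\kappa}$, the monomial $z^{\nu_0}$ appears on the right only through the term $\nu=\nu_0$, so its coefficient equals $a_{\nu_0}\neq 0$; the monic expansion \eqref{eq:monic} of $E_{\mu}$ then forces $\nu_0\preceq\mu$. Since every element of the finite poset $S$ lies $\preceq$ some maximal element, this yields $\nu\preceq\mu$ for all $\nu\in S$; extracting the coefficient of $z^{\mu}$, where only $\nu=\mu$ can now contribute on the right, gives $a_{\mu}=1$. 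This is \eqref{E_expand}. The expansion \eqref{f_expand} then follows either by repeating this argument with the roles of $E$ and $f$ interchanged (using that $f_{\mu}$ is monic and each $E_{\nu}$ is monic with respect to $\prec$), or by inverting the change-of-basis matrix on $\sigma(\mu)$ just obtained, which is unitriangular with respect to any linear extension of $\prec$ and so has a unitriangular inverse.

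The step I expect to be the main obstacle is the coincidence of the two spans, and in particular the input that every $E_{\nu}$ is produced from $E_{\nu^{-}}$ by a product of intertwiners, together with the bookkeeping that identifies both $\{E_{\nu}\}$ and $\{f_{\nu}\}$ as bases of $V_{\delta}$. One must also be a little careful in the last paragraph, since $\prec$ restricts to a partial --- not a total --- order on the sector $\sigma(\mu)$, so that "maximal element of $S$" must be read in the poset sense. Once the two families are known to span the same sector subspace, the triangularity is a soft leading-term argument using nothing beyond the two monic normalisations.
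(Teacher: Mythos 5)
Your proof is correct, but it is organized quite differently from the paper's. The paper proves \eqref{E_expand} by a single induction over the sector: starting from $E_{\delta}=f_{\delta}$ it applies the intertwiner \eqref{E_exch} to the inductive hypothesis, uses \eqref{localtKZ} and \eqref{other_case} to see that $T_i$ maps $\mathrm{Span}\{f_{\nu},f_{s_i\nu}\}$ into the same sector while preserving the condition $\prec s_i\mu$, and reads off the unitriangular normalization from $t^{-1}T_if_{\mu}$; then \eqref{f_expand} follows by inverting the block-unitriangular transition matrix, exactly as in your last sentence. You instead decouple the two halves of the claim: sector confinement comes from the observation that $\{f_{\nu}:\nu\in\sigma(\delta)\}$ spans the cyclic Hecke module $\mathcal{H}\cdot E_{\delta}$ and that every $E_{\nu}$ in the sector lies in that module (plus a dimension count), and triangularity is then a soft leading-monomial comparison of two bases that are both monic and triangular with respect to $\prec$ in the monomial basis. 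The trade-off is that your triangularity step consumes, as an input, the monic expansion $f_{\mu}=z^{\mu}+\sum_{\kappa\prec\mu}c_{\mu,\kappa}z^{\kappa}$, which the paper only quotes from \cite{CantiniGW,KasataniT} as an equivalent characterization, whereas the paper's induction runs entirely off the recursive definition \eqref{f_init}--\eqref{f_exch}; in exchange, your argument is more structural, avoids verifying the small combinatorial fact that $\nu\prec\mu$ and $\mu_i<\mu_{i+1}$ imply $\nu\prec s_i\mu$ and $s_i\nu\prec s_i\mu$, and makes explicit the (correct and worth flagging) point that $\prec$ is only a partial order on the sector, so maximality must be read in the poset sense. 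Your closing concern about the coincidence of spans is well placed, but your treatment of it is sound, since the intertwiner in \eqref{E_exch} is a scalar shift of $T_i$ and hence an element of $\mathcal{H}$ over $\mathbb{Q}(q,t)$.
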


\begin{proof}
The uniqueness claim is immediate, since both families are bases for $\mathbb{C}_{q,t}[z_1,\dots,z_n;q,t]$. To prove the form of the expansion \eqref{E_expand}, we note that it holds trivially in the case where $\mu$ is an anti-partition. Based on this, assume that it holds for some composition $\mu$ such that $\mu_i < \mu_{i+1}$, for some $1 \leq i \leq n-1$. By application of \eqref{E_exch}, we then have
\begin{align}
\label{brute_force}
E_{s_i\mu}
&=
t^{-1}
\left( T_i + \frac{1-t}{1-y_{i+1}(\mu)/y_i(\mu)} \right)
\left(
f_{\mu}
+
\sum_{\substack{\nu \in \sigma(\mu) \\ \nu \prec \mu}}
c_{\mu,\nu}(q,t)
f_{\nu}
\right).
\end{align}
We need to act with the Hecke generator $T_i$ on the sum over ASEP polynomials. The action of $T_i$ on any given $f_{\nu}$ produces some linear combination of $f_{\nu}$ and $f_{s_i\nu}$, as can be seen from \eqref{localtKZ} and \eqref{other_case}. Both $f_{\nu}$ and $f_{s_i\nu}$ obviously lie in the composition sector $\sigma(\mu) \equiv \sigma(s_i \mu)$. Now when $\mu_i < \mu_{i+1}$ and $\nu \prec \mu$ hold, it is clear that both $\nu \prec s_i \mu$ and $s_i \nu \prec s_i \mu$ also hold. Using these observations in \eqref{brute_force}, we can then write
\begin{align*}
E_{s_i\mu}
&=
f_{s_i \mu}
+
\sum_{\substack{\nu \in \sigma(s_i \mu) \\ \nu \prec s_i \mu}}
c_{s_i\mu,\nu}(q,t)
f_{\nu}
\end{align*}
for appropriate coefficients $c_{s_i\mu,\nu}(q,t)$. Note that the coefficient of $f_{s_i \mu}$ must be $1$, using equation \eqref{other_case} to calculate $t^{-1} T_i f_{\mu}$. This proves that \eqref{E_expand} holds generally, by induction.

Finally, by virtue of \eqref{E_expand}, the transition matrix $c$ with entries $c_{\mu,\nu}(q,t)$ is block-diagonal over composition sectors, with triangular blocks. It can therefore be inverted to yield \eqref{f_expand}, where the transition matrix $d$ with entries $d_{\mu,\nu}(q,t)$ is the inverse of $c$.
\end{proof}

Like the non-symmetric Macdonald polynomials, the ASEP polynomials may become singular when $q=t^{-m}$,
$m \in \mathbb{Q}_{>0}$. To clarify the structure of these singularities, we seek a result which directly parallels Conjecture~\ref{theorema_egregium}.
\begin{thm}
\label{thm:sector}
Fix a positive rational number $m$, a natural number $p$ and an anti-partition $\delta$ for which Conjecture~\ref{theorema_egregium} holds. Then there exists a unique anti-partition $\epsilon$ such that
\begin{align*}
f_{\nu}(z;t^{-m},t)
:=
\lim_{q \rightarrow t^{-m}}
f_{\nu}(z;q,t)
\end{align*}
is well defined for all compositions $\nu \in \sigma(\epsilon)$, and such that
\begin{align}
\label{unique_sector}
{\rm Coeff}_p[f_{\mu},m]
=
\sum_{\nu \in \sigma(\epsilon)}
\psi(\nu,\mu;t)
f_{\nu}(z;t^{-m},t),
\end{align}
for all $\mu \in \sigma(\delta)$ and suitable coefficients $\psi(\nu,\mu;t)$.
\end{thm}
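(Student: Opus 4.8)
The plan is to translate the known singularity structure of the non-symmetric Macdonald polynomials (Conjecture~\ref{theorema_egregium}, which we assume holds for the anti-partition $\delta$) into a statement about ASEP polynomials using the triangular change of basis \eqref{E_expand}--\eqref{f_expand} between the two families. First I would observe that the operation $g \mapsto {\rm Coeff}_p[g,m]$ is linear and commutes with the Hecke generators $T_i$ and $T_i^{-1}$, since these act on $\mathbb{C}_{q,t}[z_1,\dots,z_n]$ with coefficients that are regular at $q=t^{-m}$ (the only source of $q$-dependence in $T_i$ is none at all --- $T_i$ depends only on $t$ and $z$). This is the key structural fact: it means that if we know ${\rm Coeff}_p[f_{\delta},m] = {\rm Coeff}_p[E_{\delta},m] \propto E_{\epsilon}(z;t^{-m},t)$ for an anti-partition $\delta$, then applying $T_i^{-1}$ repeatedly (following the recursion \eqref{f_exch}) propagates this to all $\mu \in \sigma(\delta)$.

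The main body of the argument would run as follows. Since $\delta$ is an anti-partition, $f_{\delta}=E_{\delta}$ by \eqref{f_init}, so Conjecture~\ref{theorema_egregium} gives a unique composition $\nu_0$ with ${\rm Coeff}_p[f_{\delta},m] = {\rm Coeff}_p[E_{\delta},m] \propto E_{\nu_0}(z;t^{-m},t)$, where $E_{\nu_0}$ has a well-defined limit at $q=t^{-m}$. I would set $\epsilon := \nu_0^{-}$, the anti-dominant rearrangement; the candidate target sector is $\sigma(\epsilon)$. To establish \eqref{unique_sector}, I would argue that: (i) for any $\mu \in \sigma(\delta)$, writing $\mu = s_{i_k}\cdots s_{i_1}\delta$ as a chain of adjacent transpositions each increasing the dominance order, \eqref{f_exch} gives $f_{\mu} = T^{-1}_{i_k}\cdots T^{-1}_{i_1} f_{\delta}$; (ii) applying ${\rm Coeff}_p[\cdot,m]$ and using that it commutes with each $T^{-1}_{i}$, we get ${\rm Coeff}_p[f_{\mu},m] = T^{-1}_{i_k}\cdots T^{-1}_{i_1}\, {\rm Coeff}_p[f_{\delta},m] \propto T^{-1}_{i_k}\cdots T^{-1}_{i_1} E_{\nu_0}(z;t^{-m},t)$; (iii) because the Macdonald recursion \eqref{E_exch} expresses $E_{s_i\nu}$ as $(T_i + \text{scalar})E_{\nu}$ with the scalar regular at $q=t^{-m}$ whenever all the relevant $E$'s have limits, I can show that $T^{-1}_{i_k}\cdots T^{-1}_{i_1} E_{\nu_0}(z;t^{-m},t)$ is a $t$-linear combination of the $E_{\nu}(z;t^{-m},t)$ with $\nu$ ranging over (a subset of) $\sigma(\epsilon)$; and (iv) re-expanding each $E_{\nu}$ back into the $f$-basis via \eqref{E_expand} specialized at $q=t^{-m}$ --- the specialization being legitimate since the change-of-basis coefficients within a sector where the $E$-limits exist are themselves regular --- produces \eqref{unique_sector}. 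The coefficients $\psi(\nu,\mu;t)$ are then read off, and the well-definedness of $f_{\nu}(z;t^{-m},t)$ for $\nu \in \sigma(\epsilon)$ follows from the well-definedness of the $E_{\nu}(z;t^{-m},t)$ on that sector together with \eqref{f_expand}.

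For uniqueness of $\epsilon$: if $\epsilon'$ were another anti-partition for which such an expansion held, then comparing the two expansions of the fixed polynomial ${\rm Coeff}_p[f_{\mu},m]$ would force a nontrivial linear relation among $\{f_{\nu}(z;t^{-m},t)\}$ across two different sectors; but the $f_{\nu}$ are monic with distinct leading monomials $z^{\nu}$, and these leading monomials survive the $q \to t^{-m}$ limit (the limit being assumed well-defined), so the union of the two sectors' polynomials stays linearly independent unless $\sigma(\epsilon)=\sigma(\epsilon')$, i.e. $\epsilon=\epsilon'$.

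The hard part will be step (iii): controlling precisely which compositions $\nu \in \sigma(\epsilon)$ actually appear, and --- more delicately --- verifying that the scalar coefficients $\tfrac{1-t}{1-y_{i+1}(\nu)/y_i(\nu)}$ appearing when one inverts \eqref{E_exch} stay finite at $q=t^{-m}$ along the particular chain of transpositions we use. A priori, $y_{i+1}(\nu)/y_i(\nu)$ could equal $1$ at $q=t^{-m}$ for some intermediate $\nu$, which would make the naive application of \eqref{E_exch} break down; the resolution must use the hypothesis that ${\rm Coeff}_p[f_{\mu},m]$ is itself well-defined and the fact that $f_{\mu}=T^{-1}_{i_k}\cdots T^{-1}_{i_1}f_{\delta}$ is manifestly regular on the Hecke side, so any apparent pole in the Macdonald re-expansion must cancel. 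Making this cancellation argument clean --- essentially showing that the $T^{-1}$-chain, being polynomial in $t$ and rational in $z$ with no $q$, cannot introduce or detect the Macdonald-side poles --- is where I expect most of the real work to lie. One clean way around it is to avoid \eqref{E_exch} entirely on the Macdonald side and instead work directly: define $\psi(\nu,\mu;t)$ by the $f$-basis expansion of ${\rm Coeff}_p[f_{\mu},m]$ (which exists in the sector $\sigma(\delta)=\sigma(\mu)$ of $f$'s whose limits exist, by \eqref{f_expand} combined with the existence of the $E$-limits on $\sigma(\epsilon)$), and then use the commutation of ${\rm Coeff}_p$ with $T^{-1}_i$ purely to show the expansion stays within one $f$-sector and that that sector is $\sigma(\epsilon)$ with $\epsilon$ determined by the $\delta$ case.
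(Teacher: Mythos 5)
Your proposal is correct and follows essentially the same route as the paper: establish the $\mu=\delta$ case from Conjecture~\ref{theorema_egregium} via $f_\delta=E_\delta$, set $\epsilon=\nu_0^{-}$, convert $E_{\nu_0}$ to the $f$-basis using \eqref{E_expand}, and propagate to general $\mu\in\sigma(\delta)$ by pushing the $q$-independent operators $T_i^{-1}$ through ${\rm Coeff}_p[\cdot,m]$, noting that their action on the $f$-basis preserves the sector $\sigma(\epsilon)$ and introduces no new singularities. Your ``clean way around'' step (iii) --- working on the $f$-side where the Hecke action is $q$-free, rather than threading the chain through the Macdonald recursion \eqref{E_exch} whose coefficients could be singular at $q=t^{-m}$ --- is precisely what the paper does.
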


\begin{proof}
Let us begin by analyzing the case where $\mu=\delta$. In that case, using the direct equivalence of ASEP and non-symmetric Macdonald polynomials and the result of Conjecture~\ref{theorema_egregium} we have
\begin{align*}
{\rm Coeff}_p[f_{\delta},m]
\equiv
{\rm Coeff}_p[E_{\delta},m]
\propto
E_{\kappa}(z;t^{-m},t),
\end{align*}
where $\kappa$ is the minimal composition satisfying the relations $y_i(\delta) = y_i(\kappa)$ at $q=t^{-m}$. Let $\epsilon = \kappa^{-}$. Using equation \eqref{E_expand}, we know that an expansion of the form
\begin{align*}
E_{\kappa}(z;q,t)
&=
f_{\kappa}(z;q,t)
+
\sum_{\substack{\nu \in \sigma(\epsilon) \\ \nu \prec \kappa}}
d_{\kappa,\nu}(q,t)
f_{\nu}(z;q,t)
\end{align*}
exists, and each $f_{\nu}$ appearing on the right hand is relatable to $f_{\epsilon} = E_{\epsilon}$ by a successive action of inverse Hecke generators $T^{-1}_i$. The action of such generators does not introduce any singular points in $q$, and we know that $\lim_{q \rightarrow t^{-m}} E_{\epsilon}$ is well defined; it follows that one can freely set $q=t^{-m}$ in the above equation, establishing that
\begin{align}
\label{eq:anti_claim}
{\rm Coeff}_p[f_{\delta},m]
\propto
f_{\kappa}(z;t^{-m},t)
+
\sum_{\substack{\nu \in \sigma(\epsilon) \\ \nu \prec \kappa}}
d_{\kappa,\nu}(t^{-m},t)
f_{\nu}(z;t^{-m},t).
\end{align}
This proves the claim \eqref{unique_sector} for anti-partitions $\mu=\delta$. The general $\mu$ case now follows immediately, by acting on the equation \eqref{eq:anti_claim} with products of inverse Hecke generators. This is permitted, since (as before) the action of these generators commutes with the limits being taken, and it allows $f_{\delta}$ to be converted into an arbitrary ASEP polynomial $f_{\mu}$. The action of $T_i^{-1}$ on the right hand side of \eqref{eq:anti_claim} also manifestly preserves the sector being summed over.

\end{proof}

\subsection{Dualities from reductions of ASEP polynomials}

In the previous sections we have outlined some of the theory surrounding the non-symmetric Macdonald and ASEP polynomials, with particular emphasis on their singular points in the parameter $q$. We now apply these results to the construction of non-trivial duality functions in mASEP systems. The following result is the central idea of this paper:

\begin{thm}
\label{thm:central}
Fix a positive rational number $m$, a natural number $p$ and an anti-partition $\delta$ such that for all compositions $\mu \in \sigma(\delta)$ there exists an expansion
\begin{align}
\label{f_sum}
{\rm Coeff}_p[f_{\mu},m]
=
\sum_{\nu \in \sigma(\epsilon)}
\psi(\nu,\mu;t)
f_{\nu}(z;t^{-m},t),
\end{align}
where $\epsilon$ is some other known anti-partition.\footnote{The expansion \eqref{f_sum} is guaranteed to be possible if the conditions in Theorem \ref{thm:sector} are met, namely the validity of Conjecture \ref{theorema_egregium} However it is sometimes possible to show that \eqref{f_sum} holds, independently of Conjecture \ref{theorema_egregium}, by proceeding via the weaker Proposition \ref{prop:partial_E}. This is the course of action that we take in Sections \ref{se:rank1reduc} and \ref{se:rank2}.} Then $\psi(\nu,\mu;t) \equiv \psi(\nu,\mu)$ defines a local duality function of the mASEP with generator $L_i$ given by \eqref{Mi}--\eqref{diag}, and the mASEP with generator $M_i$ given by \eqref{Li}--\eqref{diag-back}. Explicitly, we have
\begin{align}
\label{functional_dual}
L_i [\psi(\cdot,\mu)](\nu) = M_i [\psi(\nu,\cdot)](\mu), \qquad\forall\ 1 \leq i \leq n-1,
\end{align}
where the left hand side of \eqref{functional_dual} is given by \eqref{eq:Lpsi}, and the right hand side by \eqref{eq:Mpsi}.
\end{thm}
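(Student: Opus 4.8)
The plan is to feed the given expansion \eqref{f_sum} into the vector $\ket{\Psi}$ built from it and verify that it satisfies the local matrix duality \eqref{local-mat-dual2}; by Proposition~\ref{lem1} (and the explicit calculations in the proof of Proposition~\ref{asepdual}) this is equivalent to the functional statement \eqref{functional_dual}. Concretely, I would set
\begin{align*}
\ket{\Psi}
:=
\sum_{\mu \in \sigma(\delta)}
{\rm Coeff}_p[f_{\mu},m]\,\ket{\mu}
=
\sum_{\mu \in \sigma(\delta)}
\sum_{\nu \in \sigma(\epsilon)}
\psi(\nu,\mu;t)\,f_{\nu}(z;t^{-m},t)\,\ket{\mu},
\end{align*}
so that $\ket{\Psi}$ is of exactly the form appearing in \eqref{local-mat-dual2}, with the admissible basis $\{f_{\nu}(z;t^{-m},t)\}_{\nu \in \sigma(\epsilon)}$ on the $\mathbb{A}$-side and the canonical basis $\{\ket{\mu}\}_{\mu \in \sigma(\delta)}$ on the $\mathbb{B}$-side. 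The key point is that $\ket{\Psi}$ is manifestly the $q \to t^{-m}$ limit of $(1-qt^m)^p \sum_{\mu \in \sigma(\delta)} f_{\mu}(z;q,t)\,\ket{\mu}$, and this latter object, for generic $q$, is nothing but $(1-qt^m)^p$ times the diagonal observable $\ket{\mathcal I}$ of Proposition~\ref{asepdual} restricted to the sector $\sigma(\delta)$.

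The first step is therefore to observe that $\mathbb{L}_i$ and $\mathbb{M}_i$ preserve composition sectors: $\mathbb{L}_i$ acts via \eqref{admissible_act}, sending $f_{\nu} \mapsto$ a combination of $f_{\nu}$ and $f_{s_i\nu}$, both of which lie in $\sigma(\nu)$; likewise $\mathbb{M}_i$ of \eqref{asepM} sends $\ket{\mu} \mapsto$ a combination of $\ket{\mu}$ and $\ket{s_i\mu}$. Hence the sector-restricted sum $\ket{\mathcal I}_{\sigma(\delta)} := \sum_{\mu \in \sigma(\delta)} f_{\mu}(z;q,t)\,\ket{\mu}$ still satisfies $\mathbb{L}_i \ket{\mathcal I}_{\sigma(\delta)} = \mathbb{M}_i \ket{\mathcal I}_{\sigma(\delta)}$ for all $1 \le i \le n-1$, by exactly the argument of Proposition~\ref{asepdual} — the proof there is entirely local in the sector and uses only the exchange relations \eqref{localtKZ}, \eqref{other_case}, which hold for the $f_{\mu}$. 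The second step is to multiply this identity by the scalar $(1-qt^m)^p$ (which commutes with $\mathbb{L}_i$ and $\mathbb{M}_i$, as these operators do not involve $q$) and take the limit $q \to t^{-m}$. By hypothesis the limit of each $(1-qt^m)^p f_{\mu}(z;q,t)$ exists and equals ${\rm Coeff}_p[f_{\mu},m]$, which by \eqref{f_sum} is a finite combination of the $f_{\nu}(z;t^{-m},t)$ with $\nu \in \sigma(\epsilon)$; in particular each $f_{\nu}(z;t^{-m},t)$ appearing is well defined. Since $\mathbb{L}_i$ and $\mathbb{M}_i$ are continuous (finite matrices / finite-order differential-type operators independent of $q$), the limit commutes with their action, giving $\mathbb{L}_i \ket{\Psi} = \mathbb{M}_i \ket{\Psi}$.

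The final step is purely bookkeeping: expand $\mathbb{L}_i \ket{\Psi} = \mathbb{M}_i \ket{\Psi}$ in the bases $\{f_{\nu}(z;t^{-m},t)\}_{\nu \in \sigma(\epsilon)}$ and $\{\ket{\mu}\}_{\mu \in \sigma(\delta)}$ exactly as in the proof of Proposition~\ref{asepdual}, equations \eqref{eq:a}--\eqref{eq:Mpsi}. Note that \eqref{admissible_act} continues to hold at $q = t^{-m}$ because the coefficients \eqref{pos}--\eqref{diag} do not depend on $q$, and the $f_{\nu}(z;t^{-m},t)$ are linearly independent within $\sigma(\epsilon)$ (they are obtained from $f_{\epsilon} = E_\epsilon$ by invertible operators $T_i^{-1}$, which introduce no poles at $q = t^{-m}$, as observed in the proof of Theorem~\ref{thm:sector}). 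Matching the coefficient of $f_{\nu}(z;t^{-m},t)\ket{\mu}$ on both sides then yields precisely $L_i[\psi(\cdot,\mu)](\nu) = M_i[\psi(\nu,\cdot)](\mu)$ with the two sides given by \eqref{eq:Lpsi} and \eqref{eq:Mpsi}.

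I expect the only genuine subtlety — and hence the part needing the most care — to be the interchange of the limit $q \to t^{-m}$ with the action of $\mathbb{L}_i$, together with the linear independence of the limiting polynomials $\{f_{\nu}(z;t^{-m},t)\}_{\nu\in\sigma(\epsilon)}$ needed to equate coefficients; this is exactly where the hypotheses of Theorem~\ref{thm:sector} (or the weaker route via Proposition~\ref{prop:partial_E}, as used later in the paper) are invoked to guarantee that the relevant limits exist and that the basis on the right-hand side of \eqref{f_sum} is well defined and non-degenerate. Everything else is a direct transcription of Proposition~\ref{asepdual}.
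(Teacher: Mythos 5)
Your proposal is correct and follows essentially the same route as the paper: apply Proposition~\ref{asepdual} to the (sector-restricted) diagonal observable, multiply by the scalar $(1-qt^m)^p$, take the limit $q \to t^{-m}$ using the fact that $q$ does not enter the generators, substitute the hypothesis \eqref{f_sum}, and convert to functional form. The extra care you take over sector preservation and the linear independence of the $f_{\nu}(z;t^{-m},t)$ is a welcome refinement of details the paper leaves implicit, but it does not change the argument.
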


\begin{proof}
From Proposition \ref{asepdual}, we know that
\begin{align*}
\ket{\mathcal{I}}
=
\sum_{\mu \in \sigma(\delta)}
f_{\mu}(z;q,t)
\ket{\mu}
\end{align*}
satisfies $\mathbb{L}_i \ket{\mathcal{I}} = \mathbb{M}_i \ket{\mathcal{I}}$ for all $1 \leq i \leq n-1$. Exploiting the freedom to take limits of $q$, since it does not appear in the local mASEP generators, we see that
\begin{align*}
\ket{\mathcal{I}_{p,m}}
:=
{\rm Coeff}_p[\ket{\mathcal{I}},m]
=
\sum_{\mu \in \sigma(\delta)}
\sum_{\nu \in \sigma(\epsilon)}
\psi(\nu,\mu;t)
f_\nu(z;t^{-m},t)
\ket{\mu}
\end{align*}
satisfies $\mathbb{L}_i \ket{\mathcal{I}_{p,m}} = \mathbb{M}_i \ket{\mathcal{I}_{p,m}}$ for all $1 \leq i \leq n-1$. Converting this to its functional form, we obtain precisely the relations \eqref{functional_dual}.
\end{proof}

\begin{remark}{\rm
The anti-partitions $\delta$ and $\epsilon$ label the particle content of the two mASEP systems appearing in Theorem \ref{thm:central}. More precisely, Theorem \ref{thm:central} presents a duality between one mASEP with $m_i(\delta)$ particles of type $i$ and another mASEP with $m_i(\epsilon)$ particles of type $i$, $0 \leq i \leq r$.
}
\end{remark}

\begin{remark}{\rm
Theorem \ref{thm:central} gives rise to a diverse collection of duality functions. Once the particle content of one mASEP system is fixed by choosing $\delta$, there will in general be multiple choices of $m \in \mathbb{N}$ and $p \in \mathbb{Q}_{>0}$ for which ${\rm Coeff}_p[f_{\delta},m]$ exists and is non-zero. Each such choice will give rise to a different $\epsilon$, labelling the particle content of the second, reduced mASEP system.

It is beyond the scope of the present paper to explore all possible duality functions arising from Theorem \ref{thm:central}. One of the obstacles of such a classification is that one needs a way of calculating the coefficients appearing in \eqref{f_sum}, which is difficult in full generality. We hope to return to this problem in a future publication.

For the purposes of the current work, we prefer to analyse \eqref{f_sum} for some special choices of $\{\delta,p,m\}$. Section \ref{se:rank1reduc} will look at the case $\{\delta,p,m\} = \{(0^{n-m}, r^{m}),1,m\}$ for general $r \geq 1$. Section \ref{se:rank2} deals with the case $\{\delta,p,m\} = \{(0^{n-m_1-m_2}, 1^{m_1}, 2^{m_2}),1,M\}$ for general $m_1,m_2,M \geq 1$.
}
\end{remark}

\section{Explicit formulae for the ASEP polynomials}

In order to calculate expansions of the form \eqref{f_sum} explicitly, it naturally helpful to have explicit expressions for the polynomials $f_{\mu}(z;q,t)$ themselves. Such formulae were obtained in \cite{CantiniGW,GierW}, and turn out to be quite expedient for the purposes of this paper, since they lare bare the structure of the singularities of $f_{\mu}(z;q,t)$ as a function of $q$.

\subsection{Matrix product formula for $f_{\mu}(z;q,t)$}
\label{se:mpa}

Let us recall some of the details of the matrix product Ansatz. Given a composition $\mu$ whose largest part is equal to $r$, one seeks a construction of the form
\begin{align}
\label{general_mpa}
f_{\mu}(z_1,\dots,z_n;q,t)
=
\Omega_{\mu}(q,t)
\times
{\rm Tr}\Big(
A_{\mu_1}(z_1) \dots A_{\mu_n}(z_n) S
\Big),
\end{align}
where $\{A_i(z)\}_{0 \leq i \leq r}$ and $S$ are a collection of explicit matrices, and $\Omega_{\mu}$ is a normalization constant (recall that $f_{\mu}$ is monic, \ie\ it expands as $f_{\mu} = z^{\mu} + \sum_{\nu \prec \mu} c_{\mu,\nu}(q,t) z^{\nu}$). To proceed with the construction \eqref{general_mpa}, two steps are necessary. First, one needs to translate the exchange relations \eqref{localtKZ} and \eqref{cyclic}, which uniquely characterize the family $\{f_{\mu}\}$, into algebraic relations between the $A_i(z)$ and $S$ operators. The algebraic structure which arises from this is the {\it Zamolodchikov--Faddeev (ZF) algebra.}\footnote{In fact the resulting structure is an extended version of the ZF algebra, since it not only prescribes commutation relations between the operators $\{A_i(z)\}$, but also with the ``twist'' operator $S$.} Second, one needs to seek a suitable representation of this algebra, so that the trace in \eqref{general_mpa} can be taken.

Following these steps, an explicit matrix product expression \eqref{general_mpa} for $f_{\mu}(z;q,t)$ was obtained in \cite{CantiniGW}. It involves a family of infinite-dimensional matrices $\phi,\phid,k$ which satisfy the $t$-boson algebra. Their matrix entries are given explicitly by
\begin{align*}
[\phi]_{i,j}
=
\delta_{i+1,j} (1-t^i),
\quad\quad
[\phid]_{i,j}
=
\delta_{i,j+1},
\quad\quad
[k]_{i,j}
=
\delta_{i,j} t^i,
\quad
\text{for all}\
i,j \in \mathbb{N}.
\end{align*}
It is easy to check that this provides a faithful representation of the $t$-boson algebra $\mathfrak{B}$, \ie\ the matrices obey the relations
\begin{align*}
\phi \phid - t \phid \phi = 1-t,
\quad\quad
t k \phi = \phi k,
\quad\quad
k \phid = t \phid k.
\end{align*}
We refer the reader to \cite{CantiniGW} for the matrix product formula for generic $f_{\mu}(z;q,t)$. In this paper we focus on two sub-families of compositions for which the formula \eqref{general_mpa} becomes simple. We detail these below:

\subsubsection{The case $\mu^{-} = (0^{n-m}, r^m)$.} We begin by analyzing the matrix product expression when $\mu$ is a composition with parts of size $0$ and size $r$, only. Let $L(z)$ denote the following $2 \times 2$ matrix, whose entries are $t$-bosons:
\begin{align*}
L(z) =
\left(
\begin{array}{cc}
1 & \phi
\\
z \phid & z
\end{array}
\right),
\end{align*}
\ie\ the entries of $L(z)$ are themselves to be understood as infinite dimensional matrices. From this, construct a two-component vector
\begin{align}
\label{a0_ar}
\begin{pmatrix} A_0(z) \\ A_r(z) \end{pmatrix}
:=
\underbrace{L(z) \stackrel{.}{\otimes} \cdots \stackrel{.}{\otimes} L(z)}_{r-1}
\begin{pmatrix} 1 \\ z \end{pmatrix},
\end{align}
where $L(z)$ is composed $r-1$ times under the operation $\stackrel{.}{\otimes}$, meaning matrix multiplication combined with taking Kronecker products of matrix entries:
\begin{align*}
\begin{pmatrix}
a & b
\\
c & d
\end{pmatrix}
\stackrel{.}{\otimes}
\begin{pmatrix}
e & f
\\
g & h
\end{pmatrix}
:=
\begin{pmatrix}
a \otimes e + b \otimes g & a \otimes f + b \otimes h
\\
c \otimes e + d \otimes g& c \otimes f + d \otimes h
\end{pmatrix}.
\end{align*}
The resulting operators $A_0(z)$ and $A_r(z)$ are thus polynomial in $z$, with coefficients in $\mathfrak{B}^{\otimes_{r-1}}$. One can easily calculate the first few examples of these operators:
\begin{align*}
r = 1: \quad & A_0(z) = 1, \quad A_1(z) = z
\\
r=2: \quad & A_0(z) = 1 + z \phi, \quad A_2(z) = z \phid + z^2
\\
r=3: \quad  &A_0(z) = 1 \otimes 1 + z (1 \otimes \phi + \phi \otimes \phid) +
z^2 (\phi \otimes 1),
\\
& A_3(z) = z (\phid \otimes 1) + z^2 (\phid \otimes \phi + 1 \otimes \phid) + z^3 (1 \otimes 1).
\end{align*}

\begin{prop}
Let $\mu$ be a composition with anti-dominant ordering $\mu^{-} = (0^{n-m},r^m)$. Then
\begin{align}
\label{rankr_mpa}
f_{\mu}(z_1,\dots,z_n;q,t)
=
\prod_{i=1}^{r-1} (1-q^i)
\times
{\rm Tr}
\left(
A_{\mu_1}(z_1) A_{\mu_2}(z_2) \dots A_{\mu_n}(z_n)
k^{u(r-1)} \otimes k^{u(r-2)} \otimes \cdots \otimes k^{u}
\right),
\end{align}
where each operator $A_i(z)$ is given by \eqref{a0_ar}, $q$ is parametrized through $u$ via $q := t^{u}$, and the trace is taken over $\mathfrak{B}^{\otimes_{r-1}}$ and is to be understood as a formal power series in $t$.
\end{prop}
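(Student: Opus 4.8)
The plan is to verify that the proposed matrix product expression satisfies the characterising properties of the ASEP polynomials recalled just above, namely: (i) the exchange relations \eqref{localtKZ}, (ii) the cyclic boundary condition \eqref{cyclic}, and (iii) monicity, i.e.\ $f_\mu = z^\mu + \sum_{\nu \prec \mu} c_{\mu,\nu}(q,t) z^\nu$. By the uniqueness of the family $\{f_\mu\}$, establishing these three properties for the right-hand side of \eqref{rankr_mpa} suffices. Since the right-hand side is built from the operators $A_i(z)$ via a trace, and since the general matrix product formula of \cite{CantiniGW} has already been shown there to produce the $f_\mu$, the real content is to check that in the special case $\mu^- = (0^{n-m}, r^m)$ the general construction collapses to the compact two-component form \eqref{a0_ar} with twist $k^{u(r-1)} \otimes \cdots \otimes k^u$, and that the prefactor $\prod_{i=1}^{r-1}(1-q^i)$ is the correct normalisation $\Omega_\mu$.

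First I would record the local exchange relations obeyed by the $L(z)$ matrix: one shows that $L(z)$ satisfies an $RLL$-type relation with respect to the rank-one ($r=1$) $\check R$-matrix, which is the standard intertwining property of the $t$-boson $L$-operator. Taking $(r-1)$-fold $\stackrel{.}{\otimes}$ products, this promotes to an intertwining relation for the vector $(A_0(z), A_r(z))^{\mathsf T}$ under the action of $T_i$ (equivalently $\check R_i(z_i/z_{i+1})$) on the tensor factors indexed by sites $i, i+1$. Inserting this into the trace in \eqref{rankr_mpa}, and using cyclicity of the trace together with the fact that $k$ is diagonal and commutes appropriately, yields precisely \eqref{localtKZ} for the right-hand side. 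For the cyclic boundary condition \eqref{cyclic}, the key computation is the commutation of $A_0(z), A_r(z)$ past the twist operator: from $t k \phi = \phi k$ and $k \phid = t \phid k$ one gets $k^{-1} A_j(z) k = A_j(\alpha_j z)$ for appropriate scalars $\alpha_j$ on each tensor factor, and the specific powers $u(r-1), u(r-2), \dots, u$ are engineered so that the cumulative rescaling reproduces the factor $q^{\mu_n} = t^{u \mu_n}$ required by \eqref{cyclic} when $\mu_n = 0$ or $r$.

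The normalisation claim is handled by extracting the leading monomial $z^\mu$: for $\mu$ in the sector of $(0^{n-m}, r^m)$, the top-degree-in-each-variable term of $A_0(z_k)$ is $z_k^0$ times $\phi^{\otimes(r-1)}$-type corners and that of $A_r(z_k)$ is $z_k^r \cdot 1^{\otimes(r-1)}$; tracking which products of $\phi, \phid, 1$ survive the trace against $k^{u(r-1)} \otimes \cdots \otimes k^u$ and computing the resulting geometric-series-in-$t$ sum produces the coefficient $\prod_{i=1}^{r-1}(1-q^i)^{-1}$, so that multiplying by $\prod_{i=1}^{r-1}(1-q^i)$ makes the coefficient of $z^\mu$ equal to $1$. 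One should also check that all other monomials appearing are $\prec \mu$, which follows from the triangular structure of $A_0, A_r$ in $z$ together with the exchange relations already established. The main obstacle I anticipate is the bookkeeping in the second step: verifying that the $(r-1)$-fold $\stackrel{.}{\otimes}$ of $L$-operators genuinely reduces the general matrix product Ansatz of \cite{CantiniGW} to this two-component object, and getting the powers of $k$ (and hence of $u$) exactly right in the twist so that both the exchange relations and the cyclic condition hold simultaneously — a sign or exponent error there would be invisible until one tests \eqref{cyclic}. Everything else reduces to routine $t$-boson algebra and convergence of the formal trace as a power series in $t$.
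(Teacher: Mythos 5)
Your proposed route is viable but is genuinely different from the one the paper takes. The paper's proof is a one-paragraph appeal to the general matrix product formula of \cite{CantiniGW}: for a composition whose only non-zero part-value is $r$, the general construction (built from $r$ commuting copies of the $t$-boson algebra) collapses, and \eqref{rankr_mpa} is recorded as the resulting specialization, with all verification of the exchange relations, cyclicity and normalization inherited from \cite{CantiniGW} and the details suppressed. You instead propose a self-contained verification: show that the right-hand side of \eqref{rankr_mpa} satisfies the three characterizing properties of $f_\mu$ (the exchange relations \eqref{localtKZ}, the cyclic condition \eqref{cyclic}, and monic triangularity) and invoke uniqueness. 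This buys independence from the internals of \cite{CantiniGW} at the cost of redoing the Zamolodchikov--Faddeev computations for the pair $(A_0,A_r)$ built from \eqref{a0_ar}. Your normalization argument is right in substance: the $z^0$ coefficient of $A_0(z)$ and the $z^r$ coefficient of $A_r(z)$ are both $1^{\otimes(r-1)}$ (not ``$\phi$-type corners''), so the leading coefficient is $\Tr\bigl(k^{u(r-1)}\otimes\cdots\otimes k^{u}\bigr)=\prod_{i=1}^{r-1}(1-q^i)^{-1}$, and the prefactor restores monicity; triangularity is in fact automatic because $\mu^{+}=(r^m,0^{n-m})$ is the dominance-maximal partition of weight $rm$ with parts at most $r$. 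One statement you should correct before relying on it: it is \emph{not} true that $k^{-1}A_j(z)k=A_j(\alpha_j z)$ for $j=r$ --- already for $r=2$ one has $k^{-1}A_2(z)k=t^{-1}z\phid+z^2$, which is not of the form $A_2(\alpha z)$. The identity you actually need, and which does hold (and is exactly what \eqref{cyclic} demands), is that conjugation by the full twist $K=k^{u(r-1)}\otimes\cdots\otimes k^{u}$ satisfies $K\,A_j(qz)\,K^{-1}=q^{j}A_j(z)$ for $j\in\{0,r\}$; combined with cyclicity of the trace this gives the factor $q^{\mu_n}$. With that repair, and with the $RLL$-to-$ZF$ step for the $\stackrel{.}{\otimes}$-product of $L$-operators carried out explicitly, your argument goes through.
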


\begin{proof}
This follows from the matrix product expression in \cite{CantiniGW}, under some simplifications. The result in \cite{CantiniGW} applies to generic compositions $\mu$, and makes use of $r$ commuting copies of the $t$-boson algebra $\{\mathfrak{B}_i\}_{1 \leq i \leq r}$, where $r$ is the largest part of $\mu$. However, whenever $\mu$ consists of less than $r$ distinct non-zero parts, the dependence on some of these families drops out. In the case at hand, $\mu$ consists of only one type of non-zero part (namely, $r$), and can therefore be expressed via a matrix product that only uses a single copy of $\mathfrak{B}$. It is this simplification of the formula in \cite{CantiniGW} which gives rise to \eqref{rankr_mpa}; for simplicity we will suppress further details.
\end{proof}

\begin{remark}{\rm
One can use equation \eqref{rankr_mpa} to obtain a completely explicit expression for any given polynomial $f_{\mu}(z_1,\dots,z_n;q,t)$, where $\mu^{-} = (0^{n-m},r^m)$. The calculation of the trace amounts to taking geometric series, and for that reason $f_{\mu}$ acquires denominators of the form $(1-q^i t^j)$. This is in accordance with the singularities that $f_{\mu}$ is expected to have, as a function of $q$.
}
\end{remark}

\subsubsection{The case $\mu^{-} = (0^{n-m_1-m_2}, 1^{m_1}, 2^{m_2})$.} An even simpler case is that of compositions whose parts are of size $2$, or less. We refer to these as {\it rank-two compositions.} In that situation we define directly
\begin{align}
\label{ZF-ops}
A_0(z) = 1+z \phi,
\quad
A_1(z) = z k,
\quad
A_2(z) = z \phid + z^2.
\end{align}

\begin{prop}
For any rank-two composition $\mu$, we have
\begin{align}
\label{MPA}
f_{\mu}(z_1,\dots,z_n;q,t)
=
\left(1-q t^{m_1}\right)
\times
{\rm Tr}
\Big(
A_{\mu_1}(z_1)
\dots
A_{\mu_n}(z_n)
k^{u}
\Big),
\end{align}
where $m_1=m_1(\mu)$ is the number of parts in $\mu$ equal to 1, $q = t^u$, and where the trace is again to be understood as a formal power series in $t$.
\end{prop}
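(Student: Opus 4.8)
The plan is to mimic the proof of the preceding proposition: one may simply specialize the general matrix product formula of \cite{CantiniGW} to rank-two compositions, noting that the several $t$-boson towers occurring there collapse to a single one. Equivalently --- and I will sketch this more self-contained route --- one checks directly that the right-hand side of \eqref{MPA} has the three properties that characterize $f_\mu$ uniquely, namely (using the description given around \eqref{cyclic}): monicity with leading term $z^\mu$ and triangularity with respect to $\prec$; the exchange relations \eqref{localtKZ} for $1\le i\le n-1$; and the cyclic boundary condition \eqref{cyclic}.

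For the exchange relations I would use that $T_i$ acts only on $(z_i,z_{i+1})$ while the product factorizes through positions $i$ and $i+1$, so the claim reduces to the local identities
\begin{align*}
\left(t-\frac{tz-w}{z-w}(1-s)\right)\big[A_a(z)A_b(w)\big]
=
\begin{cases}
A_b(z)A_a(w), & a>b,\\
t\,A_a(z)A_a(w), & a=b,
\end{cases}
\qquad
a,b\in\{0,1,2\},
\end{align*}
where $s$ swaps $z\leftrightarrow w$; the remaining case $a<b$ then follows automatically from the quadratic Hecke relation, just as \eqref{other_case} is deduced from \eqref{localtKZ}. The diagonal cases are immediate because each $A_a(z)A_a(w)$ is visibly symmetric in $z\leftrightarrow w$, hence killed by $(1-s)$. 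The three off-diagonal cases $(a,b)\in\{(1,0),(2,1),(2,0)\}$ are finite computations using the $t$-boson relations $\phi\phid-t\phid\phi=1-t$, $tk\phi=\phi k$, $k\phid=t\phid k$ (one of the three relations suffices for each case); the case $(2,0)$, where both $\phi$ and $\phid$ appear and must be reordered via $\phi\phid=t\phid\phi+(1-t)$, is the most involved but still elementary. These identities are precisely the extended Zamolodchikov--Faddeev relations for the ASEP R-matrix, so they can instead be quoted from \cite{CantiniGW}.

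For the cyclic boundary condition I would first establish the twist relation
\begin{align*}
k^{u}A_i(qz)=q^{i}A_i(z)\,k^{u},
\qquad i\in\{0,1,2\},\ \ q=t^{u},
\end{align*}
which is immediate from $\phi k^{u}=q\,k^{u}\phi$ and $\phid k^{u}=q^{-1}k^{u}\phid$. Applying cyclicity of the trace, then this relation to the factor $A_{\mu_n}$, then cyclicity again, converts $\mathrm{Tr}\big(A_{\mu_n}(qz_n)A_{\mu_1}(z_1)\cdots A_{\mu_{n-1}}(z_{n-1})k^{u}\big)$ into $q^{\mu_n}\,\mathrm{Tr}\big(A_{\mu_1}(z_1)\cdots A_{\mu_n}(z_n)k^{u}\big)$; since $m_1$ is unchanged under permutation of the parts of $\mu$, the prefactor $(1-qt^{m_1})$ matches on both sides, and \eqref{cyclic} follows. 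Finally, for monicity I would extract the coefficient of $z^\mu$: for $\mu_j=0$ one picks the constant term $1$ of $A_0(z_j)=1+z_j\phi$, for $\mu_j=1$ the term $z_jk$, and for $\mu_j=2$ the term $z_j^2$ of $A_2(z_j)=z_j\phid+z_j^2$, so this coefficient equals $\mathrm{Tr}\big(k^{m_1+u}\big)=\sum_{i\ge0}(qt^{m_1})^{i}=(1-qt^{m_1})^{-1}$ as a formal power series in $t$; multiplying by the prefactor gives leading coefficient $1$, and a short charge-counting argument (any other monomial $z^\nu$ produced by the trace arises only through balanced insertions of $\phi$ and $\phid$, which force $\nu^+<\mu^+$ and hence $\nu\prec\mu$) then yields triangularity. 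Combining the three properties with the uniqueness of $f_\mu$ proves \eqref{MPA}. I expect the main obstacle to be purely computational: the $(2,0)$ instance of the exchange relation, or, if one follows the shortcut, carefully bookkeeping the collapse of the multi-tower formula of \cite{CantiniGW} to the single-tower expression \eqref{MPA}.
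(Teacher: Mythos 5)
Your proposal is correct, and its opening sentence is essentially the paper's entire proof: the paper simply cites the matrix product formula of \cite{CantiniGW} at $r=2$ and stops. The self-contained route you then sketch is a genuine alternative that the paper does not carry out, and it checks out: the paper's alternative characterization of $f_\mu$ (monicity and $\prec$-triangularity, the exchange relations \eqref{localtKZ}, and the cyclic condition \eqref{cyclic}) is exactly what needs verifying, and each of your three verifications is sound. The local ZF identities do hold with the operators \eqref{ZF-ops} (I checked the $(1,0)$, $(2,1)$ and $(2,0)$ cases against the $t$-boson relations; the $(2,0)$ case works out once one is careful that $T_i$ acts on both $z_i$ and $z_{i+1}$ simultaneously, e.g.\ $T[z^2w\phi]=zw^2\phi$ rather than $wT[z^2]\phi$). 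The twist relations $k^uA_i(qz)=q^iA_i(z)k^u$ follow as you say from $\phi k^u=qk^u\phi$ and $\phid k^u=q^{-1}k^u\phid$, and combined with cyclicity of the (formal) trace they give \eqref{cyclic} with matching prefactors. Monicity via ${\rm Tr}(k^{m_1+u})=(1-qt^{m_1})^{-1}$ is consistent with the trace normalization the paper uses later (compare the evaluation of ${\rm Tr}(\phi^a\phi^{\dagger a}k^{u+m_1})$ at $a=0$), and your charge-counting argument correctly shows that any other contributing monomial $z^\nu$ has $\nu^+$ obtained from $\mu^+$ by converting equal numbers of $0$'s and $2$'s into $1$'s, whence $\nu^+<\mu^+$ and $\nu\prec\mu$. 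What the paper's citation buys is brevity; what your verification buys is independence from the general multi-tower construction of \cite{CantiniGW} and an explicit check that the single-tower representation \eqref{ZF-ops} really does satisfy the extended ZF algebra, which is the one point a reader cannot see from the statement alone.
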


\begin{proof}
This is exactly the special case $r=2$ of the matrix product formula in \cite{CantiniGW}; see Section 3 therein.
\end{proof}

\subsection{Summation formulae}

In \cite{GierW} an alternative formula for $f_{\mu}(z;q,t)$ was obtained, in terms of multiple summations over the symmetric group $S_n$. This expression can be derived from the matrix product formula of \cite{CantiniGW}, by explicitly evaluating all traces which appear. In view of its complexity we do not repeat the general formula here, but again focus on the special cases which are of interest in this paper.

\subsubsection{The case $\delta = (0^{n-m}, r^m)$.} Let $\alpha$ and $\beta$ be rank-one compositions, and for any $j \geq 1$ define coefficients
\begin{align*}
C_j(\alpha,\beta;q,t)
:=
{\rm Tr}\left( L(\alpha_1,\beta_1) \dots L(\alpha_n,\beta_n) k^{ju} \right),
\quad
\text{where}\
L(\alpha,\alpha) = 1,
\quad
L(0,1) = \phi,
\quad
L(1,0) = \phid.
\end{align*}
These coefficients are rational functions in $q=t^u$ and $t$; for given rank-one compositions $\alpha$ and $\beta$ they can be readily evaluated by tracing over the resulting product of infinite-dimensional matrices. We will make use of the following key properties:
\begin{prop}
\label{prop:coeff}
$C_j(\alpha,\beta;q,t)$ vanishes unless $|\alpha| = |\beta|$. In the case where 
$\#\{(\alpha_i,\beta_i) = (0,1)\} = \#\{(\alpha_i,\beta_i) = (1,0)\} = m$, one has
\begin{align}
\label{c_j}
C_j(\alpha,\beta;q,t)
=
\frac{p_j(\alpha,\beta;q,t)}{\prod_{i=0}^{m} (1-q^j t^i)},
\end{align}
where $p_j(\alpha,\beta;q,t)$ is polynomial in $(q,t)$.
\end{prop}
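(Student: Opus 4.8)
The plan is to evaluate the trace in the definition of $C_j(\alpha,\beta;q,t)$ directly, exploiting the explicit matrix entries of $\phi$, $\phid$ and $k$. First I would argue the vanishing statement: since $[\phi]_{i,j} = \delta_{i+1,j}(1-t^i)$ lowers the boson level by one and $[\phid]_{i,j} = \delta_{i,j+1}$ raises it by one, while $L(\alpha,\alpha)=1$ preserves it, the product $L(\alpha_1,\beta_1)\cdots L(\alpha_n,\beta_n)$ shifts the level by $\#\{(\alpha_i,\beta_i)=(1,0)\} - \#\{(\alpha_i,\beta_i)=(0,1)\} = |\beta|-|\alpha|$ (with the natural reading of $|\cdot|$ for rank-one compositions). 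A matrix that shifts the level by a nonzero amount has zero diagonal entries, hence zero trace, so $C_j$ vanishes unless $|\alpha|=|\beta|$; equivalently unless $\#\{(0,1)\}=\#\{(1,0)\}$, which we call $m$.

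Next, assuming $\#\{(\alpha_i,\beta_i)=(0,1)\}=\#\{(\alpha_i,\beta_i)=(1,0)\}=m$, I would compute the trace explicitly. The product $M := L(\alpha_1,\beta_1)\cdots L(\alpha_n,\beta_n)$ is a product of $m$ copies of $\phi$, $m$ copies of $\phid$, interspersed with identities in some fixed order determined by $\alpha,\beta$; it is a level-preserving operator, so it is diagonal in the $t$-boson basis (in fact $M_{i,i}$ is the only surviving block). One reads off $\Tr(M k^{ju}) = \sum_{i \geq 0} M_{i,i}\, t^{iju}$, and $M_{i,i}$ is obtained by starting at level $i$, tracking the level as one moves through the word, and multiplying the corresponding entries: each $\phid$ contributes a factor $1$, each $\phi$ encountered at level $\ell$ contributes $(1-t^{\ell})$, where $\ell$ depends on $i$ and on how many raisings/lowerings have occurred so far. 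Since at every stage the intermediate level lies between $i$ and $i+m$, the factor $(1-t^\ell)$ that a given $\phi$ contributes is $(1-t^{i+c})$ for some integer $0 \le c \le m$ determined purely by the positions in the word. Hence $M_{i,i} = \prod_{\text{the } m \text{ instances of }\phi} (1-t^{i+c_k}) = t^{0}\cdot(\text{product of } m \text{ factors of the form } 1-q^{0}t^{i+c_k})$, a polynomial in $t$ times... more precisely $M_{i,i}$ is a polynomial in $t^i$ and $t$.

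Now I would sum the geometric-type series: $C_j = \sum_{i\ge0} M_{i,i}\, t^{iju}$, where $M_{i,i}$ is a product of $m$ terms $(1-t^{i+c_k})$, which expand into a sum of monomials $\pm t^{ai+b}$ with $0 \le a \le m$ and $b\ge0$ integers. Summing $\sum_{i\ge0} t^{ai+b+iju} = t^{b}/(1-t^{ju}t^{a}) = t^b/(1-q^j t^a)$ for each such monomial, and putting everything over the common denominator $\prod_{i=0}^{m}(1-q^j t^i)$, we obtain exactly the claimed form \eqref{c_j} with $p_j(\alpha,\beta;q,t)$ a polynomial in $q,t$ (a sum of integer monomials with $\pm$ signs, numerators of the partial fractions cleared against the common denominator). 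The expansion is valid as a formal power series in $t$, consistent with the convention stated for the trace.

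The main obstacle I anticipate is bookkeeping the intermediate boson levels to confirm that every $\phi$ contributes a factor $(1-t^{i+c})$ with $0 \le c \le m$, so that the denominators produced are all among $\{1-q^jt^0,\dots,1-q^jt^m\}$ and no worse. This requires a careful but elementary argument: the running level starts at $i$, changes by $\pm1$ at each non-identity letter, undergoes exactly $m$ increments and $m$ decrements over the whole word, and — crucially — must return to $i$ at the end (which is why $M$ is diagonal), so the partial sums of the $\pm1$ steps stay in a bounded window; one checks the window has width at most $m$. Everything else (the geometric summation, polynomiality of the numerator, the vanishing criterion) is routine once this structural fact is in place.
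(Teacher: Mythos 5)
Your strategy --- evaluate the trace directly from the explicit matrix entries, reduce to a diagonal entry $M_{i,i}$ determined by a single lattice path, expand the product of $m$ factors, and sum geometric series --- is the right one; the paper in fact states this proposition without proof, remarking only that the coefficients ``can be readily evaluated by tracing over the resulting product of infinite-dimensional matrices'', so your computation is essentially the intended argument, and your vanishing criterion is correct. However, the step you yourself flag as the crux is justified incorrectly. Reading the word left to right (tracking the row index through the matrix product), $\phi$ is an \emph{up}-step contributing $(1-t^{\ell})$ at the pre-step level $\ell$, and $\phid$ is a \emph{down}-step contributing $1$. Since $\alpha,\beta$ are arbitrary, the $\phid$'s may precede the $\phi$'s, and the path then dips \emph{below} $i$: for the word $\phid\phi$ one gets $M_{i,i}=1-t^{i-1}$, so $c=-1$. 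The correct statement is that the path stays in a window $[i-d,\,i+u]$ with $u+d\le m$, so $-m\le c_k\le m-1$; your asserted window $[i,i+m]$ and the bound $0\le c\le m$ are false in general.

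Fortunately the error is localized and repairable. The denominator claim never needed the window: each monomial in the expansion of $M_{i,i}=\prod_{k=1}^m(1-t^{i+c_k})$ carries $t^{ai}$ with $0\le a\le m$ simply because there are exactly $m$ factors, each contributing at most one power of $t^i$; the values of the $c_k$ are irrelevant to which denominators $1-q^jt^a$ appear. Where the window \emph{does} matter is your claim that the exponents $b$ are nonnegative, which is what makes the numerator polynomial; with $c_k<0$ one gets monomials $\pm t^{ai+b}$ with $b$ possibly negative. The fix: if the path descends $d$ levels below its starting point, then $M_{i,i}=0$ for all $i$ below the threshold $i_0$ at which the path would exit the index range, so each geometric series starts at $i\ge i_0$ and contributes $t^{\,b+ai_0}q^{\,ji_0}/(1-q^jt^a)$; since each negative $c_k$ satisfies $c_k\ge -d$ and $i_0\ge d+1$, one has $b+ai_0\ge -ad+a(d+1)=a\ge 0$, and the numerator is indeed a polynomial in $(q,t)$ after clearing against the common denominator $\prod_{a=0}^{m}(1-q^jt^a)$. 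With this correction your proof is complete.
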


\begin{prop}
Fix an anti-partition $\delta = (0^{n-m},r^m)$ and a corresponding projection onto rank-one, $\delta^{*} = (0^{n-m},1^m)$. The formula
\begin{align}
\label{sum-rankr}
f_{\delta}
=
\prod_{i=1}^{r-1} (1-q^i)
\times
\sum_{\mu[1] \in \sigma(\delta^{*})}
\cdots
\sum_{\mu[r-1] \in \sigma(\delta^{*})}
z^{\delta^{*}}
\left(
\prod_{j=1}^{r-1}
C_{j}\Big(\mu[j+1],\mu[j];q,t\Big)
z^{\mu[j]}
\right)
\end{align}
holds, where $\mu[1],\dots,\mu[r-1]$ are dummy indices, each being summed over all rank-one compositions in the sector $\sigma(\delta^{*})$, and $\mu[r] \equiv \delta^{*}$.
\end{prop}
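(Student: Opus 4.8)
The plan is to obtain \eqref{sum-rankr} directly from the matrix product formula \eqref{rankr_mpa} by expanding the operators $A_{\mu_i}(z_i)$ into sums over ``path'' configurations, and then exploiting the fact that both the $\stackrel{.}{\otimes}$-product and the trace over $\mathfrak{B}^{\otimes_{r-1}}$ factorize across tensor slots. (Equivalently, \eqref{sum-rankr} is the $\mu^{-}=(0^{n-m},r^m)$ specialization of the general summation formula of \cite{GierW}, obtained by evaluating all traces in the matrix product Ansatz; I will just reproduce the relevant part of that argument in this restricted case.) First I would fix $\mu=\delta$, so that each factor is $A_{\mu_i}(z_i) \in \{A_0(z_i),A_r(z_i)\}$. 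Writing $[L(z)]_{ab} = z^{a}\tilde{L}(a,b)$ for $a,b \in \{0,1\}$, where $\tilde{L}(0,0)=1$, $\tilde{L}(0,1)=\phi$, $\tilde{L}(1,0)=\phi^{\dagger}$, $\tilde{L}(1,1)=1$, iterating the definition of $\stackrel{.}{\otimes}$ in \eqref{a0_ar} gives
\begin{align*}
A_{c_0 r}(z)
=
\sum_{c_1,\dots,c_{r-1} \in \{0,1\}}
z^{\,c_0 + c_1 + \cdots + c_{r-1}}
\bigotimes_{\ell=1}^{r-1}
\tilde{L}(c_{\ell-1},c_\ell),
\end{align*}
where $c_0 \in \{0,1\}$ selects $A_0$ or $A_r$, the power $z^{c_{\ell-1}}$ is emitted by the $\ell$-th matrix entry, and the final $z^{c_{r-1}}$ comes from the closing vector $\binom{1}{z}$.

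Next I would take the product over the $n$ sites and the trace. Since multiplication in $\mathfrak{B}^{\otimes_{r-1}}$ is componentwise, the product $A_{\mu_1}(z_1)\cdots A_{\mu_n}(z_n)$ expands as a sum over path arrays $\{c_\ell^{(i)} : 1 \leq i \leq n,\ 0 \leq \ell \leq r-1\}$ in which $c_0^{(i)}$ is pinned by $\delta_i$ (so that the tuple $(c_0^{(1)},\dots,c_0^{(n)})$ equals $\delta^{*}$ on the nose, using that $\delta$ is anti-dominant), and the $\ell$-th tensor slot carries $\prod_{i=1}^n \tilde{L}(c_{\ell-1}^{(i)},c_\ell^{(i)})$. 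Because $\Tr(X_1\otimes\cdots\otimes X_{r-1}) = \prod_\ell \Tr(X_\ell)$, applying $\Tr$ against $k^{u(r-1)}\otimes\cdots\otimes k^{u}$ converts the $\ell$-th slot into $\Tr_{\mathfrak{B}}\big(\prod_i \tilde{L}(c_{\ell-1}^{(i)},c_\ell^{(i)})\,k^{u(r-\ell)}\big)$, which by definition is $C_{r-\ell}$ evaluated on the pair of rank-one compositions whose $i$-th parts are $c_{\ell-1}^{(i)}$ and $c_\ell^{(i)}$. Collecting the scalar powers pulled out of the $\tilde{L}$-entries and the closing vector, a given path contributes the monomial $\prod_i z_i^{\sum_{\ell=0}^{r-1} c_\ell^{(i)}}$.

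It then remains to re-index. Setting $\mu[j] := (c_{r-j}^{(1)},\dots,c_{r-j}^{(n)})$ for $1 \leq j \leq r-1$ and $\mu[r] := \delta^{*}$, summing over all path arrays becomes summing $\mu[1],\dots,\mu[r-1]$ over all length-$n$ rank-one compositions; with $j=r-\ell$ the $\ell$-th trace becomes $C_j(\mu[j+1],\mu[j])$; and the accumulated monomial reads $z^{\delta^{*}} z^{\mu[1]}\cdots z^{\mu[r-1]}$, which agrees with the $z$-content of \eqref{sum-rankr} once the factor $z^{\mu[j]}$ is grouped with $C_j$. Finally, by Proposition \ref{prop:coeff}, $C_j(\mu[j+1],\mu[j])$ vanishes unless $|\mu[j+1]|=|\mu[j]|$; since $|\mu[r]|=|\delta^{*}|=m$, only terms with every $\mu[j] \in \sigma(\delta^{*})$ contribute, so the summation ranges may be restricted to $\sigma(\delta^{*})$ without changing the sum. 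This is precisely \eqref{sum-rankr}.

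The work here is essentially bookkeeping rather than anything deep, and the one place where genuine care is needed is the index reversal: the tensor slot $\ell$ in \eqref{rankr_mpa} carries the weight $k^{u(r-\ell)}$, which \emph{decreases} with $\ell$, whereas $C_j$ is defined with the \emph{increasing} weight $k^{ju}$, forcing the identification $j = r-\ell$ and the relabeling $\mu[r] = \delta^{*}$, $\mu[j] = (c_{r-j}^{(i)})_i$. One must also track carefully how the $z$-powers emitted by the entries $[L(z)]_{10}=z\phi^{\dagger}$, $[L(z)]_{11}=z$ and by the closing vector $\binom{1}{z}$ redistribute, since \eqref{sum-rankr} displays them paired with the $C_j$ in an order shifted by one relative to the order in which they naturally emerge from the matrix product.
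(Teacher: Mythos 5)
Your argument is correct and is exactly the route the paper takes: its proof of this proposition is a one-sentence appeal to the matrix product formula \eqref{rankr_mpa}, the definition \eqref{a0_ar} of the $A_i(z)$, and the factorization of the trace over the $r-1$ tensor slots, and your write-up simply carries out that bookkeeping explicitly (including the correct index reversal $j=r-\ell$ and the restriction of the sums to $\sigma(\delta^{*})$ via the vanishing of $C_j$ when $|\alpha|\neq|\beta|$). Nothing further is needed.
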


\begin{proof}
This follows from the matrix product formula \eqref{rankr_mpa}, by decomposing the trace over the $r-1$ factors in the tensor product, and using the definition \eqref{a0_ar} of the $A_i(z)$ operators.
\end{proof}

\subsubsection{The case $\delta = (0^{n-m_1-m_2}, 1^{m_1}, 2^{m_2})$.}

\begin{prop}
Fix a rank-two anti-partition $\delta = (0^{n-m_1-m_2}, 1^{m_1}, 2^{m_2})$. The formula
\begin{align*}
f_{\delta}
=
\prod_{j=1}^{m_1+m_2}
(z_{n-j+1})
\times
\sum_{i=0}^{m_2}
t^{i m_1}
\prod_{j=1}^{i}
\left(
\frac{1-t^j}{1-qt^{m_1+j}}
\right)
e_i\Big(z_1,\dots,z_{n-m_1-m_2}\Big)
e_{m_2-i}\Big(z_{n-m_2+1},\dots,z_n\Big)
\end{align*}
holds, where $e_i$ denotes the $i$-th elementary symmetric polynomial, given by the generating series expression
\begin{align*}
\sum_{i=0}^N
e_i(x_1,\dots,x_N) y^i
=
\prod_{j=1}^{N}
(1+x_j y),
\quad
\text{for any alphabet}\ (x_1,\dots,x_N). 
\end{align*}
\end{prop}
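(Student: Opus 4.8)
The plan is to start from the rank-two matrix product formula \eqref{MPA} and evaluate its trace explicitly, exploiting the anti-dominance of $\delta$. Write $N_0 := n-m_1-m_2$ for the number of zero parts. Because the $0$'s, $1$'s and $2$'s of $\delta = (0^{N_0},1^{m_1},2^{m_2})$ occupy contiguous runs of positions, the operator word in \eqref{MPA} factorizes into three blocks,
\begin{align*}
A_{\delta_1}(z_1)\cdots A_{\delta_n}(z_n)
=
\Big(\prod_{j=1}^{N_0}(1+z_j\phi)\Big)
\Big(\prod_{j=N_0+1}^{N_0+m_1}z_j k\Big)
\Big(\prod_{j=N_0+m_1+1}^{n}(z_j\phid+z_j^2)\Big).
\end{align*}
In the middle block each factor is a scalar times $k$, so it equals $\big(\prod_{j=N_0+1}^{N_0+m_1}z_j\big)k^{m_1}$. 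In the first block all copies of $\phi$ commute, whence $\prod_{j=1}^{N_0}(1+z_j\phi) = \sum_{a\geq 0}e_a(z_1,\dots,z_{N_0})\,\phi^a$; likewise, factoring $z_j\phid+z_j^2 = z_j(\phid+z_j)$ and using that the $\phid$'s commute, the third block equals $\big(\prod_{j=N_0+m_1+1}^{n}z_j\big)\sum_{b\geq 0}(\phid)^{b}\,e_{m_2-b}(z_{N_0+m_1+1},\dots,z_n)$. Collecting the three scalar prefactors produces $\prod_{j=N_0+1}^{n}z_j = \prod_{k=1}^{m_1+m_2}z_{n-k+1}$, which is exactly the global product in the asserted formula, and $N_0+m_1+1 = n-m_2+1$ identifies the last alphabet with $(z_{n-m_2+1},\dots,z_n)$. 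Putting the blocks together, the trace in \eqref{MPA} becomes
\begin{align*}
\prod_{k=1}^{m_1+m_2}z_{n-k+1}\;\sum_{a,b\geq 0}e_a(z_1,\dots,z_{N_0})\,e_{m_2-b}(z_{n-m_2+1},\dots,z_n)\,{\rm Tr}\big(\phi^a k^{m_1}(\phid)^{b}k^u\big).
\end{align*}

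The remaining task is to compute ${\rm Tr}\big(\phi^a k^{m_1}(\phid)^{b}k^u\big)$ in the $t$-boson representation. Using $k\phid=t\phid k$ one moves $k^{m_1}$ to the right of $(\phid)^{b}$, turning this into $t^{m_1 b}\,{\rm Tr}\big(\phi^a(\phid)^{b}k^{m_1+u}\big)$. Since $(\phid)^{b}$ raises and $\phi^a$ lowers the boson level, $\phi^a(\phid)^{b}$ sends $\ket{n}$ to a multiple of $\ket{n+b-a}$, so its diagonal part vanishes unless $a=b$; for $a=b$ it is the diagonal operator $\sum_{n\geq 0}(t^{n+1};t)_a\,\ket{n}\bra{n}$ with $(x;t)_a:=\prod_{l=0}^{a-1}(1-xt^l)$. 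Thus the trace vanishes unless $a=b$, and when $a=b$, as a formal power series in $t$ (recall $q=t^u$),
\begin{align*}
{\rm Tr}\big(\phi^a k^{m_1}(\phid)^{a}k^u\big)
=
t^{m_1 a}\sum_{n\geq 0}(t^{n+1};t)_a\,(qt^{m_1})^{n}
=
\frac{t^{m_1 a}}{1-qt^{m_1}}\prod_{j=1}^{a}\frac{1-t^{j}}{1-qt^{m_1+j}},
\end{align*}
the last equality being the generating-function identity $\sum_{n\geq 0}(t^{n+1};t)_a\,Q^{n}=(t;t)_a/(Q;t)_{a+1}$ for $t$-binomial coefficients, which one proves by an easy induction on $a$ (split the factor $1-t^{n+a}$ off $(t^{n+1};t)_a$ to obtain a recursion).

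Feeding this back into the displayed trace, the $b$-sum collapses onto $b=a$, the prefactor $1-qt^{m_1}$ in \eqref{MPA} cancels the factor $1-qt^{m_1}$ in the denominator produced by the trace, and one is left with
\begin{align*}
f_{\delta}
=
\prod_{k=1}^{m_1+m_2}z_{n-k+1}
\;\sum_{a=0}^{m_2}
t^{m_1 a}\prod_{j=1}^{a}\frac{1-t^{j}}{1-qt^{m_1+j}}\;
e_a(z_1,\dots,z_{N_0})\,e_{m_2-a}(z_{n-m_2+1},\dots,z_n),
\end{align*}
which is the claimed identity (the summation range is cut to $0\leq a\leq m_2$ because $e_{m_2-a}$ vanishes otherwise). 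The only genuinely computational step --- and the one I expect to be the main obstacle --- is the trace evaluation, which boils down to the $t$-binomial generating function above; the block factorization and the resummation of each block into elementary symmetric polynomials are routine, but they rely essentially on the anti-dominance of $\delta$, without which the word $A_{\delta_1}(z_1)\cdots A_{\delta_n}(z_n)$ would not split into three independent blocks.
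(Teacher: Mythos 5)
Your proof is correct and follows essentially the same route as the paper's: factor the operator word into three blocks using the anti-dominance of $\delta$, push $k^{m_1}$ to the right via $k\phid=t\phid k$, observe that only the balanced terms $\phi^a(\phid)^a$ survive the trace (producing the pair of elementary symmetric polynomials), and evaluate ${\rm Tr}(\phi^a(\phid)^a k^{u+m_1})$ explicitly so that the prefactor $1-qt^{m_1}$ cancels. The only difference is cosmetic — you expand the blocks into $\sum_a e_a\,\phi^a$ and $\sum_b(\phid)^b e_{m_2-b}$ before commuting $k^{m_1}$ through, and you make the trace evaluation explicit via the $t$-binomial generating function, whereas the paper states the resulting product directly.
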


\begin{proof}
Using the matrix product formula \eqref{MPA} in the case $\mu = (0^{n-m_1-m_2}, 1^{m_1}, 2^{m_2})$, we find that
\begin{align*}
f_{\delta}
& =
(1-qt^{m_1}) \times {\rm Tr}
\left(
\prod_{i=1}^{n-m_1-m_2}
(1+z_i \phi)
\cdot
\prod_{j=n-m_1-m_2+1}^{n-m_2}
(z_j k)
\cdot
\prod_{l=n-m_2+1}^{n}
(z_l \phid + z_l^2)
\cdot
k^u
\right)
\\
& =
(1-qt^{m_1}) 
\prod_{j=1}^{m_1+m_2}
(z_{n-j+1})
 \times {\rm Tr}
\left(
\prod_{i=1}^{n-m_1-m_2}
(1+z_i \phi)
\cdot
\prod_{l=n-m_2+1}^{n}
(t^{m_1} \phid + z_l)
\cdot
k^{u+m_1}
\right),
\end{align*}
where we have used the commutation relation $k \phid = t \phid k$ to bring the product $k^{m_1}$ from the middle to the right of the expression. One can now evaluate the trace directly; the only terms which will have a non-zero trace are those proportional to $\phi^a \phi^{\dagger a}$, where $0 \leq a \leq m_2$. Summing over all such possibilities, we immediately find that
\begin{multline}
\label{sum-2}
f_{\delta}
=
(1-qt^{m_1}) 
\prod_{j=1}^{m_1+m_2}
(z_{n-j+1})
\times
\\
\sum_{a=0}^{m_2}
t^{a m_1}
{\rm Tr}
\left(
\phi^a \phi^{\dagger a}
k^{u+m_1}
\right)
e_a\Big(z_1,\dots,z_{n-m_1-m_2}\Big)
e_{m_2-a}\Big(z_{n-m_2+1},\dots,z_n\Big).
\end{multline}
Finally, the trace in \eqref{sum-2} can be evaluated explicitly:
\begin{align*}
{\rm Tr}
\left(
\phi^a \phi^{\dagger a}
k^{u+m_1}
\right)
=
\frac{1}{1-t^{u+m_1}}
\prod_{i=1}^{a}
\left(
\frac{1-t^i}{1-t^{u+m_1+i}}
\right)
=
\frac{1}{1-q t^{m_1}}
\prod_{i=1}^{a}
\left(
\frac{1-t^i}{1-q t^{m_1+i}}
\right),
\end{align*}
under the identification $t^u \equiv q$. Substituting this into \eqref{sum-2} yields the desired result.
\end{proof}

\section{Rank-one ASEP dualities}
\label{se:rank1reduc}

In this section we show how certain self-dualities between asymmetric simple exclusion processes, first found in \cite{schuetz97} and later elaborated in terms of ASEP generators in \cite{bcs}, arise within our formalism. This is achieved in three steps: {\bf 1.} The identification of suitable sectors $\delta$ and $\epsilon$ for the use of Theorem \ref{thm:central}; {\bf 2.} The calculation of the coefficients $\psi(\nu,\mu;t)$ in \eqref{f_sum} for all $\mu \in \sigma(\delta)$ and $\nu \in \sigma(\epsilon)$; {\bf 3.} Checking that the coefficients $\psi(\nu,\mu;t)$ are stable under the transition of the underlying lattice from $[1,...,n]$ to $\mathbb{Z}$, and that they match with the duality functions of \cite{bcs}.

\subsection{Occupation and position notation}

Let us first make contact between our notation and that used in \cite{bcs}. The ASEP generator in \cite{bcs} makes particles jump to the left at rate ${\sf p}$ and to the right at rate ${\sf q}$, and is expressed in terms of occupation data $\{\eta_i\}_{i \in \mathbb{Z}}$, where $\eta_i \in \{0,1\}$. In our setting, ${\sf p} = 1$ and ${\sf q} = t$, and the generator is also expressed in terms of occupation data $\{\nu_i\}_{i \in \mathbb{Z}}$.\footnote{A set of inhomogeneous rate parameters $\{a_i\}_{i \in \mathbb{Z}}$ are also employed in \cite{bcs}; we take all such parameters to be $1$.} Summing \eqref{eq:Lpsi} over all $i \in \mathbb{Z}$ and manipulating the summand slightly, we see that
\begin{align}
\label{Lbcs}
\sum_{i \in \mathbb{Z}}
L_i \left[ \psi(\cdot,\mu)\right](\nu) = \sum_{i \in \mathbb{Z}} \Big( t \nu_i(1-\nu_{i+1}) + (1-\nu_i)\nu_{i+1}\Big)
\Big[\psi(s_i \nu, \mu) - \psi(\nu,\mu) \Big],
\end{align}
which matches $L^{\rm occ}$ in \cite{bcs} under the identifications listed above. The reversed ASEP generator in \cite{bcs} makes particles jump to the left at rate ${\sf q}$ and to the right at rate ${\sf p}$, and is expressed in terms of position data $\vec{x} = \{x_i\}_{1 \leq i \leq m}$, where $x_i \in \mathbb{Z}$ is the position of the $i$-th particle. By abuse of notation, we let $\psi(\nu,\mu) \equiv \psi(\nu,\vec{x})$, where we have translated from occupation to position notation in the second argument of $\psi$. Summing \eqref{eq:Mpsi} over all $i \in \mathbb{Z}$ and converting to the position notation, we find that
\begin{align}
\label{Mbcs}
\sum_{i \in \mathbb{Z}}
M_i \left[\psi(\nu,\cdot)\right](\vec{x}) =
\sum_{k\in \ell(\vec{x})}  t \Big( \psi(\nu, \vec{x}_k^{-}) - \psi(\nu,\vec{x}) \Big)
+
\sum_{k\in r(\vec{x})} \Big( \psi(\nu, \vec{x}_k^{+}) - \psi(\nu,\vec{x}) \Big),
\end{align}
where $\ell(\vec{x})$ and $r(\vec{x})$ denote the positions of the leftmost and rightmost particles across all particle clusters, and where ${\vec{x}_k}^{\pm} := (x_1,\dots,x_{k-1},x_k\pm1,x_{k+1},\dots,x_m)$. This matches the reversed generator $L^{\rm part}$ in \cite{bcs}.

\begin{thm}[Sch\"utz \cite{schuetz97}, Borodin--Corwin--Sasamoto \cite{bcs}]
\label{thm:bcs}
Let $\nu$ be an infinite composition with parts $\nu_i \in \{0,1\}$ and fix an ordered $m$-tuple of integers $\vec{x}(\mu) = (x_1 < \cdots < x_m)$, which label the positions of ones in another composition $\mu$. The functions
\begin{align}
\label{eq:sbcs}
\psi\left(\nu,\mu \right)
=
\prod_{x \in \vec{x}(\mu)}
\left(
\prod_{i < x}
t^{\nu_i}
\right)
\nu_{x}
\end{align}
are well defined, since $\nu_i = 0$ for sufficiently small $i$, and satisfy the local duality relation
\begin{align}
\label{local-rels}
L_i \left[\psi\left(\cdot,\mu \right) \right](\nu) 
= 
M_i\left[\psi(\nu,\cdot)\right]\left(\mu \right),
\quad
\forall\ i \in \mathbb{Z},
\end{align}
where $L_i$ and $M_i$ are given by \eqref{eq:Lpsi} and \eqref{eq:Mpsi}, respectively.
\end{thm}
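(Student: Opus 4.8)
The plan is to establish \eqref{local-rels} by a direct, case-by-case verification, exploiting the product form of \eqref{eq:sbcs} together with the locality of both generators. Set $w_x(\nu):=\nu_x\prod_{j<x}t^{\nu_j}$, so that $\psi(\nu,\mu)=\prod_{x\in\vec{x}(\mu)}w_x(\nu)$. The key observation is that for any $x\notin\{i,i+1\}$ the factor $w_x(\nu)$ is unchanged under $\nu\mapsto s_i\nu$, since exchanging sites $i$ and $i+1$ alters neither the occupancy of $\nu$ at $x$ nor the number of particles of $\nu$ strictly to the left of $x$; likewise $\vec{x}(\mu)$ and $\vec{x}(s_i\mu)$ agree away from $\{i,i+1\}$. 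Hence, for fixed $i$, the common product $P:=\prod_{x\in\vec{x}(\mu)\setminus\{i,i+1\}}w_x(\nu)$ factors out of every term appearing in \eqref{local-rels}, and if we abbreviate $W_\kappa(\eta):=\prod_{x\in\vec{x}(\kappa)\cap\{i,i+1\}}w_x(\eta)$ then, by \eqref{eq:Lpsi} and \eqref{eq:Mpsi}, the relation \eqref{local-rels} reduces to the finite identity
\[
t^{\theta_i(\nu)}\bigl(W_\mu(s_i\nu)-W_\mu(\nu)\bigr)=t^{\theta_i(s_i\mu)}\,W_{s_i\mu}(\nu)-t^{\theta_i(\mu)}\,W_\mu(\nu),
\]
which involves only the pairs $(\nu_i,\nu_{i+1})$ and $(\mu_i,\mu_{i+1})$ in $\{0,1\}^2$.

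I would then run through the possibilities for these two pairs. If $\mu_i=\mu_{i+1}$ both sides vanish: the right side because $s_i\mu=\mu$ forces $\theta_i(s_i\mu)=\theta_i(\mu)$, and the left side because $W_\mu(s_i\nu)=W_\mu(\nu)$ whether $\{i,i+1\}$ sits entirely inside or entirely outside $\vec{x}(\mu)$. If $\nu_i=\nu_{i+1}$ the left side vanishes, and so does the right, by a one-line check under the standing assumption $\mu_i\neq\mu_{i+1}$: when $\nu_i=\nu_{i+1}=0$ every term on the right carries a factor $\nu_x=0$ at the unique particle site $x\in\{i,i+1\}$ of $\mu$ (and likewise of $s_i\mu$), and when $\nu_i=\nu_{i+1}=1$ the two $t$-weighted terms cancel, because the extra particle shifts the relevant $t$-exponent by precisely the amount built into $\theta_i$. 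The remaining four configurations, with $(\nu_i,\nu_{i+1})$ and $(\mu_i,\mu_{i+1})$ each ranging over $\{(1,0),(0,1)\}$, are genuine hops; in each of them one of $W_\mu(\nu),W_\mu(s_i\nu)$ equals $0$, and the surviving side evaluates, after cancelling powers of $t$ against $\theta_i$, to $\pm\,t^{\,\#\{j<i:\nu_j=1\}}$ or $\pm\,t^{\,\#\{j<i:\nu_j=1\}+1}$; the same expression is reproduced by the other side.

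I do not anticipate a real obstacle on this route — the content is entirely the bookkeeping of the single exponent $\#\{j<i:\nu_j=1\}$, the conceptual point being that the weight $\theta_i$ in \eqref{eq:Lpsi}--\eqref{eq:Mpsi} is exactly what compensates the change of this exponent when a particle hops across site $i$. The conceptually richer, but computationally heavier, alternative — and the one developed in the rest of this section — is to derive \eqref{local-rels} from Theorem \ref{thm:central}: one takes $\delta=(0^{n-m},r^m)$, fixes $p=1$, and evaluates ${\rm Coeff}_1[f_\mu,m]$ using the matrix-product (or summation) formulae for the ASEP polynomials; the resulting expansion coefficients $\psi(\nu,\mu;t)$, once the reduced anti-partition $\epsilon$ has been identified and the limit $n\to\infty$ taken, coincide with \eqref{eq:sbcs}. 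On that route the genuine difficulties are precisely the three steps announced at the start of the section: pinning down $\epsilon$, evaluating the traces, and establishing stability of the coefficients as the lattice expands to $\mathbb{Z}$.
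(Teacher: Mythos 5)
Your direct verification is correct, and it is a genuinely different route from the paper's. I checked the reduction and the case analysis: factoring out $P=\prod_{x\in\vec{x}(\mu)\setminus\{i,i+1\}}w_x(\nu)$ is legitimate because $w_x(s_i\nu)=w_x(\nu)$ for $x\notin\{i,i+1\}$ (for $x>i+1$ the count $\#\{j<x:\nu_j=1\}$ sees both sites $i$ and $i+1$, so it is swap-invariant), and all seven non-trivial configurations of $\bigl((\nu_i,\nu_{i+1}),(\mu_i,\mu_{i+1})\bigr)$ close: writing $T=t^{\#\{j<i:\nu_j=1\}}$, the four genuine hops give $\pm T$ or $\pm tT$ on both sides, and the $\theta_i$ weights in \eqref{eq:Lpsi}--\eqref{eq:Mpsi} exactly absorb the discrepancy between $t^{\nu_i}$ and $t^{\nu_{i+1}}$ in $w_{i+1}$. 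The paper, by contrast, never verifies \eqref{local-rels} by hand: the whole of Section \ref{se:rank1reduc} is the proof. It instantiates Theorem \ref{thm:central} with $\delta=(0^{n-m},r^m)$ and $p=1$, uses the $m$-staircase analysis of the Cherednik--Dunkl eigenvalues to pin the reduced sector down to $\epsilon=(0^{n-rm},1^{rm})$ (Theorem \ref{special rank r}), computes the coefficients $\psi(\nu,\mu)=d(t)\,t^{\Omega(\mu,\nu)}I(\mu,\nu)$ by induction on Hecke generators starting from the partition case (Theorem \ref{special rank r duality}), and only then translates to position notation and pads with zeros to pass to $\mathbb{Z}$. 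What your route buys is brevity and self-containedness --- it is essentially the classical guess-and-check argument of Sch\"utz and Borodin--Corwin--Sasamoto, and it needs none of the Macdonald machinery. What the paper's route buys is the thesis of the paper: the observable \eqref{eq:sbcs} is not postulated but \emph{produced} by the degeneration $q\to t^{-m}$ of the ASEP polynomials, and the same mechanism then generates the new rank-two dualities of Section \ref{se:rank2}, which a bare two-site verification would not have suggested. If you intend your argument to stand as the proof, you should write out the seven cases (or at least the four hop cases) explicitly rather than asserting them, since the entire content lives there; as written it is a correct plan rather than a finished proof.
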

The rest of this section is devoted to proving Theorem \ref{thm:bcs} within the framework developed in this paper.

\subsection{Reduction from rank-$r$ to rank-one}

\begin{defn}
Let $\mu = (\mu_1,\dots,\mu_n)$ be a composition and $\rho(\mu)$ be given by \eqref{rhomu}. The $m$-staircase of $\mu$, denoted $S_m(\mu)$, is an $n$-component vector defined as follows:
\begin{align*}
S_{m}(\mu)
:=m\mu-\rho(\mu)=
(m\mu_1,\dots,m\mu_n) + w_{\mu}\cdot(1,2,\dots,n),
\end{align*}
where we recall that $w_{\mu} \in S_n$ is the minimal-length permutation such that $\mu = w_\mu \cdot \mu^{+}$.
\end{defn}

\begin{prop}\label{eigenvalue and m staircase}
Let $E_{\mu}$ and $E_{\nu}$ be any two non-symmetric Macdonald polynomials, and let $y_i(\mu;q,t)$ and $y_i(\nu;q,t)$ be their eigenvalues under the action of the Cherednik--Dunkl operator $Y_i$, respectively. Then
\begin{align*}
y_i(\mu;t^{-m},t) = y_i(\nu;t^{-m},t),
\quad
\forall\ 1 \leq i \leq n
\iff
S_m(\mu) = S_m(\nu).
\end{align*}
\end{prop}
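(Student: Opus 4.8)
\noindent The plan is to prove the equivalence by direct computation: one specializes the explicit eigenvalue formula \eqref{rhomu} at $q=t^{-m}$ and compares it term by term with the definition of the $m$-staircase. No property of non-symmetric Macdonald polynomials beyond the eigenvalue formula is needed.

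First I would substitute $q=t^{-m}$ into $y_i(\mu;q,t)=q^{\mu_i}t^{\rho(\mu)_i+n-i+1}$, which gives
\begin{align*}
y_i(\mu;t^{-m},t)
=
t^{-m\mu_i+\rho(\mu)_i+n-i+1}
=
t^{-S_m(\mu)_i+n-i+1},
\end{align*}
the last equality being nothing but the componentwise reading of $S_m(\mu)=m\mu-\rho(\mu)$. The identical statement holds with $\mu$ replaced by $\nu$. For each fixed $1\leq i\leq n$ the additive shift $n-i+1$ is common to the two sides, so $y_i(\mu;t^{-m},t)=y_i(\nu;t^{-m},t)$ is equivalent to the equality of $t$-exponents $-S_m(\mu)_i+n-i+1=-S_m(\nu)_i+n-i+1$, \ie\ to $S_m(\mu)_i=S_m(\nu)_i$. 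Imposing this for every $i$ yields $S_m(\mu)=S_m(\nu)$, and the converse implication is read off the same chain of equalities; both directions of the equivalence are obtained at once.

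The only step requiring a word of justification --- and the sole place the argument might look cavalier --- is passing from equality of powers of $t$ to equality of their exponents. Since $m\in\mathbb{Q}_{>0}$, these exponents are rational numbers; choosing $N\in\mathbb{N}$ with $mN\in\mathbb{Z}$ and working inside $\mathbb{Q}(t^{1/N})$ (equivalently, treating $t$ as transcendental, or as a generic positive real $\neq 1$), the map $a\mapsto t^a$ is injective on $\tfrac{1}{N}\mathbb{Z}$, which makes the exponent comparison legitimate. Apart from this, the proof is a pure unwinding of definitions, so I expect no genuine obstacle.
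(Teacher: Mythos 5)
Your proof is correct and follows essentially the same route as the paper: substitute $q=t^{-m}$ into the eigenvalue formula \eqref{rhomu} and equate exponents of $t$, which is exactly the componentwise statement $S_m(\mu)=S_m(\nu)$. The only difference is your explicit justification for passing from equal powers of $t$ to equal exponents (working in $\mathbb{Q}(t^{1/N})$ with $t$ generic), a point the paper's one-line proof leaves implicit.
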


\begin{proof}
The eigenvalues $y_i(\mu;q,t)$ and $y_i(\nu;q,t)$ match for all $1 \leq i \leq n$ if and only if \eqref{eq:munu} holds. Setting $q=t^{-m}$ in \eqref{eq:munu} and equating the exponents, it is equivalent to the relation
\begin{align*}
S_m(\mu) = m\mu-\rho(\mu)=m\nu-\rho(\nu) = S_m(\nu).
\end{align*}
\end{proof}

\begin{remark}
Notice that we can also write a weaker version of Proposition \ref{eigenvalue and m staircase},
\begin{align*}
y_i(\mu;t^{-m},t) = y_i(\nu;t^{-m},t)
\quad
\forall\ 1 \leq i \leq n
\implies
S_m(\mu^{+}) \sim S_m(\nu^{+})
\end{align*}
where the equivalence relation $\sim$ is defined as follows:
\begin{align*}
S_m(\mu) \sim S_m(\nu)
\iff\
\exists\ \sigma\ \text{such that}\
S_m(\mu) = \sigma \cdot S_m(\nu).
\end{align*}
In other words, the matching of all eigenvalues is only possible if $S_m(\mu^{+})$ and $S_m(\nu^{+})$ are permutable to each other. This is sometimes more useful that Proposition \ref{eigenvalue and m staircase} itself, since the $m$-staircase of a partition is just given by
\begin{align*}
S_m(\mu^{+})
=
(m\mu^{+}_1,\dots,m\mu^{+}_n)
+
(1,2,\dots,n),
\end{align*}
obviating the need to calculate $\rho(\mu)$. 
\end{remark}

\begin{thm}\label{special rank r}
Let $r$ and $m$ be two positive integers such that $n-rm \geq 0$. Consider the anti-partition $\delta = (0^{n-m},r^m)$, and let $f_{\delta}(z_1,\dots,z_n;q,t)$ be the associated ASEP polynomial. Then ${\rm Coeff}_1[f_{\delta},m] \equiv {\rm Coeff}[f_{\delta},m]$ exists, and we have
\begin{align}
\label{rank-r-red}
{\rm Coeff}[f_{\mu},m]
=
\sum_{\nu \in \sigma(\epsilon)}
\psi(\nu,\mu;t)
z^{\nu},
\quad
\forall\ \mu \in \sigma(\delta),
\end{align}
for appropriate coefficients $\psi(\nu,\mu;t)\equiv \psi(\nu,\mu)$, where $\epsilon = (0^{n-rm},1^{rm})$.
\end{thm}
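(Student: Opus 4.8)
The plan is to reduce to the anti-dominant case $\mu=\delta$, establish that the stated limit exists, and then read off the expansion from the explicit product/summation formulae for $f_\delta$.

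\emph{Reduction to $\mu=\delta$.} By the exchange relation \eqref{f_exch}, every $f_\mu$ with $\mu\in\sigma(\delta)$ is obtained from $f_\delta$ by a sequence of the inverse Hecke operators $T_i^{-1}$. These operators are independent of $q$, so they commute with the limit $q\to t^{-m}$; moreover each $T_i^{-1}$ maps a polynomial of degree $\le 1$ in every $z_k$ to another such polynomial and preserves the total degree, hence preserves the sector $\sigma(\epsilon)$. Therefore it suffices to prove \eqref{rank-r-red} for $\mu=\delta$, after which the general case follows by acting with products of $T_i^{-1}$ on both sides, exactly as in the proof of Theorem~\ref{thm:sector}; the coefficients $\psi(\nu,\mu)$ of a general $\mu$ are then the image of those of $\delta$ under the corresponding $M_i$-action.

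\emph{Existence of the limit.} Use the explicit formula \eqref{sum-rankr} for $f_\delta$ (equivalently \eqref{rankr_mpa}). All of the $q$-dependence of $f_\delta$ is carried by the trace coefficients $C_j(\mu[j+1],\mu[j];q,t)$, which by Proposition~\ref{prop:coeff} --- or directly, by expanding the $t$-boson trace as a geometric series over subsets --- are rational functions whose denominator divides $\prod_{c=0}^{\ell_j}(1-q^jt^c)$, with $\ell_j\le m$ the number of ``crossings'' between the rank-one compositions $\mu[j+1]$ and $\mu[j]$. At $q=t^{-m}$ the factor $1-q^jt^c$ vanishes exactly when $c=jm$; since $0\le c\le\ell_j\le m\le jm$ this forces $j=1$ and $c=\ell_1=m$. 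Hence $f_\delta$ has at most a simple pole at $q=t^{-m}$, coming only from those summands with $\mathrm{supp}(\mu[1])\cap\mathrm{supp}(\mu[2])=\emptyset$, so ${\rm Coeff}_1[f_\delta,m]={\rm Coeff}[f_\delta,m]$ is well defined; non-vanishing will follow by tracking a single surviving monomial.

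\emph{Identifying the sector.} Taking the residue in \eqref{sum-rankr}, the surviving monomials are $z^{\delta^{*}+\mu[1]+\dots+\mu[r-1]}$ subject to $\mathrm{supp}(\mu[1])\cap\mathrm{supp}(\mu[2])=\emptyset$; such a monomial has degree $rm$, and since each of $\delta^{*},\mu[1],\dots,\mu[r-1]$ carries $m$ ones, it lies in $\sigma(\epsilon)$ with $\epsilon=(0^{n-rm},1^{rm})$ precisely when these $r$ rank-one compositions have pairwise disjoint support. So the content of the theorem is that, after setting $q=t^{-m}$, every partially overlapping configuration contributes zero. For $r=2$ this is immediate: the single summation index is forced to its maximal value, leaving $z_{n-m+1}\cdots z_n\,e_m(z_1,\dots,z_{n-m})$, which is squarefree and nonzero because $n\ge 2m=rm$. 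For $r\ge 3$ one must propagate the disjointness constraint through the whole chain $C_1,\dots,C_{r-1}$; I would do this either by induction on $r$, peeling off in \eqref{rankr_mpa} the outermost tensor factor carrying $k^{u(r-1)}$ (which is regular at $q=t^{-m}$) to reduce to the rank-$(r-1)$ statement in the remaining slots, or by a finer analysis of the geometric series $\mathrm{Tr}(w\,k^{ju})$ for $\phi,\phi^{\dagger}$-words $w$, showing that non-maximal-crossing contributions vanish or cancel in pairs at $q=t^{-m}$. As an independent check that $\epsilon=(0^{n-rm},1^{rm})$, note that $f_\delta=E_\delta$ and that multiplying \eqref{eq:eigYi} by $(1-qt^m)$ and letting $q\to t^{-m}$ shows ${\rm Coeff}_1[f_\delta,m]$ is a joint $Y_i$-eigenfunction at $q=t^{-m}$, hence lies in the span of the regular $E_\nu(z;t^{-m},t)$ with $S_m(\nu)=S_m(\delta)$ (Proposition~\ref{eigenvalue and m staircase}); a short computation gives $S_m(\delta)=(m+1,\dots,n,mr+1,\dots,mr+m)$, whose minimal solution in the order $\prec$ is a rank-one composition with anti-dominant rearrangement $(0^{n-rm},1^{rm})$.

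\emph{Main obstacle.} The genuinely hard step is the $r\ge 3$ case of the vanishing above: controlling the iterated $t$-boson traces $C_j(\cdot,\cdot;t^{-m},t)$ for $2\le j\le r-1$ precisely enough to see that every partially overlapping configuration drops out of the residue. The reduction to $\mu=\delta$ and the pole count are routine given the formulae already at hand.
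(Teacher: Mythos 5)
Your treatment of the existence of the limit (the simple-pole count via Proposition \ref{prop:coeff} and the summation formula \eqref{sum-rankr}) and of the reduction from general $\mu\in\sigma(\delta)$ to $\mu=\delta$ by inverse Hecke operators both match the paper's proof. The gap is in the central step, identifying the sector. Your primary route --- taking the residue directly in \eqref{sum-rankr} and arguing that all ``partially overlapping'' configurations drop out --- is exactly the part you concede you cannot complete for $r\ge 3$, and it is genuinely problematic: the pole condition only forces ${\rm supp}(\mu[1])\cap{\rm supp}(\mu[2])=\emptyset$, while the coefficients $C_j$ for $j\ge 2$ are regular and generically nonzero at $q=t^{-m}$, so the non-multilinear contributions do not vanish term by term and would have to cancel in a way you have not exhibited. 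Your fallback (the ``independent check'') is in fact the paper's actual argument, but as you state it, it is also incomplete: by Proposition \ref{prop:partial_E} the residue expands over \emph{all} compositions $\nu\prec\delta$ with $y_\nu(w)=y_\delta(w)$ at $q=t^{-m}$, so it is not enough to compute the \emph{minimal} solution of $S_m(\delta^{+})\sim S_m(\nu^{+})$; one must show that \emph{every} non-tautological solution is rank-one, otherwise some $E_\nu$ with a part $\geq 2$ could enter the expansion and destroy multilinearity. That classification is the real content of the paper's proof: after stripping the common tail of the two staircases, the relation reduces to finding all partitions $\lambda$ with parts bounded by $r$ and weight $rm$ satisfying \eqref{eq:5.6}, and a three-case analysis on the extremal parts shows the only solutions are $\lambda=(1^{rm})$ and the tautological $\lambda=(r^m,0^{rm-m})$. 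Without that exhaustive step the theorem is not proved.

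A smaller point: your assertion that ${\rm Coeff}_1[f_\delta,m]$ is a joint $Y_i$-eigenfunction at $q=t^{-m}$ and ``hence lies in the span of the regular $E_\nu(z;t^{-m},t)$'' needs justification, since at resonant $q$ the operators $Y_i$ need not remain diagonalizable. The paper avoids this issue by performing the partial Lagrange interpolation of Proposition \ref{prop:partial_E} \emph{before} taking the limit, which is the rigorous substitute for the eigenspace argument you invoke.
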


\begin{proof}
We begin by showing that ${\rm Coeff}[f_{\delta},m]$ exists. To establish this, we need to show that the expression $\frac{1}{1-qt^m}$ appears at most linearly in $f_{\delta}$. Using the summation formula \eqref{sum-rankr} together with the results of Proposition \ref{prop:coeff}, we see that the coefficient $C_1(\mu[2],\mu[1];q,t)$ is the only possible source of the factor $\frac{1}{1-qt^m}$ (indeed, another coefficient $C_j(\mu[j+1],\mu[j];q,t)$ with $j \geq 2$ would need to produce $\frac{1}{1-q^j t^{jm}}$ in order to contribute to this factor, which can never happen since the product in the denominator of \eqref{c_j} ranges maximally up to $i=m$). The existence of ${\rm Coeff}[f_{\delta},m]$ is then immediate.

Let us now apply the result of Proposition \ref{prop:partial_E}, in the case $\mu = \delta$ and $p=1$. We see that
\begin{align}
\label{eq:?}
{\rm Coeff}[f_{\delta},m]
=
{\rm Coeff}[E_{\delta},m]
=
\lim_{q \rightarrow t^{-m}}
(1-qt^{m})
\left(
\sum_{\nu \in \mathcal{E}_{\delta}}
c_{\nu}(q,t) E_{\nu}(z;q,t)
\right)
\end{align}
for some family of coefficients $c_{\nu}(q,t)$ and where the sum is over compositions in the set
\begin{align}
\label{eq:5.5}
\mathcal{E}_{\delta}
=
\left\{\nu : \nu \prec \delta,\ y_{\nu}(w) = y_{\delta}(w)\ \text{at}\ q=t^{-m} \right\}.
\end{align}
We will show that the only possible compositions $\nu$ in the set \eqref{eq:5.5} are rank-one. By Proposition \ref{eigenvalue and m staircase} and the remark immediately following it, all compositions in the set \eqref{eq:5.5} would need to satisfy the $m$-staircase relation
\begin{align}
\label{eq:5.7}
S_m(\delta^{+}) \sim S_m(\nu^{+}),
\quad
|\delta| = |\nu|.
\end{align}
Calculating the $m$-staircase of $\delta^{+}$, we find
\begin{align}
\label{stair1}
S_m(\delta^{+})
=
m \cdot (r^m,0^{n-m})+(1,\dots,n)
=
(\underbrace{rm+1,\dots,rm+m}_{m},\underbrace{m+1,\dots,n}_{n-m}),
\end{align}
where we indicate the cardinalities of the two ``blocks'' in $S_m(\delta^{+})$ underneath, for clarity. On the other hand, in view of the fact that $|\nu| = rm$, the composition $\nu$ must have at least $n-rm$ zeros. We can therefore write the $m$-staircase of its dominant reordering as
\begin{align}
\label{stair2}
S_m(\nu^{+})
=
m \cdot (\nu^{+}_1,\dots,\nu^{+}_{rm},0^{n-rm})+(1,\dots,n)
=
(\underbrace{m \nu^{+}_1 +1,\dots, m \nu^{+}_{rm} + rm}_{rm},\underbrace{rm+1,\dots,n}_{n-rm}).
\end{align}
Comparing the final $n-rm$ parts of the two staircases \eqref{stair1} and \eqref{stair2}, we find that they already agree, without the need to permute their order in any way. Suppressing these parts from both \eqref{stair1} and \eqref{stair2}, the remaining entries of $S_m(\delta^{+})$ are permutable to a ``true'' staircase (with step-size one). Our problem thus simplifies to finding partitions $\lambda$ such that
\begin{align*}
(m+1,\dots,rm + m)
\sim
(m \lambda_1 + 1,\dots,m \lambda_{rm} + rm),
\end{align*}
or, after subtracting $m$ from every component,
\begin{align}
\label{eq:5.6}
(1,\dots,rm)
\sim
(m (\lambda_1-1) + 1,\dots,m(\lambda_{rm}-1) + rm).
\end{align} 
A partition solution $\lambda$ of \eqref{eq:5.6} would need to contain two parts $0 \le \lambda_i,\lambda_j \le r$ such that
\begin{align}
\label{lambda_i}
m (\lambda_i - 1) + i &= 1,
\\
\label{lambda_j}
m (\lambda_j - 1) +j &= rm,
\end{align}
with $1 \le i,j \le rm$. Let us examine the possible resolutions of \eqref{lambda_i}, \eqref{lambda_j}.

\smallskip
 {\bf (a)} If the two parts are equal 
($\lambda_i = \lambda_j$), subtracting \eqref{lambda_i} from \eqref{lambda_j} we find that $j-i = rm-1$, which implies $j=rm$ and $i=1$. This identifies $\lambda_i$ and $\lambda_j$ as the first and last parts of the partition; all intermediate parts are then forced to assume the same value. All freedom is exhausted, and we find $\lambda = (1^{rm})$ as the unique solution in the case $\lambda_i = \lambda_j$.

\smallskip
{\bf (b)} Assume a solution exists with $\lambda_i > \lambda_j$. In that case, subtracting \eqref{lambda_i} from \eqref{lambda_j} leads to the inequality $rm-1 < j - i $. There are no values of $i$ and $j$ for which this holds.

\smallskip
{\bf (c)} Finally, assume a solution exists with $\lambda_i < \lambda_j$. Since $\lambda$ is a partition, this would imply $i > j$. Subtracting \eqref{lambda_i} from \eqref{lambda_j}, we observe the equation $m(\lambda_j-\lambda_i) = rm -1 + i -j$. The value of 
$i-j$ is positive, while $\lambda_j - \lambda_i$ is bounded by $r$ (the parts of $\lambda$ cannot exceed $r$), so the only possible resolution in this case is $\lambda_j = r$, $\lambda_i = 0$, $i-j = 1$. This constrains $\lambda_k = r$ for all $k \leq j$ and $\lambda_k = 0$ for all $k \geq j+1$, and since the total weight of $\lambda$ is $rm$, we find that necessarily $j=m$. We recover the solution 
$\lambda = (r^m,0^{rm-m})$.

\smallskip
Translating these findings to our original setting, we have shown that \eqref{eq:5.7} admits only two types of solutions: compositions 
$\nu$ such that $\nu^{+} = (1^{rm},0^{n-rm})$, or $\nu^{+} = (r^m,0^{n-m})$. The latter solution is tautological, since it lives in the same sector as $\delta$; it follows that the set \eqref{eq:5.5} consists only of rank-one compositions.\footnote{One can easily check that the composition $\nu = (1^{rm-m},0^{n-rm},1^m)$ is a particular solution of the equation $S_m(\delta) = S_m(\nu)$, and in fact the minimal one. However for our purposes the precise ordering of parts in $\nu$ is not of interest, since just a statement about the sector of $\nu$ is good enough.} Rank-one non-symmetric Macdonald polynomials are multilinear in $(z_1,\dots,z_n)$, so the right hand side of \eqref{eq:?} must also have a multilinear dependence. It follows, by the action of inverse Hecke generators on \eqref{eq:?}, that a general polynomial $f_{\mu}$ with $\mu \in \sigma(\delta)$ admits the expansion
\begin{align*}
{\rm Coeff}[f_{\mu},m]
=
\sum_{\nu \in \sigma(\epsilon)}
\psi(\nu,\mu)
z^{\nu},
\quad
\epsilon = (0^{n-rm},1^{rm}).
\end{align*}

\end{proof}

\begin{thm}
\label{special rank r duality}
The coefficients in equation \eqref{rank-r-red} are given by
\begin{align}
\label{rank-1-coeffs}
\psi(\nu,\mu)
=
d(t)\cdot
t^{\Omega(\mu,\nu)}
\cdot
I(\mu,\nu),
\quad
\Omega(\mu,\nu)
=
\sum_{1\leq i<j \leq n}
(\bm{1}_{\mu_i<\mu_j})
(\bm{1}_{\nu_i=\nu_j=1}),
\end{align}
where $I(\mu,\nu)$ denotes the indicator function
\begin{align}
\label{ind-def}
I(\mu,\nu)
=
\left\{
\begin{array}{ll}
0, & \exists\ k : (\mu_k,\nu_k) = (r,0),
\\ \\
1, & \text{otherwise}, 
\end{array}
\right.
\end{align}
and $d(t)$ is an overall common factor of the coefficients, and need not be specified explicitly.
\end{thm}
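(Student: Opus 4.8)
The plan is to compute the coefficients $\psi(\nu,\mu)$ directly from the explicit summation formula \eqref{sum-rankr} for $f_{\delta}$, combined with the reduction statement of Theorem~\ref{special rank r}. The strategy breaks into two halves: first compute $\psi(\nu,\delta)$ for the anti-partition $\delta = (0^{n-m},r^m)$ itself, then propagate to general $\mu \in \sigma(\delta)$ by applying inverse Hecke generators and tracking how the coefficients transform.

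For the base case $\mu = \delta$, I would take $\mathrm{Coeff}[f_{\delta},m]$ using \eqref{sum-rankr}. As noted in the proof of Theorem~\ref{special rank r}, the only source of the pole $\frac{1}{1-qt^m}$ is the single factor $C_1(\mu[2],\mu[1];q,t)$, so extracting the residue amounts to isolating the numerator of $C_1$ at $q = t^{-m}$ in exactly that one slot, while all other $C_j$ for $j \geq 2$ are evaluated at $q=t^{-m}$ directly. The trace definitions $L(0,1)=\phi$, $L(1,0)=\phi^\dagger$, $L(\alpha,\alpha)=1$ mean each $C_j(\mu[j+1],\mu[j];q,t)$ is a geometric-series trace over products of $\phi$'s, $\phi^\dagger$'s and $k^{ju}$; these evaluate to explicit rational functions, and the power of $t$ that appears is governed by the commutation relations $tk\phi = \phi k$ and $k\phi^\dagger = t\phi^\dagger k$, which is precisely what will generate the exponent $\Omega$. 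I expect the telescoping structure of the nested sums (each $\mu[j]$ forced, in the surviving terms, to be compatible with the staircase constraint derived in Theorem~\ref{special rank r}) to collapse the multiple summation to a single term for each target monomial $z^{\nu}$, yielding $\psi(\nu,\delta) = d(t)\, t^{\Omega(\delta,\nu)} I(\delta,\nu)$. The indicator $I$ will emerge because a factor $\phi^\dagger$ with nothing to its left to cancel it (i.e. a site where $\mu_k = r$ but $\nu_k = 0$, meaning in the rank-one reduction the "large" particle has no counterpart) forces the trace to vanish.

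To pass from $\delta$ to general $\mu \in \sigma(\delta)$, I would argue by induction on the length of $w_\mu$, using \eqref{f_exch}: if $\mu_i < \mu_{i+1}$ then $f_{s_i\mu} = T_i^{-1} f_\mu$, and by Theorem~\ref{special rank r} the reduction commutes with $T_i^{-1}$, so $\mathrm{Coeff}[f_{s_i\mu},m] = T_i^{-1}\,\mathrm{Coeff}[f_\mu,m]$. Acting with $T_i^{-1}$ on $\sum_\nu \psi(\nu,\mu) z^\nu$ using the explicit polynomial realization of $T_i^{-1}$, one reads off a recursion relating $\psi(\cdot,s_i\mu)$ to $\psi(\cdot,\mu)$ and $\psi(s_i\cdot,\mu)$. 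It then remains to verify that the closed form \eqref{rank-1-coeffs} satisfies this recursion — i.e. that $t^{\Omega(s_i\mu,\nu)}I(s_i\mu,\nu)$ and $t^{\Omega(\mu,\nu)}I(\mu,\nu)$ are related by exactly the combination of terms produced by $T_i^{-1}$. Since $\Omega$ counts pairs $(i,j)$ with $\mu_i < \mu_j$ and $\nu_i = \nu_j = 1$, swapping $\mu_i \leftrightarrow \mu_{i+1}$ changes $\Omega$ by a controlled amount depending only on $(\nu_i,\nu_{i+1})$, and the indicator $I$ is unchanged under permutations of $\mu$ that stay in the sector (it only sees multiplicities of the pattern $(\mu_k,\nu_k)=(r,0)$); this makes the verification a finite case check over $(\nu_i,\nu_{i+1}) \in \{0,1\}^2$.

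The main obstacle I anticipate is the bookkeeping in the base case: proving that the nested sums in \eqref{sum-rankr} genuinely collapse to a single surviving term per monomial, and that the accumulated power of $t$ from iterated use of $k\phi^\dagger = t\phi^\dagger k$ across the $r-1$ tensor factors assembles exactly into $\Omega(\delta,\nu) = \sum_{i<j}\bm{1}_{\delta_i<\delta_j}\bm{1}_{\nu_i=\nu_j=1}$ rather than some other quadratic statistic. A clean way to organize this may be to track, for each pair of sites, whether a $\phi^\dagger$ at the right site must be commuted past a $k$ coming from a $0$-to-$r$ "worldline" at the left site; this should give a bijective/combinatorial reading of the exponent. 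The propagation step, by contrast, should be routine once the base case is pinned down, being just a consistency check of an explicit formula against an explicit recursion.
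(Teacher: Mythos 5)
Your overall architecture (a base case in one distinguished composition of the sector, then propagation through the sector by Hecke generators) is the same as the paper's, and your induction step is essentially the paper's run in the opposite direction ($T_i^{-1}$ from the anti-dominant end rather than $T_i$ from the dominant end). But there is a genuine gap in the base case, and it is not a small one. You base the induction at the anti-partition $\delta=(0^{n-m},r^m)$ and propose to extract $\psi(\nu,\delta)$ by evaluating the residue of the nested trace sum \eqref{sum-rankr} at $q=t^{-m}$. For this $\delta$ the exponent $\Omega(\delta,\nu)=\sum_{i<j}\bm{1}_{\delta_i<\delta_j}\bm{1}_{\nu_i=\nu_j=1}$ is genuinely nontrivial, so your base case already contains all of the combinatorial content of the theorem; you acknowledge this ("the main obstacle I anticipate is the bookkeeping") but do not carry it out, and the claimed collapse of the $(r-1)$-fold sum to a single surviving term per monomial, with the commutators $k\phid=t\phid k$ assembling into exactly $\Omega$, is asserted rather than proved. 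The paper avoids this entirely by basing the induction at the \emph{dominant} composition $\delta^{+}=(r^m,0^{n-m})$: there $\Omega(\delta^{+},\nu)\equiv 0$, and the matrix product formula \eqref{rankr_mpa} shows $f_{\delta^{+}}$ has the common factor $\prod_{i=1}^m z_i$, is homogeneous of degree $rm$, and is symmetric in $(z_{m+1},\dots,z_n)$; combined with the multilinearity of the reduction target from Theorem~\ref{special rank r}, the only possible expansion is $d(t)\prod_{i=1}^m z_i\cdot e_{rm-m}(z_{m+1},\dots,z_n)$, which is the claimed formula with no trace computation at all. If you want to keep your plan, you should either do the trace bookkeeping in full or switch your base point to $\delta^{+}$.

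A secondary error: you assert that the indicator $I(\mu,\nu)$ "is unchanged under permutations of $\mu$ that stay in the sector" because "it only sees multiplicities of the pattern $(\mu_k,\nu_k)=(r,0)$". This is false — $I$ depends on positions, not multiplicities. For instance $(\mu_i,\mu_{i+1})=(0,r)$, $(\nu_i,\nu_{i+1})=(0,1)$ has no forbidden pair, but after swapping $\mu_i\leftrightarrow\mu_{i+1}$ one gets $(\mu_i,\nu_i)=(r,0)$ and $I$ drops to $0$. This matters for your recursion: the vanishing of $I$ in exactly this configuration is what kills the awkward case $\nu_i<\nu_{i+1}$ (where $T_i$ acts as $(t-1)+t\,s_i$ rather than as a simple swap), and without it the Hecke recursion does not close on the proposed closed form. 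Your "finite case check over $(\nu_i,\nu_{i+1})\in\{0,1\}^2$" would expose this if executed carefully, but as written the sketch rests on a false invariance claim.
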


\begin{proof}
We begin by considering the case $\mu = \delta^{+}$ of \eqref{rank-r-red}, namely, the situation when $\mu$ is the unique partition in the sector $\sigma(\delta)$. Using the matrix product formula \eqref{rankr_mpa} we know that $f_{\delta^{+}}(z_1,\dots,z_n;q,t)$ contains the common factor $\prod_{i=1}^{m} z_i$ (each $A_r(z)$ operator in \eqref{rankr_mpa} has a common factor of $z$), while being a homogeneous polynomial in $(z_1,\dots,z_n)$ of total degree $rm$. In addition, this polynomial is symmetric in the subset of variables $(z_{m+1},\dots,z_n)$. On the other hand, \eqref{rank-r-red} says that ${\rm Coeff}[f_{\delta^{+}},m]$ admits an expansion on the space of multilinear polynomials in $(z_1,\dots,z_n)$; the only possible expansion which respects all of these requirements is
\begin{align*}
{\rm Coeff}[f_{\delta^+},m]
=
d(t) \cdot \prod_{i=1}^{m} z_i \cdot e_{(rm-m)}(z_{m+1},\dots,z_n)
=
d(t) 
\times
\sum_{\nu \in \sigma(\epsilon)}
I(\delta^{+},\nu)
f_{\nu}
\end{align*}
for some constant $d(t)$, where $\epsilon = (0^{n-rm},1^{rm})$ and $I(\delta^{+},\nu)$ is given by \eqref{ind-def}. This confirms the formula \eqref{rank-1-coeffs} for the case $\mu = \delta^{+}$, since one clearly has $\Omega(\delta^{+},\nu) = 0$ for all $\nu$. 

We use the preceding special case as the basis for induction. Let us suppose that $\psi(\nu,\mu)$ is given by \eqref{rank-1-coeffs} for all $\nu$, where $\mu$ is some composition in the sector $\sigma(\delta)$, which contains (at least) one pair of parts ($\mu_i,\mu_{i+1})$ such that $\mu_i > \mu_{i+1}$. We then act on \eqref{rank-r-red} with $T_i$, giving
\begin{align*}
T_i \cdot {\rm Coeff}[f_{\mu},m]
=
{\rm Coeff}[f_{s_i \mu},m]
=
\sum_{\nu \in \sigma(\epsilon)}
\psi(\nu,\mu)
(T_i \cdot f_{\nu}).
\end{align*}
For the action of $T_i$ on $f_{\nu}$, we should distinguish the three possibilities (i) $\nu_i > \nu_{i+1}$, (ii) $\nu_i = \nu_{i+1}$ and (iii) $\nu_i < \nu_{i+1}$, as given by equations \eqref{localtKZ} and \eqref{other_case}. Case (i) means that $(\mu_i,\mu_{i+1}) = (r,0)$ and $(\nu_i,\nu_{i+1}) = (1,0)$, and one easily sees that
\begin{align}
\label{i}
\psi(\nu,\mu)
(T_i \cdot f_{\nu})
=
\psi(\nu,\mu)
f_{s_i \nu}
=
\psi(s_i\nu,s_i\mu)
f_{s_i \nu}.
\end{align}
Case (ii) means that $(\mu_i,\mu_{i+1}) = (r,0)$ and $(\nu_i,\nu_{i+1}) = (1,1)$ (we exclude the possibility that $(\nu_i,\nu_{i+1}) = (0,0)$, since we would then have $(\mu_i,\nu_i) = (r,0)$, causing the indicator function \eqref{ind-def} to vanish), and accordingly,
\begin{align}
\label{ii}
\psi(\nu,\mu)
(T_i \cdot f_{\nu})
=
t
\psi(\nu,\mu)
f_{s_i \nu}
=
\psi(s_i\nu,s_i\mu)
f_{s_i \nu},
\end{align}
where the final equality exploits the fact that in this case $\Omega(\mu,\nu)+1 = \Omega(s_i\mu,s_i\nu)$. Finally, case (iii) means that $(\mu_i,\mu_{i+1}) = (r,0)$ and $(\nu_i,\nu_{i+1}) = (0,1)$, which is another situation where the indicator function \eqref{ind-def} vanishes. We thus have the trivial fact
\begin{align}
\label{iii}
\psi(\nu,\mu)
(T_i \cdot f_{\nu})
=
0
=
\psi(s_i\nu,s_i\mu)
f_{s_i \nu}.
\end{align}
One finds the same expression for the right hand side in all three cases \eqref{i}--\eqref{iii}; we have thus demonstrated that
\begin{align*}
{\rm Coeff}[f_{s_i \mu},m]
=
\sum_{\nu \in \sigma(\epsilon)}
\psi(s_i\nu,s_i\mu)
f_{s_i \nu}
=
\sum_{\nu \in \sigma(\epsilon)}
\psi(\nu,s_i\mu)
f_{\nu},
\end{align*}
which is the required inductive step. This completes the proof of \eqref{rank-1-coeffs}.
\end{proof}

\subsection{Back to the proof of Theorem \ref{thm:bcs}}
\label{ssec:back}

In the previous subsection we started from a rank-$r$ ASEP polynomial $f_{\mu}$ such that 
$\mu^{-} = (0^{n-m},r^{m})$, and sent $q \rightarrow t^{-m}$. Quite remarkably, one finds that ${\rm Coeff}[f_{\mu},m]$ reduces to a linear combination of rank-one ASEP polynomials $f_{\nu}$ such that $\nu^{-} = (0^{n-rm},1^{rm})$, where the expansion coefficients are given by \eqref{rank-1-coeffs}. Applying the result of Theorem \ref{thm:central}, we now obtain the desired duality statement:
\begin{cor}
In the same notation as Theorem \ref{special rank r duality}, the functions
\begin{align}
\label{eq:dual_1}
\psi(\nu,\mu) = t^{\Omega(\mu,\nu)} \cdot I(\mu,\nu)
\end{align} 
satisfy the local duality relations
\begin{align}
\label{eq:dual_2}
L_i [\psi(\cdot,\mu)](\nu) = M_i [\psi(\nu,\cdot)](\mu), \qquad\forall\ 1 \leq i \leq n-1,
\end{align}
where the left hand side is given by \eqref{eq:Lpsi}, and the right hand side by \eqref{eq:Mpsi}. Note that we have dropped the constant $d(t)$ from \eqref{eq:dual_1}; we are allowed to do this because it is common to all coefficients $\psi(\nu,\mu)$ in the sectors we have chosen, and therefore plays no role in \eqref{eq:dual_2}.
\end{cor}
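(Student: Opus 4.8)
The plan is to derive the corollary directly from Theorem~\ref{thm:central}, with the data $\delta = (0^{n-m}, r^m)$, $\epsilon = (0^{n-rm}, 1^{rm})$ and $p = 1$; the real work has already been done in Theorems~\ref{special rank r} and \ref{special rank r duality}, so what remains is to check that their output is phrased in exactly the form that Theorem~\ref{thm:central} requires. Recall that Theorem~\ref{special rank r} guarantees that ${\rm Coeff}_1[f_\mu, m]$ exists for every $\mu \in \sigma(\delta)$ and that
\begin{align*}
{\rm Coeff}[f_\mu, m] = \sum_{\nu \in \sigma(\epsilon)} \psi(\nu,\mu)\, z^\nu,
\end{align*}
while Theorem~\ref{special rank r duality} gives the closed form $\psi(\nu,\mu) = d(t)\, t^{\Omega(\mu,\nu)} I(\mu,\nu)$ for the coefficients.

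The one bridging observation I would make explicit is that for every $\nu \in \sigma(\epsilon)$ --- which is a rank-one composition, all parts lying in $\{0,1\}$ --- one has $f_\nu(z;q,t) = z^\nu$ identically in $(q,t)$. Indeed, the monomials $\{z^\nu\}$ with $\nu$ rank-one already solve the exchange relations \eqref{localtKZ} (this is the remark following \eqref{admissible_act}, being a finite-variable rewriting of \eqref{action-L_i}), and they plainly satisfy the cyclic boundary condition \eqref{cyclic}; by the uniqueness characterisation of the ASEP polynomials they must coincide with $f_\nu$. Hence the displayed expansion is literally an instance of \eqref{f_sum}, with $f_\nu(z;t^{-m},t) = z^\nu$ well defined (the monomials have no poles in $q$). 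The hypotheses of Theorem~\ref{thm:central} are therefore in force --- established here through Proposition~\ref{prop:partial_E} rather than Conjecture~\ref{theorema_egregium}, as its footnote anticipates --- and that theorem yields that $\psi(\nu,\mu) = d(t)\, t^{\Omega(\mu,\nu)} I(\mu,\nu)$ satisfies the functional duality relations \eqref{functional_dual}, which are precisely \eqref{eq:dual_2}.

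Finally I would drop the overall factor $d(t)$: the expressions \eqref{eq:Lpsi} and \eqref{eq:Mpsi} defining $L_i[\psi(\cdot,\mu)](\nu)$ and $M_i[\psi(\nu,\cdot)](\mu)$ are linear and homogeneous in $\psi$, and $d(t)$ is a single nonzero factor common to all coefficients $\psi(\nu,\mu)$ within the two sectors $\sigma(\delta)$ and $\sigma(\epsilon)$, so rescaling by $d(t)^{-1}$ preserves \eqref{eq:dual_2}. I do not expect a genuine obstacle anywhere in this argument, since the corollary is a true corollary; the only point demanding a little care is the bookkeeping just described, namely verifying that $\sigma(\epsilon)$ consists of rank-one compositions on which $f_\nu = z^\nu$, so that the monomial expansion produced by Theorem~\ref{special rank r} is genuinely an instance of \eqref{f_sum} and Theorem~\ref{thm:central} applies without modification.
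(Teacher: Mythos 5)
Your proposal is correct and follows essentially the same route as the paper, which likewise obtains the corollary by feeding the expansion of Theorem~\ref{special rank r} with the coefficients of Theorem~\ref{special rank r duality} directly into Theorem~\ref{thm:central} and then discarding the common factor $d(t)$. Your explicit check that $f_\nu(z;q,t)=z^\nu$ for rank-one $\nu$ (so that the monomial expansion really is an instance of \eqref{f_sum}) is a bookkeeping point the paper leaves implicit, but it is the same argument.
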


\begin{remark}
Even though we used a higher-rank ASEP polynomial $f_{\mu}$ in the derivation of this duality statement, it is clear that  \eqref{eq:dual_2} itself is a rank-one equation: because of the sector that $\mu$ belongs to, the $L_i$ generator sees only particles of type $r$ and zeros, and so the left hand side of \eqref{eq:dual_2} describes the evolution of an ordinary (single-species) ASEP.
\end{remark}
To complete the proof of Theorem \ref{thm:bcs}, one should translate the observable \eqref{eq:dual_1} into the occupation--position notation employed therein. With $\vec{x}(\mu) = (x_1(\mu) < \cdots < x_m(\mu))$ denoting the positions of the $r$-particles in the composition $\mu$, after a simple calculation one finds that
\begin{align*}
\psi\left(\nu,\mu\right)
=
t^{-m(m-1)/2}
\times
\prod_{j=1}^{m}
\left(
\prod_{1 \leq i < x_j(\mu)}
t^{\nu_i}
\right)
\nu_{x_j(\mu)},
\end{align*}
which matches the form of the right hand side of \eqref{eq:sbcs} up to the factor $t^{-m(m-1)/2}$. This factor is spurious; it does not play any role in the equations \eqref{eq:dual_2} other than as a spectating constant.

Finally, our analysis so far has proceeded on the finite lattice $[1,\dots,n]$. It is a trivial matter to transition to the integer lattice. Indeed, the observable \eqref{eq:dual_1} does not depend on $n$ in any way (beyond the fact that it is the length of the participating compositions). One can therefore embed the existing observables within the space of functions on $\mathbb{Z} \times \mathbb{Z}$, simply by padding the finite compositions $\mu$ and $\nu$ with zeros on both sides. This reproduces the family of observables \eqref{eq:sbcs}, and finishes our derivation of Theorem \ref{thm:bcs}.

\section{Rank-two ASEP dualities}
\label{se:rank2}

The aim of this section is to produce new types of observables, which generalize those found in \cite{bcs}, being duality functions with respect to two {\it multi-species} asymmetric simple exclusion processes. We will restrict our attention to dualities between mASEPs with {\it two} distinct particle species, in this way finding a natural rank-two extension of Theorem \ref{thm:bcs}. 

For other recent progress related to higher-rank duality functions, making use of quantum group symmetries, we refer the reader to \cite{carinciGRSa,carinciGRSb,kuan,kuan2}.

\subsection{Reduction relations between a pair of rank-two sectors}
\label{sec:rank2-red}

\begin{thm}\label{thm:rank2-sector}
Fix three integers $n,m_1,m_2 \geq 0$ such that $m_1+m_2 \leq n$, and an anti-partition
\begin{align*}
\delta = (0^{n-m_1-m_2},1^{m_1},2^{m_2}),
\end{align*}
Then choosing another integer $p$ such that $1 \leq p \leq \min(n-m_1-m_2,m_2)$, one has the expansion
\begin{align}
\label{rank_2_red}
{\rm Coeff}[f_{\mu},p+m_1]
=
\sum_{\nu \in \sigma(\epsilon)}
\psi(\nu,\mu;t)
f_{\nu}(z;t^{-p-m_1},t),
\quad
\forall\ \mu \in \sigma(\delta),
\end{align}
for appropriate coefficients $\psi(\nu,\mu;t)\equiv \psi(\nu,\mu)$, where
\begin{align*}
\epsilon = (0^{n-m_1-m_2-p},1^{m_1+2p},2^{m_2-p}).
\end{align*}
\end{thm}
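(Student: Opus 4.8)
The plan is to run the rank-one reduction argument of Theorems \ref{special rank r} and \ref{special rank r duality} in the rank-two setting, with the role of $m$ now played by $p+m_1$. First I would establish that ${\rm Coeff}[f_\delta,p+m_1]$ exists and is non-zero, using the explicit rank-two summation formula for $f_\delta$ (equivalently \eqref{MPA}). In that formula the only denominators are the factors $(1-qt^{m_1+j})$, $1\le j\le m_2$, and $(1-qt^{m_1+j})$ vanishes at $q=t^{-p-m_1}$ precisely when $j=p$; since each summand contains this factor at most once, the pole at $q=t^{-p-m_1}$ is at most simple, so ${\rm Coeff}[f_\delta,p+m_1]$ is well defined. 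Taking the residue term by term leaves a sum over $p\le i\le m_2$ whose $i=p$ summand carries $e_p(z_1,\dots,z_{n-m_1-m_2})\,e_{m_2-p}(z_{n-m_2+1},\dots,z_n)$; this is a non-zero polynomial exactly because the hypotheses guarantee $p\le n-m_1-m_2$ and $p\le m_2$, which shows ${\rm Coeff}[f_\delta,p+m_1]\ne 0$.

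Next I would apply Proposition \ref{prop:partial_E} with $(\mu,m)=(\delta,p+m_1)$ and pole order $1$. Since $\delta$ is an anti-partition, $f_\delta=E_\delta$, and the proposition expresses ${\rm Coeff}[f_\delta,p+m_1]={\rm Coeff}[E_\delta,p+m_1]$ as a limit of a $\mathbb{Q}(q,t)$-linear combination of the $E_\nu$ with $\nu$ in the set $\mathcal{E}_\delta=\{\nu:\nu\prec\delta,\ y_\nu(w)=y_\delta(w)\text{ at }q=t^{-p-m_1}\}$. The key point is then to show $\mathcal{E}_\delta\subseteq\sigma(\epsilon)$. By Proposition \ref{eigenvalue and m staircase} and the remark following it, any $\nu\in\mathcal{E}_\delta$ satisfies $|\nu|=|\delta|=m_1+2m_2$ and $S_{p+m_1}(\nu^{+})\sim S_{p+m_1}(\delta^{+})$; writing $\nu^{+}=(2^a,1^b,0^{n-a-b})$ with $2a+b=m_1+2m_2$ (legitimate since $r=2$), one compares the two multisets $S_{p+m_1}(\delta^{+})$ and $S_{p+m_1}(\nu^{+})$, each a union of three runs of consecutive integers, and shows the only solutions are $(a,b)=(m_2,m_1)$ (i.e. $\nu^{+}=\delta^{+}$) and $(a,b)=(m_2-p,\,m_1+2p)$ (i.e. $\nu^{+}=\epsilon^{+}$). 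The former is discarded, since $\nu^{+}=\delta^{+}$ would place $\nu$ in $\sigma(\delta)$, contradicting $\nu\prec\delta$ (the anti-partition is $\prec$-minimal in its sector). This multiset classification is the step I expect to be the main obstacle: unlike in the rank-one case, the three runs making up $S_{p+m_1}(\delta^{+})$ may overlap, so the argument will require a careful case analysis in the spirit of cases (a)--(c) in the proof of Theorem \ref{special rank r}.

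Finally I would convert to the $f$-basis and take the limit. Each $E_\nu$ with $\nu\in\sigma(\epsilon)$ expands triangularly via \eqref{E_expand} into $\{f_\kappa:\kappa\in\sigma(\epsilon)\}$. The polynomial $f_\epsilon$ is regular at $q=t^{-p-m_1}$ — its rank-two summation formula has denominators $(1-qt^{m_1+2p+j})$ with $1\le j\le m_2-p$, which specialize to $1-t^{p+j}\ne 0$ — and every $f_\kappa$, $\kappa\in\sigma(\epsilon)$, is obtained from $f_\epsilon$ by successive $q$-free operators $T_i^{-1}$ (see \eqref{f_init}, \eqref{f_exch}), hence is regular there as well. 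Combining this with the previous paragraph, ${\rm Coeff}[f_\delta,p+m_1]=\lim_{q\to t^{-p-m_1}}(1-qt^{p+m_1})\sum_{\kappa\in\sigma(\epsilon)}\tilde c_\kappa(q,t)\,f_\kappa(z;q,t)$; since the left side exists and the limits $f_\kappa(z;t^{-p-m_1},t)$ are linearly independent (each being monic with a distinct leading monomial $z^{\kappa}$), each scalar limit $\psi(\kappa,\delta;t)$ exists, which proves \eqref{rank_2_red} for $\mu=\delta$. The general case follows by applying to this identity the products of $T_i^{-1}$ that carry $f_\delta$ to $f_\mu$: these commute with ${\rm Coeff}[\,\cdot\,,p+m_1]$ because they contain no $q$, and since $T_i^{-1}f_\kappa$ is a combination of $f_\kappa$ and $f_{s_i\kappa}$ (both in $\sigma(\epsilon)$), the expansion is preserved, giving \eqref{rank_2_red} for all $\mu\in\sigma(\delta)$.
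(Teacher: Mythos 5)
Your overall strategy coincides with the paper's: apply Proposition \ref{prop:partial_E} at $q=t^{-p-m_1}$ with a simple pole, show that the set $\mathcal{E}_\delta$ is contained in $\sigma(\epsilon)$ via the staircase criterion of Proposition \ref{eigenvalue and m staircase}, and then pass to the $f$-basis using \eqref{E_expand} together with the regularity of the polynomials $f_\kappa$, $\kappa\in\sigma(\epsilon)$, at $q=t^{-p-m_1}$. Your preliminary verification that ${\rm Coeff}[f_\delta,p+m_1]$ exists and is non-zero (via the rank-two summation formula), your justification for discarding the tautological solution $\nu^{+}=\delta^{+}$, and your closing remark on extracting the individual coefficients $\psi(\kappa,\delta;t)$ from the linear independence of the limiting $f_\kappa$ are all correct, and are in places more careful than what the paper writes down.

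The genuine gap is exactly the step you flag yourself: the classification of compositions $\nu$ with $S_{p+m_1}(\nu^{+})\sim S_{p+m_1}(\delta^{+})$. You assert that the only solutions are $\nu^{+}=\delta^{+}$ and $\nu^{+}=(2^{m_2-p},1^{m_1+2p},0^{n-m_1-m_2-p})$, but this assertion is the entire technical content of the theorem and is where essentially all of the work in the paper's proof resides; asserting it without proof leaves the identification of the target sector $\epsilon$ unjustified. Concretely, the paper parametrizes $\nu^{+}=(2^{m_2-r},1^{m_1+2r},0^{n-m_1-m_2-r})$ with $r\geq 1$, observes that the staircase entries already agree on the index ranges $[1,m_2-r]$, $[m_2+1,m_1+m_2]$ and $[m_1+m_2+r+1,n]$, and so reduces the problem to matching the two multisets built from the remaining ranges $\mathcal{A}_1=[m_2-r+1,m_2]$ and $\mathcal{A}_2=[m_1+m_2+1,m_1+m_2+r]$; this reduction is precisely what disposes of your worry about the three runs of consecutive integers overlapping. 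The cases $r>p$ and $r<p$ are then eliminated by comparing minimal elements: for $r>p$ the entry $S_{p+m_1}(\nu^{+})_{m_2-r+1}=m_1+m_2+1+p-r$ is strictly smaller than every element of $\S_1(\delta^{+})\cup\S_2(\delta^{+})$, while for $r<p$ the entry $S_{p+m_1}(\delta^{+})_{m_1+m_2+1}=m_1+m_2+1$ is strictly smaller than every element of $\S_1(\nu^{+})\cup\S_2(\nu^{+})$. Once this lemma is supplied, the rest of your proposal goes through as written.
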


\begin{proof}
Let us begin by remarking that this theorem is not obvious from the matrix product formula \eqref{MPA}, for although the latter allows us to manually calculate ${\rm Coeff}[f_{\mu},p+m_1]$, the resulting expression is not easily re-expressed in the basis of the polynomials $f_{\nu}$. 

It is therefore best to resort to a similar style of proof as that of Theorem \ref{special rank r}. In the present situation, given that our starting sector (the sector of $\delta$) is rank-two, rather than rank-$r$, we are able to be a little more explicit. We will show that all members of the set
\begin{align}
\label{eq:6.1}
\mathcal{E}_{\delta}
=
\left\{\nu : \nu \prec \delta,\ y_{\nu}(w) = y_{\delta}(w)\ \text{at}\ q=t^{-p-m_1} \right\}
\end{align}
live in the composition sector $\sigma(\epsilon)$, allowing us to conclude that
\begin{align}
\label{eq:6.6}
{\rm Coeff}[f_{\delta},p+m_1]
=
{\rm Coeff}[E_{\delta},p+m_1]
=
\lim_{q\rightarrow t^{-p-m_1}}
(1-qt^{p+m_1})
\left(
\sum_{\nu \in \sigma(\epsilon)}
c_{\nu}(q,t) E_{\nu}(z;q,t)
\right),
\end{align}
for some family of coefficients $c_{\nu}(q,t)$. Any compositions in \eqref{eq:6.1} would need to have the same weight as $\delta$, with parts of at most size two, so it is clearly sufficient to restrict our search to compositions that have the dominant ordering
\begin{align*}
\nu^{+}
=
(2^{m_2-r},1^{m_1+2r},0^{n-m_1-m_2-r}),
\end{align*}
with $r \geq 1$ becoming the only degree of freedom. Our aim is to prove that $r=p$ is the only possible value for $r$, which we do by exhausting all solutions of the relation $S_{p+m_1}(\delta^{+}) \sim S_{p+m_1}(\nu^{+})$. With $\nu^+$ as above and $\delta^+=(2^{m_2},1^{m_1},0^{n-m_1-m_2})$ we see that 
\begin{align*}
\delta^+_i=\nu^+_i,
\quad
\forall\
i\in[1,m_2-r]\cup[m_2+1,m_1+m_2]\cup[m_1+m_2+r+1,n],
\end{align*}
accordingly $S_{p+m_1}(\delta^+)_i=S_{p+m_1}(\nu^+)_i$ for these values of $i$. Thus it suffices to study instead the relation
\begin{align}
\label{eq:6.2}
\S_1(\delta^+)\cup\S_2(\delta^+) \sim \S_1(\nu^+)\cup\S_2(\nu^+),
\end{align}
where
\begin{align*}
\S_1(\mu) &=\{S_{p+m_1}(\mu)_i|\ i\in \mathcal{A}_1\},\qquad \S_2(\mu)=\{S_{p+m_1}(\mu)_i|\ i\in \mathcal{A}_2\},
\\
\mathcal{A}_1 &=[m_2-r+1,m_2], \qquad \mathcal{A}_2=[m_1+m_2+1,m_1+m_2+r].
\end{align*}
Let us first suppose that $r>p$. Consider the following component of $\mathcal{S}_{1}(\nu^{+})$, corresponding with the lowest index in $\mathcal{A}_1$:
\begin{align}
\label{eq:Snu}
S_{p+m_1}(\nu^+)_{m_2-r+1}
=
(p+m_1) \cdot \nu^+_{m_2-r+1}+m_2-r+1
=
m_1+m_2+1+p-r.
\end{align}
This element must be reproduced somewhere in $\S_1(\delta^+)\cup\S_2(\delta^+)$, or the relation \eqref{eq:6.2} does not hold.
It is easy to check that the smallest element in $\S_1(\delta^+)$ is given by
\begin{align*}
S_{p+m_1}(\delta^+)_{m_2-r+1}
=
2m_1+2p+m_2-r+1.
\end{align*}
Clearly $S_{p+m_1}(\delta^+)_{m_2-r+1} >S_{p+m_1}(\nu^+)_{m_2-r+1}$ and hence there is no element in the set $\S_1(\delta^+)$ which reproduces the value on the right hand side of \eqref{eq:Snu}. Similarly, the smallest element in $\S_2(\delta^+)$ is given by
\begin{align}
\label{eq:6.4}
S_{p+m_1}(\delta^+)_{m_1+m_2+1}
=
m_1+m_2+1,
\end{align}
and since by assumption $r>p$, it follows that $S_{p+m_1}(\nu^+)_{m_2-r+1} < S_{p+m_1}(\delta^+)_{m_1+m_2+1}$. We conclude that there is also no element in $\S_2(\delta^+)$ with value matching the right hand side of \eqref{eq:Snu}. Thus for $r>p$, the relation \eqref{eq:6.2} has no solutions.

Second, we suppose that $r<p$. Consider the component of $\S_2(\delta^{+})$ corresponding with the lowest index in $\mathcal{A}_2$, as given by \eqref{eq:6.4}. This element must be reproduced somewhere in $\S_1(\nu^+)\cup\S_2(\nu^+)$. Since $\nu^+_i=1$ for all $i\in \mathcal{A}_1\cup \mathcal{A}_2$, the smallest element in $\S_1(\nu^+)\cup\S_2(\nu^+)$ is obtained by taking the first index in $\mathcal{A}_1$. We then find that 
\begin{align*}
S_{p+m_1}(\nu^+)_{m_2-r+1}
=
m_1+m_2+1+p-r
>
m_1+m_2+1
=
S_{p+m_1}(\delta^+)_{m_1+m_2+1}.
\end{align*}
Hence there is no element in $\S_1(\nu^+)\cup\S_2(\nu^+)$ which reproduces the right hand side of \eqref{eq:6.4}, and accordingly the relation \eqref{eq:6.2} has no solutions for $r < p$. 

We have shown that compositions $\nu$ such that $\nu^{+} = (2^{m_2-p},1^{m_1+2p},0^{n-m_1-m_2-p})$ are the only possible members of the set \eqref{eq:6.1}. From here it is quite straightforward to see that
\begin{align*}
\nu = (1^p,0^{n-m_1-m_2-p},2^{m_2-p},1^{m_1+p})
\end{align*}
satisfies $S_{p+m_1}(\delta) = S_{p+m_1}(\nu)$, and is the minimal such composition. The claim \eqref{eq:6.6} is proved; one can now follow a similar procedure as in the proof of Theorem \ref{thm:sector}, to transform the right hand side of \eqref{eq:6.6} to the basis of ASEP polynomials. This leads to the generic expansion \eqref{rank_2_red}.
\end{proof}

\begin{thm}
\label{thm:generic}
The coefficients in equation \eqref{rank_2_red} are given by
\begin{align}
\label{coeff_2}
\psi(\nu,\mu)
=
d(t)\cdot
t^{\Omega(\mu,\nu)}
\cdot
I(\mu,\nu),
\quad
\Omega(\mu,\nu)
=
\sum_{1\leq i<j \leq n}
(\bm{1}_{\mu_i<\mu_j})
(\bm{1}_{\nu_i=\nu_j=1}),
\end{align}
where $I(\mu,\nu)$ denotes the indicator function
\begin{align}
\label{ind-def2}
I(\mu,\nu)
=
\left\{
\begin{array}{ll}
0, & \exists\ k : \mu_k > \nu_k =0,\ \  \text{or}\ \ \mu_k < \nu_k =2,
\\ \\
1, & \text{otherwise}, 
\end{array}
\right.
\end{align}
and $d(t)$ is an overall common factor of the coefficients.
\end{thm}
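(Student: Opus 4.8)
The plan is to mirror the proof of Theorem~\ref{special rank r duality}: first determine $\psi(\nu,\mu)$ for the single base composition $\mu=\delta$, and then propagate the formula \eqref{coeff_2} across the whole sector $\sigma(\delta)$ with the Hecke generators, using that the ASEP polynomials $f_\nu$ obey the $t$KZ exchange relations \eqref{localtKZ}, \eqref{other_case}.

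\textbf{Base case $\mu=\delta$.} Starting from the explicit summation formula for $f_\delta$ derived from \eqref{MPA} in Section~\ref{se:mpa}, one reads off ${\rm Coeff}[f_\delta,p+m_1]$: the only factor that becomes singular at $q=t^{-p-m_1}$ is the $j=p$ term of the product $\prod_{j=1}^{i}\frac{1-t^j}{1-qt^{m_1+j}}$, so only the summands with $i\ge p$ contribute to the residue and one is left with an explicit homogeneous polynomial of the form $\prod_{j=n-m_1-m_2+1}^{n}z_j\cdot\sum_{i\ge p}c_i(t)\,e_i(z_1,\dots,z_{n-m_1-m_2})\,e_{m_2-i}(z_{n-m_2+1},\dots,z_n)$. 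Two features of this polynomial do the work. First, $\delta$ has $\delta_i=\delta_{i+1}$ whenever $i,i+1$ lie in a common block, so $T_i f_\delta=tf_\delta$ there, hence ${\rm Coeff}[f_\delta,p+m_1]$ is symmetric in each of the three blocks of variables; feeding this back into the expansion ${\rm Coeff}[f_\delta,p+m_1]=\sum_{\nu\in\sigma(\epsilon)}\psi(\nu,\delta)f_\nu$ (which exists by Theorem~\ref{thm:rank2-sector}), and using the exchange relations together with linear independence of the $f_\nu$, forces $\psi(\nu,\delta)$ to be constant on each orbit of the parabolic subgroup permuting the three blocks of positions. Second, from the displayed form, every monomial of ${\rm Coeff}[f_\delta,p+m_1]$ has exponents in $\{0,1\}$ on the zero-block, exactly $1$ on the middle block, and in $\{1,2\}$ on the two-block; comparing this with the leading monomials $z^\nu$ of the $f_\nu$ and inducting downward in the order $\prec$ shows that $\psi(\nu,\delta)=0$ unless the zeros of $\nu$ lie among those of $\delta$ and the twos of $\nu$ among those of $\delta$, i.e.\ unless $I(\delta,\nu)=1$. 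The set $\{\nu\in\sigma(\epsilon):I(\delta,\nu)=1\}$ is a single parabolic orbit, and on it the count $\Omega(\delta,\nu)=2pm_1+p^2$ is constant, so combining the two features yields $\psi(\nu,\delta)=d(t)\,t^{\Omega(\delta,\nu)}\,I(\delta,\nu)$, which is \eqref{coeff_2} for $\mu=\delta$.

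\textbf{Propagation.} Every $f_\mu$ with $\mu\in\sigma(\delta)$ is obtained from $f_\delta$ by iterating \eqref{f_exch}, i.e.\ $f_{s_i\mu}=T_i^{-1}f_\mu$ at an ascent $\mu_i<\mu_{i+1}$, so it suffices to show \eqref{coeff_2} is stable under one such step. Since ${\rm Coeff}[\,\cdot\,,p+m_1]$ commutes with the $q$-independent operator $T_i^{-1}$, applying $T_i^{-1}$ to \eqref{rank_2_red} gives ${\rm Coeff}[f_{s_i\mu},p+m_1]=\sum_{\nu\in\sigma(\epsilon)}\psi(\nu,\mu)\,T_i^{-1}f_\nu$; expanding $T_i^{-1}f_\nu=\bigl(t^{-1}T_i-(1-t^{-1})\bigr)f_\nu$ by \eqref{localtKZ}, \eqref{other_case} (the three cases $\nu_i>\nu_{i+1}$, $\nu_i=\nu_{i+1}$, $\nu_i<\nu_{i+1}$) and collecting the coefficient of each $f_\nu$ produces recursions for $\psi(\nu,s_i\mu)$ in terms of $\psi(\nu,\mu)$ and $\psi(s_i\nu,\mu)$. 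One verifies these are solved by $\psi(\nu,s_i\mu)=d(t)\,t^{\Omega(s_i\mu,\nu)}\,I(s_i\mu,\nu)$ using two elementary facts: the symmetries $I(s_i\mu,\nu)=I(\mu,s_i\nu)$ and $\Omega(s_i\mu,\nu)=\Omega(\mu,s_i\nu)-\bm{1}_{\nu_i=\nu_{i+1}=1}$ (valid for $\mu_i<\mu_{i+1}$, both immediate from the pair-sum form of $\Omega$ and the positionwise form of $I$); and the vanishing of $I(\mu,\nu)$, respectively $I(\mu,s_i\nu)$, in exactly the cases where the exchange term would otherwise be mis-indexed — which holds because $\mu_i<\mu_{i+1}$ for a rank-two $\mu$ forces $\mu_{i+1}\ge1$ and $\mu_i\le1$, so one of the two clauses of \eqref{ind-def2} is triggered. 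The remaining verification is a finite case check over $(\mu_i,\mu_{i+1})\in\{(0,1),(0,2),(1,2)\}$, after which \eqref{coeff_2} follows by induction.

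\textbf{Main obstacle.} The hard part is the base case, and within it the assertion that $\psi(\nu,\delta)=0$ unless $I(\delta,\nu)=1$: matching the explicit polynomial ${\rm Coeff}[f_\delta,p+m_1]$ against its expansion in the $f_\nu$-basis requires controlling which monomials can occur in $f_\nu$ for $\nu\in\sigma(\epsilon)$, and for a clean argument one may need to fall back on the matrix-product form \eqref{MPA}. By contrast, the constancy on parabolic orbits and the entire propagation step are routine once the combinatorial identities for $\Omega$ and $I$ above are recorded.
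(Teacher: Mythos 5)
Your route is genuinely different from the paper's. The paper does not induct over the sector at all: it observes that $\psi(\nu,\mu)$ can be read off as the coefficient of the leading monomial $z^\nu$ in ${\rm Coeff}[f_\mu,p+m_1]$, and computes that coefficient for \emph{every} $\mu\in\sigma(\delta)$ at once from the matrix product formula \eqref{MPA}. Extracting $z^\nu$ from ${\rm Tr}\big(A_{\mu_1}(z_1)\cdots A_{\mu_n}(z_n)k^u\big)$ replaces each $A_{\mu_k}(z_k)$ by a single boson $B_{\mu_k,\nu_k}\in\{1,k,\phi,\phi^\dagger\}$; the indicator $I(\mu,\nu)$ appears because the required power of $z_k$ is simply absent from $A_{\mu_k}(z_k)$ when $\mu_k>\nu_k=0$ or $\mu_k<\nu_k=2$, and the factor $t^{\Omega(\mu,\nu)}$ arises from normal-ordering the string $B_{\mu_1,\nu_1}\cdots B_{\mu_n,\nu_n}$: only the top power $k^{p+m_1}$ survives the residue at $q=t^{-p-m_1}$, and each disordered pair $\phi\cdots\phi^\dagger$, $k\cdots\phi^\dagger$, $\phi\cdots k$ contributes one power of $t$ to its coefficient. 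Your propagation step is instead the mechanism of the paper's proof of Theorem \ref{special rank r duality}; the identities $I(s_i\mu,\nu)=I(\mu,s_i\nu)$ and $\Omega(s_i\mu,\nu)=\Omega(\mu,s_i\nu)-\bm{1}_{\nu_i=\nu_{i+1}=1}$ for $\mu_i<\mu_{i+1}$ are correct, as are the constancy of $\psi(\cdot,\delta)$ on parabolic orbits (forced by block-symmetry of ${\rm Coeff}[f_\delta,p+m_1]$) and the evaluation $\Omega(\delta,\nu)=2pm_1+p^2$ on the single orbit where $I(\delta,\cdot)=1$.

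The one genuine gap is where you place it: the vanishing $\psi(\nu,\delta)=0$ when $I(\delta,\nu)=0$. Knowing the monomial support of ${\rm Coeff}[f_\delta,p+m_1]$ does not by itself control the coefficients of an expansion over the \emph{inhomogeneous} basis $\{f_\nu\}$: in the downward induction on $\prec$, the coefficient of $z^\nu$ in ${\rm Coeff}[f_\delta,p+m_1]$ equals $\psi(\nu,\delta)$ plus $\sum_{\nu'\succ\nu}\psi(\nu',\delta)\,[z^\nu]f_{\nu'}$ coming from the sub-leading monomials of higher $f_{\nu'}$, and nothing you have established shows these contributions cancel when $I(\delta,\nu)=0$. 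The parabolic-orbit constancy reduces the problem to one representative per bad orbit, but the same obstruction reappears there. The fix you anticipate is the right one and is exactly the paper's argument: at the level of the matrix product the coefficient of $z^\nu$ is manifestly zero in the bad cases. Note, however, that once one invokes \eqref{MPA} coefficient-by-coefficient, the trace can be evaluated for arbitrary $\mu\in\sigma(\delta)$, and the whole base-case-plus-Hecke-induction scaffolding becomes unnecessary.
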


\begin{proof}
We start from the generic expansion \eqref{rank_2_red}, as given to us by Theorem \ref{thm:rank2-sector}. Since $z^{\nu}$ is the leading monomial of the monic polynomial $f_{\nu}(z;t^{-p-m_1},t)$, and unique to that polynomial on the right hand side of \eqref{rank_2_red}, we can evaluate $\psi(\nu,\mu)$ by taking the coefficient of $z^{\nu}$ in ${\rm Coeff}[f_{\mu},p+m_1]$. We then use the matrix product formula \eqref{MPA} to perform the calculation:
\begin{align}
\label{coeff_3}
\psi(\nu,\mu)
=
\left[
\lim_{q \rightarrow t^{-p-m_1}} \left(1-q t^{p+m_1}\right)
{\rm Tr}
\Big(
A_{\mu_1}(z_1)
\dots
A_{\mu_n}(z_n)
k^{u}
\Big)
\right]_{z^{\nu}}
\end{align}
and noting the $z$-dependence of the operators $A_i(z)$ in \eqref{ZF-ops}, we immediately see that $\psi(\nu,\mu)$ is zero if for some 
$1 \leq k \leq n$ we have $\mu_k > \nu_k = 0$ or $\mu_k < \nu_k = 2$. This is the reason why the coefficients \eqref{coeff_2} contain the indicator function \eqref{ind-def2}; we restrict our attention henceforth to the situation when $\nu$ is chosen such that $I(\mu,\nu)$ is non-zero. Using \eqref{coeff_3}, we see that
\begin{align}
\label{constant-MPA}
\psi(\nu,\mu)
=
\lim_{q \rightarrow t^{-p-m_1}} \left(1-q t^{p+m_1}\right)
{\rm Tr}
\Big(
B_{\mu_1,\nu_1} \dots B_{\mu_n, \nu_n} k^u
\Big)
\cdot
I(\mu,\nu),
\end{align}
with $B_{0,0} = B_{2,2} = 1$, $B_{1,1} = k$, $B_{0,1} =\phi$ and $B_{2,1} = \phid$. Since the part-multiplicities of $\nu$ are already specified by Theorem \ref{thm:rank2-sector}, we can assume that $\#\{B_{0,1}\} = \#\{B_{2,1}\} = p$ and $\#\{B_{1,1}\} = m_1$. The product of bosonic operators appearing in \eqref{constant-MPA} can then be brought, via repeated use of the relations $\phi \phid = 1-tk$ and $\phid \phi = 1-k$, to a polynomial in $k$:
\begin{align}
\label{bosonic-poly}
B_{\mu_1,\nu_1} \dots B_{\mu_n, \nu_n}
=
\sum_{i=0}^{p}
c_{\mu,\nu}(i;t)
k^{i+m_1},
\end{align}
for suitable coefficients $c_{\mu,\nu}(i;t)$, which for the moment we do not specify. Substituting this into \eqref{constant-MPA} and evaluating the resulting traces, we find
\begin{align*}
\psi(\nu,\mu)
=
\lim_{q \rightarrow t^{-p-m_1}} \left(1-q t^{p+m_1}\right)
\left(
\sum_{i=0}^{p}
\frac{c_{\mu,\nu}(i;t)}{1-qt^{i+m_1}}
\right)
\cdot
I(\mu,\nu)
=
c_{\mu,\nu}(p;t)
\cdot
I(\mu,\nu).
\end{align*}
It is straightforward to calculate the top-degree term in \eqref{bosonic-poly}. In the case of a completely ordered string of bosonic operators, one has
\begin{align*}
\underbrace{\phid \dots \phid}_p
\underbrace{k \dots k}_{m_1}
\underbrace{\phi \dots \phi}_p
=
d_{\mu}(p;t)
k^{p+m_1}
+
\text{subleading terms in}\ k,
\end{align*}
where $d_{\mu}(p;t) = (-1)^p (t^{-p})^{m_1+(p-1)/2}$. As the string becomes disordered, one easily sees that the leading coefficient acquires a factor of $t$ for every pair $\phi \dots \phid$, $k \dots \phid$ or $\phi \dots k$ that gets created. These pairs are counted by
\begin{align*}
\alpha(\mu,\nu)
&=
\#\{i<j | (\mu_i=0,\mu_j=2), (\nu_i=\nu_j=1) \},
\\
\beta(\mu,\nu)
&=
\#\{i<j | (\mu_i=1,\mu_j=2), (\nu_i=\nu_j=1) \},
\\
\gamma(\mu,\nu)
&=
\#\{i<j | (\mu_i=0,\mu_j=1), (\nu_i=\nu_j=1) \},
\end{align*}
respectively. We conclude that
\begin{align*}
\psi(\nu,\mu)
=
c_{\mu,\nu}(p;t)
\cdot
I(\mu,\nu)
=
d_{\mu}(p;t)
\times
t^{\alpha(\mu,\nu)+\beta(\mu,\nu)+\gamma(\mu,\nu)}
\cdot
I(\mu,\nu),
\end{align*}
completing the proof of \eqref{coeff_2}, with the identification $d(t) \equiv d_{\mu}(p;t)$.
\end{proof}

\subsection{Rank-two duality functions} 

In the last subsection we studied the reduction of a generic rank-two ASEP polynomial $f_{\mu}$, such that 
$\mu^{-} = (0^{n-m_1-m_2},1^{m_1},2^{m_2})$, in the limit $q \rightarrow t^{-p-m_1}$ with $p$ a positive integer. In Theorem \ref{thm:rank2-sector} we proved that the corresponding expansion over polynomials $f_{\nu}$ is contained to the sector in which $\nu^{-} = (0^{n-m_1-m_2-p},1^{m_1+2p},2^{m_2-p})$, and in Theorem \ref{thm:generic} we calculated the expansion coefficients. By virtue of Theorem \ref{thm:central}, we have proved the following duality result:
\begin{cor}
\label{cor:rank2}
In the same notations as Theorem \ref{thm:generic}, the functions
\begin{align}
\label{eq:dual_3}
\psi(\nu,\mu) = t^{\Omega(\mu,\nu)} \cdot I(\mu,\nu)
\end{align} 
satisfy the local duality relations
\begin{align}
\label{eq:dual_4}
L_i [\psi(\cdot,\mu)](\nu) = M_i [\psi(\nu,\cdot)](\mu), \qquad\forall\ 1 \leq i \leq n-1,
\end{align}
where the left hand side is given by \eqref{eq:Lpsi}, and the right hand side by \eqref{eq:Mpsi}.
\end{cor}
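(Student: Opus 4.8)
The plan is to obtain Corollary \ref{cor:rank2} as an immediate consequence of the machinery already assembled, in direct parallel with the rank-one corollary of Section \ref{se:rank1reduc}. The key observation is that the expansion \eqref{rank_2_red} established in Theorem \ref{thm:rank2-sector} is precisely of the shape \eqref{f_sum} demanded as input to Theorem \ref{thm:central}: one takes the anti-partitions $\delta = (0^{n-m_1-m_2},1^{m_1},2^{m_2})$ and $\epsilon = (0^{n-m_1-m_2-p},1^{m_1+2p},2^{m_2-p})$, with the role of the integer ``$m$'' in Theorem \ref{thm:central} played by $p+m_1$, and the order of the pole (the subscript of ${\rm Coeff}$) equal to $1$. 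Crucially, Theorem \ref{thm:rank2-sector} is proved via Proposition \ref{prop:partial_E} together with the matrix product formula \eqref{MPA}, so this input is available unconditionally, without appealing to Conjecture \ref{theorema_egregium}.

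First I would invoke Theorem \ref{thm:central} with these data. It tells us directly that the coefficients $\psi(\nu,\mu;t)$ appearing in \eqref{rank_2_red} --- whatever their explicit form --- constitute a local mASEP duality function, i.e.\ they satisfy the functional relations \eqref{functional_dual}, which is exactly \eqref{eq:dual_4} with $L_i$ and $M_i$ given by \eqref{eq:Lpsi} and \eqref{eq:Mpsi}. (Concretely, one revisits the vector $\ket{\mathcal{I}} = \sum_{\mu \in \sigma(\delta)} f_{\mu}(z;q,t)\ket{\mu}$ of Proposition \ref{asepdual}, applies ${\rm Coeff}[\,\cdot\,,p+m_1]$, uses the $q$-independence of the local generators, and converts back to functional form; this is precisely the argument inside the proof of Theorem \ref{thm:central}.) The two participating processes are the mASEP with particle content $m_i(\delta)$ and the one with particle content $m_i(\epsilon)$.

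Next I would substitute the closed form of these coefficients supplied by Theorem \ref{thm:generic}, namely $\psi(\nu,\mu) = d(t)\cdot t^{\Omega(\mu,\nu)}\cdot I(\mu,\nu)$ with $I$ the indicator \eqref{ind-def2}, and then remove the overall factor $d(t)$. This last step is legitimate because $d(t)$ --- which by the proof of Theorem \ref{thm:generic} equals $d_{\mu}(p;t) = (-1)^p (t^{-p})^{m_1+(p-1)/2}$ and therefore depends only on $p$ and $m_1 = m_1(\mu)$, both constant across the sector $\sigma(\delta)$ --- is a factor common to every coefficient occurring in \eqref{eq:dual_4}. Since those relations are linear and homogeneous in $\psi$, cancelling $d(t)$ leaves the renormalized duality function \eqref{eq:dual_3}.

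I do not anticipate a genuine obstacle at this stage: all the analytic work --- existence of the limit $q \to t^{-p-m_1}$, identification of the target sector $\sigma(\epsilon)$, and evaluation of the bosonic traces --- has already been carried out in Theorems \ref{thm:rank2-sector} and \ref{thm:generic}. The one point meriting a word of care is the one just noted, that $d(t)$ really is independent of the particular $\mu \in \sigma(\delta)$ and $\nu \in \sigma(\epsilon)$, so that dropping it is harmless; this is immediate from its explicit form. It is also worth remarking, as in the rank-one case, that although a genuinely rank-two polynomial $f_{\mu}$ enters the derivation, the resulting relation \eqref{eq:dual_4} is a bona fide two-species statement, since both $\sigma(\delta)$ and $\sigma(\epsilon)$ contain parts of sizes $0$, $1$ and $2$.
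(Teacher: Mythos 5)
Your proposal is correct and follows essentially the same route as the paper, which obtains the corollary immediately from Theorem \ref{thm:central} applied to the expansion \eqref{rank_2_red} of Theorem \ref{thm:rank2-sector}, with the explicit coefficients from Theorem \ref{thm:generic} and the sector-constant factor $d(t)$ dropped by linearity. Your added justification that $d(t)=d_{\mu}(p;t)$ depends only on $p$ and $m_1(\mu)$, hence is constant across $\sigma(\delta)$, is exactly the point the paper makes in the analogous rank-one corollary.
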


Let us now translate the observable \eqref{eq:dual_3} into an occupation--position notation, similar to that employed in Theorem \ref{thm:bcs}. In the rank-two case at hand, the composition $\mu$ is labelled by two sets of positions: a set $\vec{x}(\mu) = (x_1 < \cdots < x_{m_1})$ which labels the positions of 1-particles, and a set $\vec{y}(\mu) = (y_1 < \cdots < y_{m_2})$ labelling the positions of 2-particles. The two sets $\vec{x}$ and $\vec{y}$ are disjoint, since two particles cannot occupy a single site of the lattice. We introduce a statistic $\chi(\vec{x},\vec{y})$, which counts the number of ``crossings'' between the two sets $\vec{x}$ and $\vec{y}$:
\begin{align}
\label{chi}
\chi(\vec{x},\vec{y}) := \#\{(x_i,y_j) \in (\vec{x},\vec{y})\ |\ x_i>y_j\}.
\end{align}
\begin{prop}
\label{prop:exponent}
Fix two compositions 
\begin{align*}
\mu \in \sigma(0^{n-m_1-m_2},1^{m_1},2^{m_2}),
\quad\quad
\nu \in \sigma(0^{n-m_1-m_2-p},1^{m_1+2p},2^{m_2-p}),
\end{align*}
chosen such that the inequalities $\mu_k > \nu_k = 0$ and $\mu_k < \nu_k = 2$ do not occur for any $1 \leq k \leq n$. Let $\Omega(\mu,\nu)$ be given by \eqref{coeff_2}. Expressing $\mu$ in terms of particle-position notation, one has
\begin{align}
\label{exponent}
\Omega(\mu,\nu)
+
\frac{m_1(m_1-1)}{2} + \frac{p(p-1)}{2}
+
\chi(\vec{x},\vec{y})
=
\sum_{x \in \vec{x}(\mu)}
\sum_{i<x}
\bm{1}_{\nu_i \geq 1}
+
\sum_{y \in \vec{y}(\mu)}
\sum_{i<y}
\bm{1}_{\nu_i = 1}
\bm{1}_{\nu_y = 1}.
\end{align}
\end{prop}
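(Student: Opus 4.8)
The plan is to prove \eqref{exponent} by a direct combinatorial expansion: I will write both sides as counts of ordered pairs of sites in $\{1,\dots,n\}$ and match them term by term. Everything hinges on first extracting, from the admissibility hypothesis, the complete pattern of the pairs $(\mu_k,\nu_k)$. The hypothesis that $\mu_k>\nu_k=0$ and $\mu_k<\nu_k=2$ never occur forces every site with $\nu_k=0$ to have $\mu_k=0$, every site with $\nu_k=2$ to have $\mu_k=2$, and every site with $\mu_k=1$ to have $\nu_k=1$. Combined with the prescribed part-multiplicities of $\mu\in\sigma(0^{n-m_1-m_2},1^{m_1},2^{m_2})$ and $\nu\in\sigma(0^{n-m_1-m_2-p},1^{m_1+2p},2^{m_2-p})$, this partitions the sites into four blocks: $Q:=\{k:\nu_k=2\}$, on which $\mu_k=2$ and $|Q|=m_2-p$, together with three subsets $P_0,P_1,P_2$ of $\{k:\nu_k=1\}$ on which $\mu_k$ equals $0,1,2$ respectively, with $|P_0|=|P_2|=p$ and $|P_1|=m_1$. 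Under this dictionary $\vec{x}(\mu)=P_1$ and $\vec{y}(\mu)=P_2\cup Q$, while $\nu_i\geq 1\iff i\in P_0\cup P_1\cup P_2\cup Q$ and $\nu_i=1\iff i\in P_0\cup P_1\cup P_2$.

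Next I would expand the right-hand side of \eqref{exponent}. Substituting the dictionary, the first sum becomes $\sum_{x\in P_1}\#\{i<x:i\in P_0\cup P_1\cup P_2\cup Q\}$, and the second becomes $\sum_{y\in P_2}\#\{i<y:i\in P_0\cup P_1\cup P_2\}$, the sites $y\in Q$ dropping out because $\nu_y=2$ there. Splitting each inner count by the block containing $i$ and re-reading the sums as ordered-pair counts, the right-hand side equals $\alpha(\mu,\nu)+\beta(\mu,\nu)+\gamma(\mu,\nu)+\binom{m_1}{2}+\binom{p}{2}+\beta^{*}+\delta^{*}$, where $\alpha,\beta,\gamma$ are precisely the three pair-statistics from the proof of Theorem \ref{thm:generic} (so that $\Omega(\mu,\nu)=\alpha(\mu,\nu)+\beta(\mu,\nu)+\gamma(\mu,\nu)$), the two binomials account for the ordered pairs internal to $P_1$ and to $P_2$, and $\beta^{*}:=\#\{(i,j):i<j,\ i\in P_2,\ j\in P_1\}$ and $\delta^{*}:=\#\{(i,j):i<j,\ i\in Q,\ j\in P_1\}$ are the ``reversed'' pair counts.

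Finally I would identify the crossing number and conclude. Since $\vec{x}(\mu)=P_1$ and $\vec{y}(\mu)=P_2\cup Q$, the statistic \eqref{chi} splits as $\chi(\vec{x},\vec{y})=\#\{(j,i):j<i,\ j\in P_2,\ i\in P_1\}+\#\{(j,i):j<i,\ j\in Q,\ i\in P_1\}$, which are exactly $\beta^{*}$ and $\delta^{*}$. Hence the left-hand side of \eqref{exponent} equals $\alpha+\beta+\gamma+\binom{m_1}{2}+\binom{p}{2}+\beta^{*}+\delta^{*}$, matching the expansion of the right-hand side, which proves the proposition.

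The only real obstacle is the bookkeeping in the first two steps: one must use the admissibility hypothesis carefully to establish the four-block decomposition, and then keep straight the asymmetry between ``$P_1$-before-$P_2$'' pairs, which are buried inside $\Omega$ through $\beta$, and ``$P_2$-before-$P_1$'' (and ``$Q$-before-$P_1$'') pairs, which appear both in the expansion of the right-hand side and, via $\chi$, on the left. Once these are disentangled the identity collapses to the essentially trivial statement $\chi=\beta^{*}+\delta^{*}$.
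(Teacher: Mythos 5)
Your proof is correct and follows essentially the same route as the paper: both arguments reduce the identity to counting ordered pairs of sites classified by the values $(\mu_i,\mu_j)$ and $(\nu_i,\nu_j)$, showing that each side enumerates the same seven pair types. Your four-block decomposition $P_0,P_1,P_2,Q$ is just a tidier packaging of the paper's case analysis, and the final reduction to $\chi=\beta^{*}+\delta^{*}$ matches the paper's treatment of the crossing term.
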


\begin{proof}
We start from the left hand side of \eqref{exponent}, and examine what it counts. In the following we always assume that $i<j$.
\begin{itemize}
\item The first term, $\Omega(\mu,\nu)$, counts all instances such that $(\mu_i,\mu_j) = (0,1), (0,2), (1,2)$ and $(\nu_i,\nu_j) = (1,1)$. 

\smallskip

\item The second term, $m_1(m_1-1)/2$, is equal to the number of times that $(\mu_i,\mu_j) = (1,1)$ and $(\nu_i,\nu_j) = (1,1)$ (since $\mu_i = 1$ forces $\nu_i =1$, by our assumption on the compositions). 

\smallskip

\item The third term, $p(p-1)/2$, is equal to the number of times that $(\mu_i,\mu_j) = (2,2)$ and $(\nu_i,\nu_j) = (1,1)$. To see this, note that there must be exactly $p$ pairs $(\mu_i,\nu_i) = (2,1)$, by knowledge of the sectors that the two compositions come from.

\smallskip

\item The fourth term, $\chi(\vec{x},\vec{y})$, counts the number of times that $(\mu_i,\mu_j) = (2,1)$ and $(\nu_i,\nu_j) = (1,1), (2,1)$.
\end{itemize}
Totaling these possibilities, we find that the left hand side counts $7$ different types of pairs $(\mu_i,\mu_j)$, $(\nu_i,\nu_j)$. We proceed to show that the same pairs are recovered on the right hand side of \eqref{exponent}:
\begin{itemize}

\item The first summation, $\sum_{x \in \vec{x}(\mu)} \sum_{i<x} \bm{1}_{\nu_i \geq 1}$, counts all instances such that $\nu_i \geq 1,\mu_j = 1$. This can be seen to be equal to
\begin{align*}
&
\#\Big\{ (\mu_i,\mu_j) = (0,1), (\nu_i,\nu_j) = (1,1) \Big\} 
+
\#\Big\{ (\mu_i,\mu_j) = (1,1), (\nu_i,\nu_j) = (1,1) \Big\}
+
\\
&
\#\Big\{ (\mu_i,\mu_j) = (2,1), (\nu_i,\nu_j) = (1,1) \Big\}
+
\#\Big\{ (\mu_i,\mu_j) = (2,1), (\nu_i,\nu_j) = (2,1) \Big\},
\end{align*}
by virtue of the restrictions imposed on $\mu$ and $\nu$. This accounts for $4$ of the terms on the left hand side of \eqref{exponent}.

\item The second summation, $\sum_{y \in \vec{y}(\mu)} \sum_{i<y} \bm{1}_{\nu_i = 1} \bm{1}_{\nu_y = 1}$, enumerates all the instances such that $\nu_i = \nu_j = 1,\mu_j = 2$. More explicitly, these instances are given by
\begin{align*}
&
\#\Big\{ (\mu_i,\mu_j) = (0,2), (\nu_i,\nu_j) = (1,1) \Big\} 
+
\#\Big\{ (\mu_i,\mu_j) = (1,2), (\nu_i,\nu_j) = (1,1) \Big\}
+
\\
&
\#\Big\{ (\mu_i,\mu_j) = (2,2), (\nu_i,\nu_j) = (1,1) \Big\}.
\end{align*}
This accounts for the remaining $3$ types of terms on the left hand side of \eqref{exponent}.
\end{itemize}
\end{proof}
Using the result of Proposition \ref{prop:exponent} we can now write the observable in Corollary \ref{cor:rank2} as
\begin{align}
\label{rank2-local-dual}
\psi\left(\nu,\mu\right)
=
\prod_{x \in \vec{x}(\mu)}
\prod_{i < x}
\left(
t^{\bm{1}_{\nu_i \geq 1}}
\right)
\cdot
\prod_{y \in \vec{y}(\mu)}
\prod_{i < y}
\left(
t^{\bm{1}_{\nu_i = 1} \bm{1}_{\nu_y=1}}
\right)
\cdot
t^{-\chi(\vec{x},\vec{y})}
\cdot
I(\mu,\nu),
\end{align}
where we have dropped an irrelevant overall factor of $t^{-(m_1(m_1-1)+p(p-1))/2}$, which comes from \eqref{exponent}, but plays no role in the duality relations \eqref{eq:dual_4}. One can view \eqref{rank2-local-dual} as a duality function on $\mathbb{Z} \times \mathbb{Z}$, by extending the compositions $\mu,\nu$ to infinite length in a stable way, as we discussed in Section \ref{ssec:back}.

The duality function \eqref{rank2-local-dual} is a generalization of the observable \eqref{eq:sbcs} to the rank-two setting. It is easily seen to degenerate to the latter when both $\mu$ and $\nu$ are chosen to have no $2$-particles. Another special case of interest is when $\mu$ is a generic rank-two composition, while $\nu$ is purely rank-one. In that case, \eqref{rank2-local-dual} simplifies, and we obtain the following result:
\begin{cor}
Let $\nu$ be an infinite rank-one composition, with $\nu_i \in \{0,1\}$ for all $i \in \mathbb{Z}$. Let $\mu$ be an infinite rank-two composition, with $\vec{x}(\mu) = (x_1 < \cdots < x_{m_1})$ and $\vec{y}(\mu) = (y_1 < \cdots < y_{m_2})$ labelling the positions of its $1$ and $2$-particles, respectively.  Then the function
\begin{align}
\label{rank2:sbcs}
\psi\left(\nu,\mu\right)
=
\prod_{x \in \vec{x}(\mu)}
\prod_{i < x}
\left(
t^{\nu_i}
\right)
\nu_x
\cdot
\prod_{y \in \vec{y}(\mu)}
\prod_{i < y}
\left(
t^{\nu_i}
\right)
\nu_y
\cdot
t^{-\chi(\vec{x},\vec{y})}
\end{align}
satisfies the relations \eqref{eq:dual_4} for all $i \in \mathbb{Z}$.
\end{cor}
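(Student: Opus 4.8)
The plan is to obtain this corollary as a direct specialization of Corollary~\ref{cor:rank2}, rather than re-deriving the duality relations from scratch. The key observation is that the target sector in Theorem~\ref{thm:rank2-sector}, namely $\epsilon = (0^{n-m_1-m_2-p},1^{m_1+2p},2^{m_2-p})$, becomes purely rank-one exactly when $p = m_2$. So the first step is to take $p = m_2$, which is admissible whenever $m_2 \geq 1$ and $n \geq m_1 + 2m_2$ (the case $m_2 = 0$ reduces to Theorem~\ref{thm:bcs}). With this choice $\epsilon = (0^{n-m_1-2m_2},1^{m_1+2m_2})$, and Corollary~\ref{cor:rank2} together with the explicit form \eqref{rank2-local-dual} already supplies a duality function $\psi(\nu,\mu)$ with $\mu \in \sigma(\delta)$ rank-two and $\nu \in \sigma(\epsilon)$ rank-one, satisfying \eqref{eq:dual_4}.

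The second step is to simplify \eqref{rank2-local-dual} under the hypothesis $\nu_i \in \{0,1\}$. In this case $\bm{1}_{\nu_i \geq 1} = \bm{1}_{\nu_i = 1} = \nu_i$, so the product $\prod_{i<x} t^{\bm{1}_{\nu_i \geq 1}}$ collapses to $\prod_{i<x} t^{\nu_i}$, and $\prod_{i<y} t^{\bm{1}_{\nu_i = 1}\bm{1}_{\nu_y = 1}}$ collapses to $\prod_{i<y} t^{\nu_i}$ whenever $\nu_y = 1$ (and the prefactor is irrelevant when $\nu_y=0$, since then $I(\mu,\nu)=0$ because $\mu_y = 2 > \nu_y$). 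Moreover, since $\nu$ has no $2$-particles, the alternative $\mu_k < \nu_k = 2$ in the indicator \eqref{ind-def2} never occurs, so $I(\mu,\nu)$ vanishes precisely when some $\mu_k \geq 1$ lies over a $\nu_k = 0$; that is, $I(\mu,\nu) = \prod_{x \in \vec{x}(\mu)} \nu_x \cdot \prod_{y \in \vec{y}(\mu)} \nu_y$. Absorbing these factors, \eqref{rank2-local-dual} becomes exactly \eqref{rank2:sbcs}, up to the overall constant $t^{-(m_1(m_1-1)+p(p-1))/2}$ already discarded.

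The final step is the passage from $[1,\dots,n]$ to $\mathbb{Z}$, handled exactly as in Section~\ref{ssec:back}: the right-hand side of \eqref{rank2:sbcs} depends on $n$ only through the common length of $\mu$ and $\nu$, so one pads both compositions with zeros on either side and embeds the observable in the space of functions on $\mathbb{Z} \times \mathbb{Z}$, and the relations \eqref{eq:dual_4} are stable under this operation. I do not expect a genuine obstacle here, since all the substantive content is carried by Theorems~\ref{thm:rank2-sector} and~\ref{thm:generic}; the only point requiring care is the bookkeeping in the second step, in particular verifying that the collapse of the indicator functions is correct and that rewriting $I(\mu,\nu)$ as the product of $\nu$-values at the $1$- and $2$-particle sites of $\mu$ accounts for precisely the $\nu_x$ and $\nu_y$ factors in \eqref{rank2:sbcs}, with no crossing contribution $t^{-\chi(\vec{x},\vec{y})}$ lost under the specialization of $\nu$.
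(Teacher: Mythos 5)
Your proposal is correct and follows the same route as the paper: the corollary is obtained by specializing \eqref{rank2-local-dual} (equivalently, Corollary \ref{cor:rank2} with $p=m_2$, which is exactly the choice making $\epsilon$ rank-one) and absorbing the indicator $I(\mu,\nu)$ into the factors $\nu_x$ and $\nu_y$, with the passage to $\mathbb{Z}$ handled by zero-padding as in Section \ref{ssec:back}. Your bookkeeping — the collapse of $\bm{1}_{\nu_i\geq 1}$ and $\bm{1}_{\nu_i=1}\bm{1}_{\nu_y=1}$ to $\nu_i$ on nonvanishing configurations, and the identification $I(\mu,\nu)=\prod_x \nu_x\prod_y\nu_y$ since the alternative $\mu_k<\nu_k=2$ cannot occur — is exactly the simplification the paper invokes.
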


\section{Duality functions without indicators}
\label{se:concl}

In Section \ref{se:rank1reduc} we presented a new derivation of the rank-one observable \eqref{eq:sbcs}, proving that it is a solution of the {\it local}\/ duality equations \eqref{local-rels}. A second observable was considered in \cite{bcs}. This observable differs from \eqref{eq:sbcs} in two main ways: first, it does not contain any indicator functions, meaning that the observable does not vanish for any values of $\mu$ and $\nu$; second, the resulting observable does not satisfy the local relations \eqref{local-rels}, but rather the {\it global}\/ relation obtained by summing over all $i \in \mathbb{Z}$. In Section \ref{ssec:7.1} we briefly review these facts. 

In Section \ref{ssec:7.2}, we will show that the rank-two observable obtained in equation \eqref{rank2:sbcs} also gives rise to a ``partner'' observable without indicator functions, which satisfies global duality relations. Once this result is written down, it is not hard to see that it in fact generalizes to arbitrary rank: hence in Section \ref{ssec:7.3} we find a non-vanishing observable valued on a rank-one ASEP and an arbitrary rank mASEP, which is a duality function with respect to the generators of the two processes.

\subsection{Rank-one duality functions without indicators}
\label{ssec:7.1}

\begin{prop}
[Borodin--Corwin--Sasamoto \cite{bcs}]
\label{prop:7.1}
Let $\nu$ be an infinite composition with parts $\nu_i \in \{0,1\}$ and fix an ordered $m$-tuple of integers $\vec{x}(\mu) = (x_1 < \cdots < x_m)$, which label the positions of ones in another composition $\mu$. Then the observable 
\begin{align}
\label{glob-rnk1}
H\left(\nu,\mu\right)
=
\prod_{x \in \vec{x}(\mu)}
\prod_{i \leq x}
t^{\nu_i}
\end{align}
satisfies the equation
\begin{align}
\label{global-bcs}
\sum_{i \in \mathbb{Z}} 
L_i\left[H\left(\cdot,\mu\right)\right](\nu) 
= 
\sum_{i \in \mathbb{Z}}
M_i\left[H(\nu,\cdot)\right]\left(\mu\right),
\end{align}
where $L_i$ and $M_i$ are given by \eqref{eq:Lpsi} and \eqref{eq:Mpsi}, respectively.
\end{prop}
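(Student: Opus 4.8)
The plan is to prove \eqref{global-bcs} by directly computing both sides from the explicit forms \eqref{eq:Lpsi} and \eqref{eq:Mpsi} of the summands, and reducing the resulting identity to a summation by parts. The first step is pure bookkeeping. Writing $N_x(\nu) := \#\{i \leq x : \nu_i = 1\}$, so that $H(\nu,\mu) = \prod_{x \in \vec{x}(\mu)} t^{N_x(\nu)}$ by \eqref{glob-rnk1}, one records how $H$ reacts to the elementary swaps $\nu \mapsto s_i\nu$ and $\mu \mapsto s_i\mu$. On the $\nu$-side, $N_x(\nu)$ changes only for $x = i$, so $H(s_i\nu,\mu)/H(\nu,\mu) = t^{\nu_{i+1}-\nu_i}$ when $i \in \vec{x}(\mu)$ and $= 1$ otherwise. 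On the $\mu$-side, a hop of a $\mu$-particle across the bond $(i,i+1)$ inserts or deletes exactly one factor $t^{\nu_{i+1}}$, so $H(\nu,s_i\mu)/H(\nu,\mu) = t^{\nu_{i+1}}$ if $(\mu_i,\mu_{i+1}) = (1,0)$, $= t^{-\nu_{i+1}}$ if $(\mu_i,\mu_{i+1}) = (0,1)$, and $= 1$ if $\mu_i = \mu_{i+1}$.

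Second, I would substitute these ratios into \eqref{eq:Lpsi} and \eqref{eq:Mpsi}, noting that the ``equal'' case of $\theta_i$ contributes nothing on either side (there $s_i\nu = \nu$, respectively $s_i\mu = \mu$). Summing over $i \in \mathbb{Z}$ — all sums are finite, since $\mu$ and $\nu$ are finitely supported — both sides collapse to $(1-t)\,H(\nu,\mu)$ times a combinatorial quantity: after collecting the $(1,0)$ and $(0,1)$ bond-cases, the left side equals $(1-t)\,H(\nu,\mu)\sum_i \bm{1}_{\mu_i=1}\bigl(\bm{1}_{\nu_i=1}-\bm{1}_{\nu_{i+1}=1}\bigr)$, while the right side equals $(1-t)\,H(\nu,\mu)\sum_i \bigl(\bm{1}_{\mu_i=1}-\bm{1}_{\mu_{i+1}=1}\bigr)\bigl(1-\bm{1}_{\nu_{i+1}=1}\bigr)$.

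Third, I would show the two sums agree by summation by parts: expanding the right-hand sum, the term $\sum_i(\bm{1}_{\mu_i=1}-\bm{1}_{\mu_{i+1}=1})$ vanishes by finite support, and re-indexing the remaining double-product sum (sending $i \mapsto i+1$ in one half) reproduces $\sum_i \bm{1}_{\mu_i=1}(\bm{1}_{\nu_i=1}-\bm{1}_{\nu_{i+1}=1})$. This proves the identity on the finite lattice, and the passage to $\mathbb{Z}$ is immediate because $H$, like the observable of Theorem \ref{thm:bcs}, does not depend on $n$ beyond the length of the participating compositions.

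There is no serious obstacle here — the proof is a finite calculation — but two points deserve care. The first is correctly enumerating the three bond-cases for each of $L_i$ and $M_i$, and in particular checking that the diagonal terms of the two generators combine with the off-diagonal ones to produce precisely the coefficient $(1-t)$ displayed above; this is where a sign error would creep in. The second is justifying the telescoping $\sum_{i\in\mathbb{Z}}(\bm{1}_{\mu_i=1}-\bm{1}_{\mu_{i+1}=1})=0$ from the finite support of $\mu$. A more structural alternative would be to write $H(\nu,\mu)$ as a telescoping sum of the local duality functions $\psi(\nu,\cdot)$ of Theorem \ref{thm:bcs} over particle configurations lying weakly to the right of $\mu$ — this makes transparent why the local relations only survive in summed (global) form — but that route produces a boundary term and is less efficient than the direct verification.
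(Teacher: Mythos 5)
Your proposal is correct and is exactly the ``direct computation'' that the paper's proof merely names and defers to \cite{bcs} without carrying out. I checked the details: the swap ratios, the collapse of the two sides to $(1-t)\,H(\nu,\mu)\sum_i \bm{1}_{\mu_i=1}\bigl(\bm{1}_{\nu_i=1}-\bm{1}_{\nu_{i+1}=1}\bigr)$ and $(1-t)\,H(\nu,\mu)\sum_i \bigl(\bm{1}_{\mu_i=1}-\bm{1}_{\mu_{i+1}=1}\bigr)\bigl(1-\bm{1}_{\nu_{i+1}=1}\bigr)$ respectively, and the concluding telescoping/re-indexing are all right.
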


\begin{proof}
Once guessed, this result can be proved by direct computation; see \cite{bcs}. We do not know of a more constructive proof, for example making use of solutions of the $t$KZ equations, although it would be very interesting to find one.
\end{proof}

\subsection{Rank-one/rank-two duality functions without indicators}
\label{ssec:7.2}

\begin{prop}
\label{prop:7.2}
Let $\nu$ be an infinite rank-one composition, with $\nu_i \in \{0,1\}$ for all $i \in \mathbb{Z}$. Let $\mu$ be an infinite rank-two composition, with $\vec{x}(\mu) = (x_1 < \cdots < x_{m_1})$ and $\vec{y}(\mu) = (y_1 < \cdots < y_{m_2})$ labelling the positions of its $1$ and $2$-particles, respectively. Recall also the definition of $\chi(\vec{x},\vec{y})$, as given by \eqref{chi}. Then the observable
\begin{align}
\label{glob-rnk2}
H\left(\nu,\mu\right)
=
\prod_{x \in \vec{x}(\mu)}
\prod_{i \leq x}
\left(
t^{\nu_i}
\right)
\cdot
\prod_{y \in \vec{y}(\mu)}
\prod_{i \leq y}
\left(
t^{\nu_i}
\right)
\cdot
t^{-\chi(\vec{x},\vec{y})}
\end{align}
satisfies the equation
\begin{align}
\label{aim-globaleqn}
\sum_{i \in \mathbb{Z}} 
L_i\left[H\left(\cdot,\mu\right)\right](\nu) 
= 
\sum_{i \in \mathbb{Z}}
M_i\left[H(\nu,\cdot)\right]\left(\mu\right),
\end{align}
where $L_i$ and $M_i$ are given by \eqref{eq:Lpsi} and \eqref{eq:Mpsi}, respectively.
\end{prop}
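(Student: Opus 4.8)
The plan is to obtain \eqref{aim-globaleqn} from the \emph{local} duality relations \eqref{eq:dual_4} of Corollary \ref{cor:rank2}, by realizing the indicator-free observable $H$ of \eqref{glob-rnk2} as a ``cumulative sum'' of the local duality function $\psi$ of \eqref{rank2:sbcs}, and then pushing the global generators through that relation. This should parallel, and specialize to, the rank-one statement of Proposition \ref{prop:7.1}.

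The first step is the elementary telescoping identity that underpins the whole argument: for a single $\mu$-particle at position $z$ sitting in a rank-one background $\nu$,
\[
\prod_{i\le z}t^{\nu_i}
=
1+(t-1)\sum_{z'\le z}\Big(\prod_{i<z'}t^{\nu_i}\Big)\nu_{z'},
\]
the right-hand side being a finite sum because $\nu$ is finitely supported. Iterating this identity once over each position in $\vec{x}(\mu)$ and in $\vec{y}(\mu)$, and carrying along the common factor $t^{-\chi(\vec{x},\vec{y})}$, one expands $H(\nu,\mu)$ into a finite $\mathbb{Z}[t]$-linear combination of terms of the shape $\psi(\nu,\mu')$, where $\mu'$ runs over configurations obtained by sliding some subset of the $\mu$-particles weakly to the left. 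When $\mu$ has no $2$-particles this recovers the relation underlying the rank-one observable \eqref{glob-rnk1} of Proposition \ref{prop:7.1}.

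The second step applies the global generators. Because $L=\sum_{i\in\mathbb{Z}}L_i$ acts only on the first slot, annihilates functions that are constant in $\nu$ (immediate from \eqref{eq:Lpsi}), and --- for fixed $\nu$ --- reduces to a finite sum of local operators, it commutes with the finite position-sums above. Hence $L[H(\cdot,\mu)](\nu)$ equals the same $\mathbb{Z}[t]$-linear combination of the quantities $L[\psi(\cdot,\mu')](\nu)$, each of which equals $M[\psi(\nu,\cdot)](\mu')$ by the local duality \eqref{eq:dual_4} summed over $i$. It then remains to run the summation backwards --- an Abel summation in the $\mu$-variables --- so that the hopping terms produced by $M=\sum_iM_i$ on the shifted configurations $\mu'$ telescope internally and collapse onto $M[H(\nu,\cdot)](\mu)$ as computed directly from \eqref{eq:Mpsi}. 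As in Section \ref{ssec:back}, none of this is affected by passing from the lattice $[1,\dots,n]$ to $\mathbb{Z}$.

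The main obstacle, and where essentially all the bookkeeping lives, is the rank-two combinatorics. Unlike the single-species situation of Proposition \ref{prop:7.1}, the reversed mASEP generator $M$ lets a $2$-particle and a $1$-particle exchange, the two species are mutually hard-core, and the crossing statistic $\chi(\vec{x},\vec{y})$ of \eqref{chi} contributes a factor $t^{-1}$ exactly when a $1$ lies to the right of a $2$. One must check that the ``diagonal'' terms arising when a slid $1$-particle would land on a slid $2$-particle are precisely those already forbidden by the hard-core constraint inside $M$, and that the $t^{-1}$ weights picked up from $\chi$ in the expansion of $H$ are exactly matched by the rate asymmetry of the $1$--$2$ exchange in $M$. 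These are finite, purely local verifications; should the Abel-summation reorganization prove unwieldy, an alternative is a direct computation in the style of \cite{bcs}, observing that for fixed $\nu,\mu$ both sides of \eqref{aim-globaleqn} are finite sums over $i\in\mathbb{Z}$, so that the identity splits into a short list of elementary identities in $t$, one for each local configuration around site $i$.
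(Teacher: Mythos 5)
There is a genuine gap, and it sits exactly at the step you defer to ``bookkeeping.'' Your plan is to expand $H$ via the telescoping identity into a combination of the local duality functions $\psi$ of \eqref{rank2:sbcs}, push $L$ through, apply the local duality \eqref{eq:dual_4} termwise, and then undo the summation on the $\mu$-side. Two problems. First, in the multi-particle case the expansion of the product $\prod_{x}\bigl[1+(t-1)\sum_{x'\leq x}\bigl(\prod_{i<x'}t^{\nu_i}\bigr)\nu_{x'}\bigr]$ contains cross terms in which two slid particles land on the same site (including same-species coincidences, giving factors like $\bigl(\prod_{i<x'}t^{2\nu_i}\bigr)\nu_{x'}$); these are not equal to $\psi(\nu,\mu')$ for any admissible configuration $\mu'$, so the local duality cannot be invoked on them, and they are not simply ``forbidden by the hard-core constraint inside $M$'' --- they appear in the expansion of $H$ before $M$ is ever applied. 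Second, the ``Abel summation in the $\mu$-variables'' that is supposed to collapse the shifted configurations back onto $M[H(\nu,\cdot)](\mu)$ is precisely where the content of the proposition lives, and it is only asserted. Note that the paper explicitly remarks, in the proof of Proposition \ref{prop:7.1}, that no constructive derivation of the indicator-free observable from the local ($t$KZ-derived) duality functions is known even at rank one; your route would supply exactly such a derivation, so it cannot be waved through. Your fallback --- a direct computation that ``splits into a short list of elementary identities, one for each local configuration around site $i$'' --- is also not viable as stated: $H$ satisfies only the global relation, not the local ones, so the identity does not decouple site by site; the cancellations run across entire particle clusters (interior telescoping against boundary terms at cluster edges).

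For comparison, the paper's proof avoids the local rank-two duality entirely. It factorizes $H(\nu,\mu)=H(\nu,\mu^{*})\,t^{-\chi(\vec{x},\vec{y})}$, where $\mu^{*}$ is the colour-blind (rank-one) projection of $\mu$, shows that the interior $1$--$2$ exchange terms of $\sum_i M_i$ annihilate $t^{-\chi}$ because the rate asymmetry of the exchange exactly cancels the change in $\chi$ (equations \eqref{eq-lhs-4}--\eqref{eq-lhs-3}), and thereby reduces both sides of \eqref{aim-globaleqn}, cluster by cluster, to boundary terms of the rank-one observable, at which point Proposition \ref{prop:7.1} is quoted. If you want to salvage your approach, you would need to (i) handle the degenerate coincidence terms in the expansion, and (ii) actually carry out the reverse summation; as written, neither is done.
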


\begin{proof}
It is convenient to define the set
\begin{align*}
\vec{z}(\mu) = (z_1 < \cdots < z_{m_1+m_2}) = \vec{x}(\mu) \cup \vec{y}(\mu),
\end{align*}
obtained by taking the union of the two sets of coordinates $\vec{x}$ and $\vec{y}$. We begin by considering the case of a single particle cluster in $\mu$. This refers to the situation in which $\vec{z}(\mu)=(z+1,\dots,z+l)$ for some $z \in \mathbb{Z}$, and where we abbreviate $m_1+m_2 \equiv l$. 

To begin, notice that the observable \eqref{glob-rnk2} can be expressed in the form 
\begin{align}
\label{global2-1}
H\left(\nu,\mu\right) 
=
H\left(\nu,\mu^{*}\right) t^{-\chi(\vec{x},\vec{y})},
\end{align}
where $\mu^{*}$ is the rank-one composition obtained by the following ``colour-blind'' projection of $\mu$:
\begin{align*}
\mu^{*}_i 
= 
\left\{
\begin{array}{ll}
0, & \quad \mu_i = 0,
\\
1, & \quad \mu_i \geq 1,
\end{array}
\right.
\quad
\quad
\forall\ i \in \mathbb{Z},
\end{align*}
and $H\left(\nu,\mu^{*}\right)$ denotes a rank-one observable of the form \eqref{glob-rnk1}:
\begin{align}
H\left(\nu,\mu^* \right)
=
\prod_{x \in \vec{x}(\mu^*)}
\prod_{i \leq x}
t^{\nu_i}.
\end{align}
Studying firstly the right hand side of the proposed identity \eqref{aim-globaleqn}, we see from the action \eqref{eq:Mpsi} of $M_i$ that we can localise the summation over $i$ as follows:
\begin{align}
\label{smaller-rhs}
\sum_{i \in \mathbb{Z}}
M_i [H(\nu,\cdot)]\left(\mu \right)
=
\sum_{i\in\{z,z+l\}\bigcup d_1(\vec{x},\vec{y})\bigcup d_2(\vec{x},\vec{y})}
M_i[H(\nu,\cdot)](\mu),
\end{align}
where we have defined the sets
\begin{align*}
d_1(\vec{x},\vec{y})=\{x_i\in\vec{x}|x_i+1\in\vec{y}\},
\quad\quad
d_2(\vec{x},\vec{y})=\{y_i\in\vec{y}|y_i+1\in\vec{x}\}.
\end{align*}
Indeed, it is unnecessary to retain any other terms in the summation \eqref{smaller-rhs}, since $M_i$ has a vanishing action on the observable for all other values of $i$. Let us simplify \eqref{smaller-rhs} further. Clearly, when sites $i$ and $i+1$ of $\mu$ are occupied by particles, regardless of their types, $M_i$ has no effect on the set $\vec{z}$, and hence acts directly on $t^{-\chi(\vec{x},\vec{y})}$. Therefore, using \eqref{global2-1}, we find that
\begin{align}
\label{eq-lhs-2}
\sum_{i\in d_1(\vec{x},\vec{y})\bigcup d_2(\vec{x},\vec{y})}
M_i[H(\nu,\cdot)](\mu)
=
H(\nu,\mu^{*})
\sum_{i\in d_1(\vec{x},\vec{y})\bigcup d_2(\vec{x},\vec{y})}
M_i[t^{-\chi(\cdot)}](\vec{x},\vec{y}).
\end{align}
One can now easily show that the right hand side of \eqref{eq-lhs-2} vanishes. To see this, note that when $i\in d_1(\vec{x},\vec{y})$ (namely, when $\mu_i=1$ and $\mu_{i+1}=2$), we have
\begin{align}
\label{eq-lhs-4}
M_i[t^{-\chi(\cdot)}](\vec{x},\vec{y})
=
t\cdot t^{-\chi(\vec{x},\vec{y})-1}
-t^{-\chi(\vec{x},\vec{y})}
=
0.
\end{align}
Similarly, when $i\in d_2(\vec{x},\vec{y})$ (namely, when $\mu_i=2$ and $\mu_{i+1}=1$), we have
\begin{align}
\label{eq-lhs-3}
M_i[t^{-\chi(\cdot)}](\vec{x},\vec{y})
=
t^{-\chi(\vec{x},\vec{y})+1}
-
t\cdot t^{-\chi(\vec{x},\vec{y})}
=
0.
\end{align}
Combining \eqref{smaller-rhs}--\eqref{eq-lhs-4}, we conclude that
\begin{align}
\label{RHS}
\sum_{i \in \mathbb{Z}}
M_i [H(\nu,\cdot)]\left(\mu\right)
=
\sum_{i\in\{z,z+l\}}M_i[H(\nu,\cdot)](\mu),
\end{align}
reducing the action of the generator to just the two sites $z$ and $z+l$. By assumption, 
$\mu_z = \mu_{z+l+1} = 0$, meaning that $\chi(\vec{x},\vec{y})$ is invariant under the action of both $M_z$ and $M_{z+l}$ (since the number of $1$ and $2$-particle crossings will be preserved). This allows us to rewrite \eqref{RHS} as 
\begin{align}
\label{RHS2}
\sum_{i \in \mathbb{Z}}
M_i [H(\nu,\cdot)]\left(\mu\right)
=
t^{-\chi(\vec{x},\vec{y})}
\sum_{i \in \{z,z+l\}}
M_i\left[H(\nu,\cdot)\right](\mu^{*}),
\end{align}
in which the final expression is a purely rank-one quantity.

\medskip
Turning to the left hand side of \eqref{aim-globaleqn}, we use \eqref{global2-1} to write
\begin{align}
\label{eq-rhs-1}
\sum_{i \in \mathbb{Z}} 
L_i [H\left(\cdot,\mu\right) ](\nu)
=
t^{-\chi(\vec{x},\vec{y})}
\sum_{i \in \mathbb{Z}}
L_i \left[H(\cdot,\mu^{*})\right] (\nu)
=
t^{-\chi(\vec{x},\vec{y})}
\sum_{i \in \mathbb{Z}}
M_i\left[H(\nu,\cdot)\right](\mu^{*}),
\end{align}
where the second equality is deduced from the rank-one duality relation of Proposition \ref{prop:7.1}. The final term in \eqref{eq-rhs-1} can be simplified further, since $M_i$ has a vanishing action on the rank-one observable for any $i \not= z,z+l$. Hence,
\begin{align}
\label{eq-rhs-2}
\sum_{i \in \mathbb{Z}} 
L_i [H\left(\cdot,\mu \right) ](\nu)
=
t^{-\chi(\vec{x},\vec{y})}
\sum_{i \in \{z,z+l\}}
M_i\left[H(\nu,\cdot)\right](\mu^{*}).
\end{align}
Comparing \eqref{RHS2} and \eqref{eq-rhs-2} yields the proof of \eqref{aim-globaleqn} in the case of one particle cluster. A generic configuration $\vec{z}(\mu) = \vec{x}(\mu) \cup \vec{y}(\mu)$ can be written as a union of clusters, which then provides a natural splitting of the generator 
$\sum_{i \in \mathbb{Z}} M_i$ into finite disjoint pieces of the form \eqref{smaller-rhs}. One can apply the preceding logic {\it mutatis mutandis}\/ to each such piece, leading to the proof of \eqref{aim-globaleqn} in full generality.

\end{proof}

\subsection{Generalization to arbitrary rank}
\label{ssec:7.3}

Having arrived at the observable \eqref{glob-rnk2}, valued on the configuration spaces of a rank-one and rank-two process, one can immediately see how to generalize the two-species process to arbitrary rank:

\begin{cor}
Let $\nu$ be an infinite rank-one composition, with $\nu_i \in \{0,1\}$ for all $i \in \mathbb{Z}$. Let $\mu$ be an infinite rank-$r$ composition, with 
\begin{align*}
\vec{x}^{(j)}(\mu)
= 
\Big(x^{(j)}_1 < \cdots < x^{(j)}_{m_j}\Big)
\end{align*}
labelling the positions of its $j$-particles, for all $1 \leq j \leq r$. Define an all-rank extension of the crossing statistic \eqref{chi} as follows:
\begin{align*}
\chi\left(\vec{x}^{(1)},\dots,\vec{x}^{(r)}\right)
:=
\#\left\{x \in \vec{x}^{(i)},\ y \in \vec{x}^{(j)}\ \Big|\ i<j,\ x > y \right\}.
\end{align*}
Then the observable
\begin{align}
\label{glob-rnkr}
H\left(\nu,\mu\right)
=
\prod_{j=1}^{r}
\left(
\prod_{x \in \vec{x}^{(j)}(\mu)}
\prod_{i \leq x}
\left(
t^{\nu_i}
\right)
\right)
\cdot
t^{-\chi\left(\vec{x}^{(1)},\dots,\vec{x}^{(r)}\right)}
\end{align}
satisfies the equation
\begin{align*}
\sum_{i \in \mathbb{Z}} 
L_i[H\left(\cdot,\mu\right)](\nu) 
= 
\sum_{i \in \mathbb{Z}}
M_i[H(\nu,\cdot)]\left(\mu\right),
\end{align*}
where $L_i$ and $M_i$ are given by \eqref{eq:Lpsi} and \eqref{eq:Mpsi}, respectively.
\end{cor}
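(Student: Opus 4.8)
The plan is to prove this by following the proof of Proposition~\ref{prop:7.2} \emph{mutatis mutandis}; the only genuinely new ingredient is the combinatorics of the all-rank crossing statistic $\chi$ under a transposition of two adjacent particles. First I would introduce the \emph{colour-blind projection} $\mu^{*}$, namely the rank-one composition with $\mu^{*}_i = 0$ when $\mu_i = 0$ and $\mu^{*}_i = 1$ when $\mu_i \geq 1$, and write $\vec{z}(\mu) = \bigcup_{j=1}^{r} \vec{x}^{(j)}(\mu)$ for the set of positions of all particles, so that $\vec{x}(\mu^{*}) = \vec{z}(\mu)$. Exactly as in \eqref{global2-1}, the observable \eqref{glob-rnkr} then factorises as $H(\nu,\mu) = H(\nu,\mu^{*})\, t^{-\chi(\vec{x}^{(1)},\dots,\vec{x}^{(r)})}$, with $H(\nu,\mu^{*})$ a rank-one observable of the form \eqref{glob-rnk1}. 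Since $t^{-\chi}$ depends only on $\mu$ it factors out of the action \eqref{eq:Lpsi} of each $L_i$, so $\sum_{i} L_i[H(\cdot,\mu)](\nu) = t^{-\chi} \sum_{i} L_i[H(\cdot,\mu^{*})](\nu) = t^{-\chi} \sum_{i} M_i[H(\nu,\cdot)](\mu^{*})$ by the rank-one identity of Proposition~\ref{prop:7.1}.

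Next I would analyse the right-hand side in the case of a single particle cluster, $\vec{z}(\mu) = (z+1,\dots,z+l)$ with $l = \sum_{j} m_j$ and $\mu_z = \mu_{z+l+1} = 0$, and reduce it to a rank-one quantity. The operator $M_i$ acts nontrivially only at the two boundary sites $i \in \{z, z+l\}$ and at interior sites $z < i < z+l$ with $\mu_i \neq \mu_{i+1}$. For such an interior site, the transposition $s_i\mu$ leaves $\mu^{*}$ unchanged but moves exactly two particles past one another; from the definition \eqref{chi} one checks that $\chi$ decreases by $1$ when $\mu_i > \mu_{i+1}$ and increases by $1$ when $\mu_i < \mu_{i+1}$, with no other pair of particles being reordered. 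Consequently $H(\nu,s_i\mu) = t^{\pm 1} H(\nu,\mu)$ with the sign precisely matching the $t^{\theta_i}$-prefactors in \eqref{eq:Mpsi}, so $M_i[H(\nu,\cdot)](\mu) = 0$ --- the arbitrary-rank analogue of \eqref{eq-lhs-4} and \eqref{eq-lhs-3}. At the boundary sites the displaced particle merely hops onto a vacant site, so $\chi$ is unchanged and $M_i[H(\nu,\cdot)](\mu) = t^{-\chi} M_i[H(\nu,\cdot)](\mu^{*})$. Summing, $\sum_{i} M_i[H(\nu,\cdot)](\mu) = t^{-\chi}\sum_{i \in \{z,z+l\}} M_i[H(\nu,\cdot)](\mu^{*})$, which coincides with the (likewise boundary-localised) right-hand side of the rank-one identity derived above. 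This settles the one-cluster case.

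Finally, a general $\vec{z}(\mu)$ is a disjoint union of clusters separated by vacancies; this splits both $\sum_{i \in \mathbb{Z}} L_i$ and $\sum_{i \in \mathbb{Z}} M_i$ into finitely many independent blocks of the type just treated, to each of which the one-cluster argument applies verbatim, yielding the claim. The only step that needs any care --- the ``main obstacle'', mild as it is --- is the book-keeping of $\chi$ under an adjacent transposition of two distinct-coloured particles: one must verify that exactly one crossing is created or destroyed, and that the interior cancellation that for $r = 2$ amounted to \eqref{eq-lhs-4}--\eqref{eq-lhs-3} now persists uniformly across all $\binom{r}{2}$ ordered pairs of colours. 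Both facts follow immediately from the definition \eqref{chi} together with the observation that a transposition of the positions $i$ and $i+1$ alters the relative order of no pair of particles other than the one occupying those two sites.
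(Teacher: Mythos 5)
Your proposal is correct and follows essentially the same route as the paper: it reduces to the rank-one identity of Proposition~\ref{prop:7.1} via the colour-blind projection, localises $\sum_i M_i$ to cluster boundaries after checking that the interior terms (your sites with $\mu_i \neq \mu_{i+1}$, the paper's sets $d_{<}$ and $d_{>}$) cancel because an adjacent transposition of two distinct-coloured particles changes $\chi$ by exactly $\mp 1$ in step with the $t^{\theta_i}$ prefactors of \eqref{eq:Mpsi}, and then splits a general configuration into independent clusters. In fact you spell out the crossing-statistic bookkeeping more explicitly than the paper, which simply cites the proof of Proposition~\ref{prop:7.2} \emph{mutatis mutandis}.
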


\begin{proof}
One defines the union of all particle positions, 
\begin{align*}
\vec{z}(\mu) = \vec{x}^{(1)}(\mu) \cup \cdots \cup \vec{x}^{(r)}(\mu),
\end{align*}
and proceeds along similar lines as in the proof of Proposition \ref{prop:7.2}, considering firstly the case in which $\vec{z}$ is a single cluster. None of the steps are substantively changed; the sole exception being that the sets $d_1$ and $d_2$ used in \eqref{smaller-rhs} should be replaced by the sets
\begin{align*}
d_{<}\left( \vec{x}^{(1)},\dots,\vec{x}^{(r)} \right)
&=
\left\{x \in \vec{x}^{(i)} \Big| x+1 \in \vec{x}^{(j)}, \ i<j \right\},
\\
d_{>}\left( \vec{x}^{(1)},\dots,\vec{x}^{(r)} \right)
&=
\left\{x \in \vec{x}^{(i)} \Big| x+1 \in \vec{x}^{(j)}, \ i>j \right\},
\end{align*}
respectively.
\end{proof}

\newcommand\arxiv[1]{
\href{http://arxiv.org/abs/#1}{\tt arXiv:#1}}


\begin{thebibliography}{99}

\bibitem{belschuetz} V.~Belitsky and G.~M.~Sch\"utz, \textit{Quantum algebra symmetry of the ASEP with second-class particles}; \arxiv{1504.06958}.

\bibitem{BorodinC} A.~Borodin and I.~Corwin, \textit{Dynamic ASEP, duality and continuous $q^{-1}$--Hermite polynomials}; \arxiv{1705.01980}.

\bibitem{bcs}  A.~Borodin, I.~Corwin and T.~Sasamoto, \textit{From duality to determinants for $q$--TASEP and ASEP}, Ann. Prob. \textbf{42} (2014), 2314--2382; \arxiv{1207.5035}.

\bibitem{brunetDerrida} E. Brunet and B. Derrida, \textit{Probability distribution of the free energy of a directed polymer in a random medium}, Phys. Rev. E \textbf{61} (2000), 6789; \arxiv{cond-mat/0005352}.

\bibitem{brunetD} E. Brunet and B. Derrida, \textit{Ground state energy of a non-integer number of particles with delta attractive interactions}, Physica A \textbf{279} (2000), 395--407; \arxiv{cond-mat/0005355}.

\bibitem{CalabreseLD} P. Calabrese and P. Le Doussal, \textit{An exact solution for the KPZ equation with flat initial conditions}, Phys. Rev. Lett. \textbf{106} (2011), 250603; \arxiv{1104.1993v1}.

\bibitem{CantiniGW} L.~Cantini, J.~de~Gier and M.~Wheeler, \textit{Matrix product formula for Macdonald polynomials}, J. Phys. A: Math. Theor. \textbf{48} (2015), 384001; \arxiv{1505.00287}.
    
\bibitem{carinciGGRa}  G. Carinci, C. Giardin\`a, C. Giberti and F. Redig, \textit{Duality for stochastic models of transport}, J. Stat. Phys. \textbf{152} (2013), 657--697; \arxiv{1212.3154v2}.
    
\bibitem{carinciGGRb} G. Carinci, C. Giardin\`a, C. Giberti and F. Redig, \textit{Dualities in population genetics: a fresh look with new dualities}, Stoch. Proc. and Appl. \textbf{125} (2015), 941--969; \arxiv{1302.3206v3}.

\bibitem{carinciGRSa} G.~Carinci, C.~Giardin\`a, F.~Redig and T.~Sasamoto, \textit{A generalized Asymmetric Exclusion Process with $U_q(sl_2)$ stochastic duality}, Probab. Theory Relat. Fields (2015); \arxiv{1407.3367}.
    
\bibitem{carinciGRSb} G.~Carinci, C.~Giardin\`a, F.~Redig and T.~Sasamoto, \textit{Asymmetric stochastic transport models with $U_q(su(1,1))$ stochastic duality}, J. Stat. Phys. \textbf{163} (2016); \arxiv{1507.01478}.

\bibitem{chered} I.~Cherednik, \textit{A unification of Knizhnik-Zamolodchikov and Dunkl operators via affine Hecke algebras}, Invent. Math. \textbf{106} (1991), pp. 411--431.

\bibitem{chereda} I.~Cherednik, \textit{Double affine Hecke algebras and Macdonald's conjectures}, Annals Math. \textbf{141} (1995), 191--216.

\bibitem{cheredb} I.~Cherednik, \textit{Nonsymmetric Macdonald polynomials}, Internat. Math. Res. Notices \textbf{10} (1995), 483--515; \arxiv{q-alg/9505029v2}.

\bibitem{CorwinP} I.~Corwin and L.~Petrov, \textit{Stochastic higher spin vertex models on the line}, Comm. Math. Phys. \textbf{343} (2016), 651--700; \arxiv{1502.07374}.

\bibitem{CorwinT} I.~Corwin and L.~C.~Tsai, \textit{KPZ equation limit of higher-spin exclusion processes}, Ann. Prob. \textbf{45} (2017), 1771-1798; \arxiv{1505.04158}.

\bibitem{CorwinST} I.~Corwin, H.~Shen and L.~C.~Tsai, \textit{ASEP$(q,j)$ converges to the KPZ equation}; \arxiv{1602.01908}.

\bibitem{CrampeRV} N. Crampe, E. Ragoucy and M. Vanicat, \textit{Integrable approach to simple exclusion processes with boundaries. Review and progress}, J. Stat. Mech. (2014), P11032; \arxiv{1408.5357}.

\bibitem{CalabreseDoussal}P. Le Doussal and P. Calabrese, \textit{The KPZ equation with flat initial condition and the directed polymer with one free end}, J. Stat. Mech. P06001 (2012); \arxiv{1204.2607v1}.

\bibitem{frenkel-resh} I. B.~Frenkel and N. Yu.~Reshetikhin, \textit{Quantum affine algebras and holonomic difference equations}, Comm. Math. Phys. \textbf{146} (1992), 1--60.

\bibitem{GiardinaKR} C. Giardin\`a, J. Kurchan and F. Redig, \textit{Duality and exact correlations for a model of heat
conduction}, J. Math. Phys. \textbf{48} (2007), 033301; \arxiv{cond-mat/0612198}.

\bibitem{GiardinaKRV} C. Giardin\`a, J. Kurchan, F. Redig and K. Vafayi, \textit{Duality and hidden symmetries in interacting particle systems}, J. Stat. Phys. \textbf{135} (2009), 25--55; \arxiv{0810.1202}.

\bibitem{GierW} J.~de~Gier and M.~Wheeler, \textit{A summation formula for Macdonald polynomials},  Lett. Math. Phys. \textbf{106} (2016), 381--394; \arxiv{1506.06597}.

\bibitem{harris} T.~E.~Harris, \textit{On a class of set-valued Markov processes}, Ann. Probab. \textbf{4} (1976), 175--194.

\bibitem{ImamuraS11} T. Imamura and T. Sasamoto, \textit{Current moments of 1D ASEP by duality}, J. Stat. Phys. \textbf{142} (2011), 919--930; \arxiv{1011.4588}.

\bibitem{JansenK} S.~Jansen and N.~Kurt, \textit{On the notion(s) of duality for Markov processes}, Prob. Surv. \textbf{11} (2014), 59-120; \arxiv{1210.7193}.

\bibitem{kardar} M.~Kardar, \textit{Replica-Bethe Ansatz studies of two-dimensional interfaces with quenched random impurities}, Nucl. Phys. B \textbf{290} (1987), 582--602.

\bibitem{kpz} M. Kardar, G. Parisi and Y.~C.~Zhang, \textit{Dynamic scaling of growing interfaces} Phys. Rev. Lett. \textbf{56} (1986), 889--892.

\bibitem{Kasatani}  M.~Kasatani, \textit{Subrepresentations in the polynomial representation of the double affine Hecke algebra of type GL$_n$ at $t^{k+1}q^{r−1}=1$},  Int. Math. Res. Notices (2005), no.28, 1717--1742; \arxiv{math/0501272}.

\bibitem{KasataniT} M. Kasatani and Y. Takeyama, \textit{The quantum Knizhnik--Zamolodchikov equation and non-symmetric Macdonald polynomials}, Funkcialaj ekvacioj. Ser. Internacia \textbf{50} (2007), 491--509; \arxiv{math/0608773}.

\bibitem{KZ} V.G. Knizhnik and A.B. Zamolodchikov, \textit{Current Algebra and Wess-Zumino Model in Two-
Dimensions}, Nucl. Phys. B \textbf{247} (1984), 83--103.

\bibitem{knop} F. Knop, \textit{Integrality of two variable Kostka functions}, J. Reine Angew. Math. \textbf{482} (1997), 177--189.

\bibitem{kuan} J.~Kuan, \textit{Stochastic duality of ASEP with two particle types via symmetry of quantum groups of rank two}, J. Phys. A \textbf{49} (2016), 115002; \arxiv{1504.07173}.

\bibitem{kuan2} J.~Kuan, \textit{An algebraic construction of duality functions for the stochastic $U_q(A^{(1)}_n)$ vertex model and its degenerations}; \arxiv{1701.04468}.

\bibitem{lascouxa} A. Lascoux, \textit{Yang-Baxter graphs, Jack and Macdonald polynomials}, Ann. Comb. \textbf{5} (2001), 397--424. 

\bibitem{lascouxb} A. Lascoux, \textit{Schubert and Macdonald polynomials, a parallel}, Electronically available at \url{http://igm.univ-mlv.fr/~al/ARTICLES/Dummies.pdf}.

\bibitem{Liggett} T. Liggett, {\it Interacting Particle Systems}, Springer, New York 1985.

\bibitem{Macda} I. Macdonald, \textit{A new class of symmetric functions}, Publ. I.R.M.A. Strasbourg, Actes \textbf{20$^{\rm e}$} (1988) S\'eminaire Lotharingien 131--71.

\bibitem{Macdb} I. Macdonald, \textit{Symmetric functions and Hall polynomials}, (2nd ed.), Oxford, Clarendon Press 1995.

\bibitem{marshall} D.~Marshall, \textit{Symmetric and nonsymmetric Macdonald polynomials}, Ann. Comb. \textbf{3} (1999), 385--415; \arxiv{math/9812080}.

\bibitem{opdam} E. Opdam, \textit{Harmonic analysis for certain representations of graded Hecke algebras}, Acta Math. \textbf{175} (1995), 75--121.

\bibitem{Pasquier} V.~Pasquier, \textit{Quantum incompressibility and Razumov Stroganov type conjectures}, Ann. Henri Poincar\'e Vol. 7, No. 3 (2006), 397--421; \arxiv{cond-mat/0506075}.

\bibitem{sahi} S. Sahi, \textit{Interpolation, integrality, and a generalisation of Macdonald's polynomials}, Internat. Math. Res. Notices \textbf{1996}, 457--471.

\bibitem{schuetz97} G.~M.~Sch\"utz, \textit{Duality relations for asymmetric exclusion processes}, Journal of Statistical Physics \textbf{86} (1997), 1265--1287.

\bibitem{schuetzSandow} G. Sch\"utz and S. Sandow, \textit{Non-Abelian symmetries of stochastic processes: Derivation of correlation functions for random-vertex models and disordered-interacting-particle systems}, Phys. Rev. E \textbf{49} (1994), 2726.

\bibitem{smirnov} F.~Smirnov, \textit{A general formula for soliton form factors in the quantum sine--Gordon model}, J. Phys. A: Math. and Gen. \textbf{19} (1986), L575.

\bibitem{spitzer} F.~Spitzer, \textit{Interaction of Markov processes}, Adv. Math. \textbf{5} (1970), 246--290.

\bibitem{Zinn-Justin} P.~Zinn-Justin, \textit{Six-vertex, loop and tiling models: integrability and combinatorics}, Habilitation thesis; \arxiv{0901.0665}.

\end{thebibliography}
\end{document}